\documentclass[aps,prx,twocolumn,footinbib,superscriptaddress]{revtex4-2}
\usepackage{graphicx}
\usepackage{indentfirst}
\usepackage{physics}
\usepackage{braket}
\usepackage{float}
\usepackage{amsmath}
\usepackage{mathtools}
\usepackage{epstopdf}
\usepackage{footnote}
\usepackage{CJK}
\usepackage{esint}
\usepackage{comment}
\usepackage{color}
\usepackage[T1]{fontenc}
\usepackage{subfigure}
\usepackage{amsfonts}
\usepackage{footmisc}
\usepackage{scrextend}
\usepackage{multirow}
\usepackage[hyperfootnotes=false]{hyperref}
\usepackage[acronym]{glossaries}
\usepackage[english]{babel}
\usepackage{url}
\usepackage{bm}
\usepackage{hyperref}
\usepackage{algpseudocode}
\usepackage{svg}
\usepackage{diagbox}
\definecolor{darkblue}{rgb}{0,0,0.5}
\hypersetup{
colorlinks=true,
linkcolor=black,
filecolor=blue,
citecolor=darkblue,  
urlcolor=black,
}

\urlstyle{same}

\newtheorem{theorem}{Theorem}

\newtheorem{corollary}[theorem]{Corollary}

\newtheorem{definition}[theorem]{Definition}

\newtheorem{lemma}[theorem]{Lemma}

\newtheorem{proposition}[theorem]{Proposition}

\newenvironment{proof}[1][Proof]{\noindent\textbf{#1.} }{\ \rule{0.5em}{0.5em}}

\newcommand{\calC}{{\cal C}}

\newcommand{\calN}{{\cal N}} 

\newcommand{\calG}{{\cal G}}

\newcommand{\calU}{{\cal U}}

\newcommand{\1}{^{(1)}}

\usepackage{mathtools} 

\newcommand{\bS}{\boldsymbol S}
\newcommand{\bV}{\boldsymbol V}
\newcommand{\bI}{\boldsymbol I}
\newcommand{\bK}{\boldsymbol K}

\def\be{\begin{equation}}
\def\ee{\end{equation}}
\def\ba{\begin{eqnarray}}
\def\ea{\end{eqnarray}}


\begin{document}

\title{Energy-dependent barren plateau in bosonic variational quantum circuits}

\author{Bingzhi Zhang}
\affiliation{
Ming Hsieh Department of Electrical and Computer Engineering, University of Southern California, Los
Angeles, California 90089, USA
}

\author{Quntao Zhuang}
\email{qzhuang@usc.edu}
\affiliation{
Ming Hsieh Department of Electrical and Computer Engineering, University of Southern California, Los
Angeles, California 90089, USA
}
\affiliation{ Department of Physics and Astronomy, University of Southern California, Los
Angeles, California 90089, USA
}

\begin{abstract}
Bosonic continuous-variable Variational quantum circuits (VQCs) are crucial for information processing in cavity quantum electrodynamics and optical systems, widely applicable in quantum communication, sensing and error correction. The trainability of such VQCs is less understood, hindered by the lack of theoretical tools such as $t$-design due to the infinite dimension of the physical systems involved. We overcome this difficulty to reveal an energy-dependent barren plateau in such VQCs. The variance of the gradient decays as $1/E^{M\nu}$, exponential in the number of modes $M$ but polynomial in the (per-mode) circuit energy $E$. The exponent $\nu=1$ for shallow circuits and $\nu=2$ for deep circuits. We prove these results for state preparation of general Gaussian states and number states. We also provide numerical evidence that the results extend to general state preparation tasks. As circuit energy is a controllable parameter, we provide a strategy to mitigate the barren plateau in continuous-variable VQCs.

\end{abstract}
\maketitle

\section{Introduction}

Variational quantum circuits (VQCs)~\cite{cerezo2021variational} are candidates for achieving practical quantum advantages in the noisy intermediate-scale quantum (NISQ) era~\cite{Preskill2018quantumcomputingin}, when scalable error-corrected quantum computers are not yet available. VQCs utilize classical control to optimize a quantum circuit to solve computation problems, including optimization~\cite{farhi2014quantum}, eigen-system problem~\cite{peruzzo2014variational,kandala2017hardware,mcclean2016theory,o2016scalable,colless2018computation,bravo2020scaling, wiersema2020exploring}, partial-differential equations~\cite{lubasch2020variational}, quantum simulation~\cite{li2017efficient,dumitrescu2018cloud,mcardle2019variational} and machine learning~\cite{schuld2015introduction,biamonte2017quantum,dunjko2018machine,rebentrost2018quantum,killoran2019continuous,havlivcek2019supervised,schuld2019quantum,du2020expressive,yang2021provable}. As a general approach of designing quantum circuits, it has also found applications in the approximation~\cite{benedetti2019adversarial}, preparation~\cite{wecker2015progress,chen2020demonstration}, classification~\cite{patterson2021quantum,chen2021universal,cong2019quantum,maccormack2020branching,zhang2022fast} and tomography~\cite{liu2020variational} of quantum states.


Initial works on VQCs concern discrete-variable (DV) finite-dimensional systems of qubits, which are natural for computation; while continous-variable (CV) systems of bosonic qumodes are less explored. Yet, many important quantum systems are modelled by qumodes. For example, quantum communication and networking~\cite{gisin2007quantum,kimble2008quantum,biamonte2019complex,wehner2018quantum,kozlowski2019towards} rely on photons---the only flying quantum information carrier. In this regard, quantum transduction and entanglement distillation are shown to be enhanced by CV VQCs~\cite{zhang2022hybrid}; Photonic quantum computers~\cite{baragiola2019GKPuniversal,larsen2021} are also relying on bosonic encoding such as the cat code and Gottesman-Kitaev-Preskill (GKP) code~\cite{gottesman2001encoding}, which has shown great promise~\cite{ofek2016extending,sivak2022breakeven}. The engineering of such code states are greatly boosted by CV VQCs~\cite{heeres2015,krastanov2015universal,campagne2020quantum,eickbusch2022fast}; Finally, distributed entangled sensor networks ubiquitously rely on CV VQCs to achieve quantum advantages in sensing~\cite{zhuang2018distributed,zhang2021distributed,brady2022entangled,xia2023entanglement} and data classification~\cite{zhuang2019physical,xia2021quantum}.

Different from traditional algorithms, the runtime of VQCs is characterized by the time necessary to train the variational parameters to optimize a cost function. Therefore, the landscape of the cost function determines the VQC's trainability---large gradients across the landscape will guarantee a smooth training process. For DV systems, thanks to the well-established toolbox of random unitaries and $t$-design~\cite{gross2007evenly,ambainis2007quantum,roberts2017chaos,brandao2016local}, rigorous results unveil the barren plateau phenomena~\cite{mcclean2018barren, cerezo2021cost,wang2021noise,carlos2021entanglement}, which states that cost function gradient typically vanishes exponentially with the number of qubits when the circuit depth is not too shallow.
However, the trainability of CV VQCs is unexplored, perhaps due to the fundamental problem of the non-existence of $t$-design with $t\ge 2$ in infinite dimension~\cite{blume2014curious,zhuang2019scrambling,iosue2022continuous}. 

In this work, we overcome the problem by developing the notion of energy-constrained random quantum circuits and prove a unique energy-dependent barren plateau phenomena for such circuits. 
For an $M$-mode state preparation task, the variance of the gradient asymptotically decays as $\sim 1/E^{M\nu}$, exponential in the number of modes $M$ while power-law in the circuit energy $E$. The barren-plateau exponent $\nu=1$ for shallow circuits and $\nu=2$ for deep circuits. We prove the results for state preparation tasks of general Gaussian states and number states. As the VQCs are randomly initialized, we expect the energy-dependent barren plateau to be universal for all state preparation tasks and provide supporting numerical evidence. Moreover, as the energy of the circuit is a controllable parameter upon random initialization, we provide a strategy to mitigate barren plateau by heuristically choosing the initial circuit energy in each training problem.

\begin{figure}[t]
    \centering
    \includegraphics[width=0.45\textwidth]{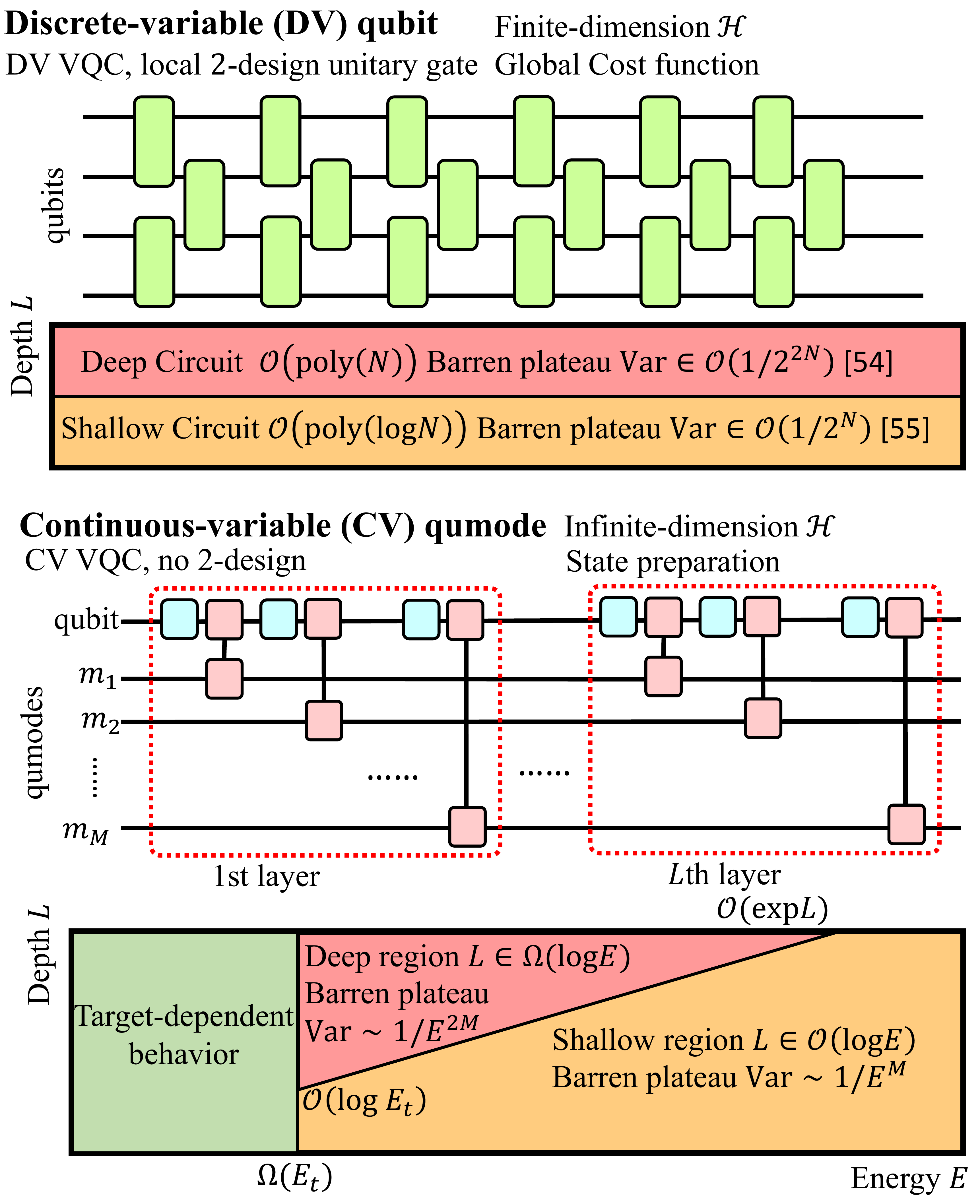}
    \caption{Summary of VQC trainability in DV and CV systems. The Hilbert space is finite-dimension in DV systems while for CV ones it is infinite-dimension. The universal DV VQC is built from local $2$-design unitary gates (lime green), and universal CV VQCs consist of single qubit rotations (cyan) and echoed conditional displacement (ECD) gates (pink)~\cite{eickbusch2022fast}. In DV VQCs, the variance of the gradient decays exponentially with the number of qubits $N$ in shallow~\cite{cerezo2021cost} and deep~\cite{mcclean2018barren} DV VQCs optimizing a global cost function. In this paper, we show the variance decays exponentially in number of modes $M$ but polynomially with circuit energy $E$ in shallow and deep region of CV VQCs.
    }
    \label{fig:scheme}
\end{figure}

\section{Results}

In general, the goal of a VQC $U(\bm x)$ is to minimize a cost function
\begin{align}
    \calC(\bm x) = \Tr\left[OU(\bm x)\rho_0 U^\dagger(\bm x)\right]
    \label{Cost_func}
\end{align}
over the tunable parameters $\bm x$,
where $\rho_0$ is the initial state and $O$ is a Hermitian observable. 
The performance of VQCs relies on a balance between expressivity and trainability. Expressivity concerns about the size of the solution space a VQC ansatz can cover and can be quantified by the cardinality of the unitary set needed to well approximate the VQC ansatz~\cite{du2022efficient}; while trainability concerns about how fast a VQC can converge to an optimal configuration within the solution space. 

The study of DV VQCs has revealed a trade-off between expressivity and trainability~\cite{larocca2022diagnosing,zhang2022quantum}. To ensure expressivity, DV VQCs consist of single-qubit rotations and two-qubit or multi-qubit gates (see Fig.~\ref{fig:scheme}), for example in the hardware-efficient ansatz~\cite{kandala2017hardware}. When the number of layers is linear in the number of qubits $N$, a VQC can approximate complex unitary ensembles classified as $t$-design~\cite{gross2007evenly,ambainis2007quantum,roberts2017chaos,brandao2016local}. However, increasing the expressibility makes the training of VQCs more challenging. Upon random initialization, the gradient of cost function $\calC$ with respect to any parameter is exponentially small in the number of qubits. More precisely, Refs.~\cite{mcclean2018barren, cerezo2021cost} show that while the mean is zero, the variance of the gradient decays as $\mathcal{O}(1/2^N)$ for shallow circuits and as $\mathcal{O}(1/2^{2N})$ for deep circuits (summarized in Fig.~\ref{fig:scheme} top panel). 


For the CV case, to ensure expressivity over $M$ oscillators, we consider the approach of echoed conditional displacement (ECD) gates via weak dispersive coupling to a qubit~\cite{eickbusch2022fast} (see Fig.~\ref{fig:scheme}). 
Each layer of the VQC consists of $M$ ECD gates in between single-qubit rotations: the qubit controls the displacement of each qumode via an ECD gate $U_{\rm ECD}(\beta) = D(\beta)\otimes \ket{1}\bra{0} + D(-\beta)\otimes\ket{0}\bra{1}$ after a single qubit rotation $U_{\rm R}(\theta, \phi) = \exp[-i\theta/2(\cos\phi\sigma^x + \sin\phi\sigma^y)]$, where $\sigma^x$ and $\sigma^y$ are Pauli-X and Y operator and $D(\beta) \equiv \exp(\beta m^\dagger-\beta^* m)$ is the displacement operator on mode $m$. Here, despite the lack of Haar random unitaries and $t$-design ($t\ge2$)~\cite{blume2014curious,zhuang2019scrambling,iosue2022continuous} due to the infinite dimension, we establish the energy-regularized circuit ensemble to represent `typical' qubit-qumode circuits to enable the analyses.

Our main result shows that when training such a CV VQC for state preparation, the cost function exhibits an energy-dependent barren plateau. Specifically, when preparing an $M$-mode state with energy per mode $E_t$, the variance of the gradient decays polynomially with circuit energy when the circuit energy per mode $E\in\Omega(E_t)$. When the circuit has a shallow log-depth ($L \in \mathcal{O}(\log E)$), the variance of the gradient decays as $1/E^{M}$; in the deep circuit region ($L \in \Omega (\log E)$), it decays as $1/E^{2M}$. Alternatively, for a fixed circuit depth $L\gtrsim \log(E_t)$, as the circuit energy increases, the variance of the gradient first displays target-dependent behaviors before a quick $1/E^{2M}$ decay, followed by a transition to the $1/E^M$ scaling at the critical energy $E\sim \exp(L)$. We prove these results for the state preparation of a fairly general class of states known as Gaussian states~\cite{weedbrook2012gaussian}, including non-classical multipartite entangled states useful in quantum computing~\cite{gu2009quantum} and distributed quantum sensing~\cite{zhuang2018distributed,zhang2021distributed,zhuang2019physical,xia2021quantum}. To extend the results beyond Gaussian states, we also prove them for Fock number states and verify them numerically for general random states. 

Furthermore, as circuit energy $E$ is a continuously tunable real parameter, the energy-dependent barren plateau provides an opportunity of mitigating the challenges in training.

\subsection{Energy-regularized circuit ensemble}

To ensure the expressivity of the CV VQCs, we adopt a universal gate set over a qubit and $M$ qumodes. While it is known that ECD gates with single-qubit unitaries can achieve universal control on one qubit and one mode~\cite{eickbusch2022fast}, we extend the proof to the system of multiple qubits and qumodes (see Appendix~\ref{app:universal_control}). We state the following lemma:
\begin{lemma}
\label{lemma_universality}
Universal control over $M\ge1$ qumodes and $N\ge 1$ qubits can be realized by all ECD gates between any qumode and any qubit and all single qubit rotations. 
\end{lemma}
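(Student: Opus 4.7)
The plan is to reduce universal control on the joint Hilbert space $\bigotimes_{i=1}^{N}\mathbb{C}^{2}_{i}\otimes\bigotimes_{m=1}^{M}L^{2}(\mathbb{R})_{m}$ to three primitives that jointly form a universal gate set: (i) universal control on every single qubit--single qumode pair $(i,m)$; (ii) an entangling two-qubit gate on every pair of qubits $(i,j)$; and (iii) an entangling two-mode Gaussian gate on every pair of qumodes $(m,m')$. Primitive (i) is immediate: restricting the available ECD gates and single-qubit rotations to a fixed qubit $i$ and a fixed mode $m$ reproduces exactly the gate set of Ref.~\cite{eickbusch2022fast}, so one-qubit-one-mode universality applies verbatim. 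In particular, by using any one of the qubits as a disposable ancilla, one inherits arbitrary single-mode Gaussian and non-Gaussian unitaries on every mode, together with all single-qubit unitaries on every qubit.

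For (ii), qubit--qubit entanglement can be mediated by a common qumode. Expanding the ECD gate to leading order in the displacement amplitude and dressing with single-qubit rotations yields effective Hamiltonian generators of the form $q_{m}\,\sigma^{x}_{i}$ and $p_{m}\,\sigma^{x}_{j}$ on two different qubits coupled through the same mediator mode $m$. Their Lie bracket
\[
[q_{m}\sigma^{x}_{i},\,p_{m}\sigma^{x}_{j}]\;=\;[q_m,p_m]\,\sigma^{x}_{i}\otimes\sigma^{x}_{j}\;=\;i\,\sigma^{x}_{i}\otimes\sigma^{x}_{j}
\]
is a pure two-qubit coupling, independent of the mediator mode, so a Trotter product of short ECD segments synthesizes an arbitrary $\sigma^{x}_{i}\sigma^{x}_{j}$ rotation, which, combined with single-qubit rotations, yields a universal entangler on $(i,j)$. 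For (iii) I would use the dual construction with a shared ancilla qubit: the commutator $[q_{m}\sigma^{x}_{i},\,p_{m'}\sigma^{y}_{i}]=2i\,q_{m}p_{m'}\,\sigma^{z}_{i}$ produces a two-mode quadratic generator dressed with a Pauli on qubit $i$. Echoed sandwich sequences on the mediator qubit (mirroring the echo built into the ECD itself) together with closure of the Lie algebra under bracketing with the single-mode generators from (i) and with pure qubit operators strip off the residual Pauli factor and give every beam-splitter and two-mode-squeezing generator on $(m,m')$.

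The principal obstacle is analytic rather than algebraic. Because $q_{m},p_{m}$ are unbounded, all BCH and Trotter expansions above converge only in the strong operator topology on a dense domain of energy-bounded states, and the commutator construction for (iii) must track the energy leaked out of that domain by each echoed subsequence. The approach I would take is to lift the strong-operator-convergence analysis of Ref.~\cite{eickbusch2022fast} from one mode to $M$ modes by giving uniform energy bounds on each Trotter step and iterating them across the nested commutators, so that the finite-depth ECD-plus-rotation approximants converge to the targeted joint unitary on the relevant dense domain. Once primitives (i)--(iii) are established in this sense, the standard Lie-theoretic universality argument on the combined symplectic-plus-Pauli algebra promotes them to approximate control of any unitary on the full $M$-qumode, $N$-qubit Hilbert space, completing the proof.
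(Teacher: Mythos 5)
Your proposal is correct in its key computations and rests on the same underlying mechanism as the paper's proof---closure of the Lie algebra generated by $\{q_m\sigma_i^z, p_m\sigma_i^z, \sigma_i^x, \sigma_i^y\}$ under nested commutators---but it organizes the argument differently. The paper proceeds by direct construction: starting from the single-pair result it builds, mode by mode and qubit by qubit, every monomial generator $\bigl(\prod_\ell q_\ell^{j_\ell}p_\ell^{k_\ell}\bigr)\bigl(\prod_r\sigma_r^{c_r}\bigr)$, using commutators of operators sharing a qubit to accumulate modes and commutators of operators sharing a mode to accumulate Pauli factors. You instead reduce to three pairwise primitives and then invoke a composition argument. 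Your specific identities are sound and in fact cleaner than the paper's intermediate steps: $[q_m\sigma_i^x, p_m\sigma_j^x]=[q_m,p_m]\,\sigma_i^x\sigma_j^x$ extracts a pure two-qubit coupling in one bracket (the paper reaches $\sigma_r^z\sigma_{r'}^z$ as the $j_\ell=k_\ell=0$ case of $[q_\ell^{j_\ell}p_\ell^{k_\ell+1}\sigma_r^z, q_\ell\sigma_{r'}^z]$), and $[q_m\sigma_i^x, p_{m'}\sigma_i^y]\propto q_m p_{m'}\sigma_i^z$ plus a Pauli-stripping bracket yields the two-mode Gaussian generators. What your route buys is modularity and contact with standard universality results (Lloyd--Braunstein for the modes, single-plus-entangling for the qubits); what the paper's route buys is that the final answer---the complete list of generators---is exhibited explicitly rather than inferred.

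The one place your argument is thinner than it needs to be is the last step: the claim that primitives (i)--(iii) jointly generate the full hybrid algebra is asserted as ``the standard Lie-theoretic universality argument,'' but there is no off-the-shelf theorem that pairwise qubit--mode universality plus qubit-register universality plus mode-register universality implies universality on the joint system. It is true, but verifying it requires exactly the same cross-bracket bookkeeping the paper performs (e.g., bracketing $q_m^{j}p_m^{k}\sigma_i^x$ with $q_{m'}^{j'}p_{m'}^{k'}\sigma_i^y$ to obtain multi-mode monomials dressed with Pauli strings, then stripping or extending the Pauli factors), so this step should be spelled out rather than cited. Separately, your analytic caveats about unbounded operators and Trotter convergence are a plan rather than a proof; note, however, that the paper defines universality purely algebraically (nested commutators of the generators spanning the Lie algebra), so resolving those domain issues is beyond what the lemma as stated requires.
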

Alternatively, universal control can also be achieved by a variant of the ECD gate---the Conditional Not Displacement gate~\cite{diringer2023conditional}. The results of our paper still hold in that case.

Although the ECD gates can have arbitrarily large complex amplitudes, as realistic systems are always subject to energy constraints, we introduce the following definition to model a `typical' circuit layout for the CV VQCs:
\begin{definition}
\label{definition:Dlayer_circuit}
The energy-regularized $L$-layer qubit-qumode circuits $\calU_{E,L,M}$ is the ensemble of unitaries 
\be 
U=\prod_{\ell=1}^L\prod_{j=1}^M U_{\rm ECD}^{(j)}\left(\beta_\ell^{(j)}\right)U_{\rm R}\left(\theta_{M(\ell-1)+j},\phi_{M(\ell-1)+j}\right), 
\label{eq:U_M}
\ee 
where each ECD gate's complex displacement $\beta_\ell^{(j)} \sim \calN_{E/L}^{\rm C}$ is Gaussian distributed with zero mean and variance $E/L$, and all qubit rotation angles $\{\theta_{k}, \phi_{k} \}_{k=1}^{ML} \sim {\bf U}[0,2\pi)$ are uniformly distributed.
\label{unitary_ensemble}
\end{definition}
For simplicity, we denote a zero-mean complex Gaussian distribution with variance $\sigma^2/2$ on both real and imaginary parts as $\calN_{\sigma^2}^{\rm C}$. 
It is worth noting that the choice of a Gaussian distribution for $\beta_\ell^{(j)}$ is convenient but not essential due to the central limit theorem. Although the ensemble of circuits comes from physical regularizations, we will see later that it also enables analytical solutions to various properties of the VQC and provides an excellent playground of CV quantum information processing. 

The expressivity of the VQCs in $\calU_{E,L,M}$ increases with the depth $L$. Due to the universality in Lemma~\ref{lemma_universality}, the VQCs in $\calU_{E,L,M}$ with $L\gg1$ contain all unitaries relevant to the energy scale $E$, and a larger circuit energy $E$ will enable expressibility over larger Hilbert space volumes. At the same time, the trainability of VQCs in $\calU_{E,L,M}$ is unclear, which we aim to resolve in this work. As we focus on state preparation tasks, below we consider the states generated by the random VQCs applied on trivial initial states. This is analogous to the relationship between unitary design and state design, both of which unfortunately do not exist in the CV case.


We define the energy-regularized state ensemble $\Psi_{E,L,M}$ as $\ket{\psi(\bm x)}_{q,\bm m} = U(\bm x)\ket{0}_q\otimes_{j=1}^M\ket{0}_{m_j}$, where we randomly apply $U(\bm x)\in \calU_{E,L,M}$ on spin-up qubit and vacuum qumodes. Here we denote the overall parameters as $\bm x$, including all $\beta_\ell^{(j)}$, $\theta_k$ and $\phi_k$. In our notation, $\ket{a}_q$ denotes the $a=0,1$ state of the qubit `$q$' and $\ket{\alpha}_{m_j}$ denotes the coherent state with complex amplitude $\alpha$ of the mode $m_j$ (see Appendix~\ref{app:Gaussian_intro}) and $\bm m=(m_1,\cdots, m_M)$ denote all modes.

With random qubit rotations $U_{\rm R}$'s in Eq.~\eqref{eq:U_M}, the consequent ECD gate $U_{\rm ECD}^{(j)}(\beta_\ell^{(j)})$ applied on each mode will lead to a random superposition of performing complex displacement $+\beta_\ell^{(j)}$ and $-\beta_\ell^{(j)}$ for all modes $j=1,\cdots,M$. Therefore, each layer of ECD gates with random qubit rotations corresponds to a superposition of all possible choices of a random-walk step in the $2M$-dimensional phase space. The accumulation of the displacements leads to the final output state in a superposition of random-walk trajectories. For example, in the one-mode case ($M=1$), we have the superposition
\be 
\ket{\psi(\bm x)}_{q,m}= \sum_{a=0}^1\sum_{\bm s}v_{\bm s,a}(\bm x)\ket{a}_q\ket{(-1)^a \bm s \cdot \bm \beta}_m,
\label{eq:state_ensemble}
\ee 
where sign vector $\bm s$ sums over $\{-1,+1\}^{L}$ under the constraint that $\bm s_L=-1$ and $\bm \beta=\{\beta_1,\cdots, \beta_L\}$ are the amplitudes of displacemnts.
The coefficients $v_{\bm s,a}(\bm x)$ are lengthy, we specify them and present the detailed proof of the state representation in Appendix~\ref{app:state_repre}.

One can directly check that in the energy regularized ensemble, the displacement $\bm s \cdot \bm \beta$ with arbitrary choice of sign vectors $\bm s$ obeys a Gaussian distribution and the ensemble average energy of the states in $\Psi_{E,L,1}$ is $\mathbb{E}\expval{ m^\dag m}=E$ due to energy regularization. Moreover, because of the accumulation of displacements in the summation $\bm s \cdot \bm \beta=\sum_\ell (\pm)\beta_\ell$, central limit theorem dictates that a Gaussian distribution is universal in the amplitudes of the final displacement, regardless of the distribution of each $\beta_\ell$.  

The $M$-mode case is a direct generalization of Eq.~\eqref{eq:state_ensemble}, via extending the coherent state to product of coherent states in the superposition (see Appendix~\ref{app:multi-mode}). At the same time, the corresponding energy per mode is still $E$.

\subsection{Barren plateau in state preparation}

The trainability of the circuit $U(\bm x)$ initialized in the energy-regularized ensemble $\calU_{E,L,M}$ is characterized by the typical gradient of cost function $\calC$ in Eq.~\eqref{Cost_func}. Here, we focus on the state engineering for the qumodes, while the qubit acts as an ancilla. To prepare a general $M$-mode state $\ket{\psi}_{\bm m}$, the observable can be chosen as $O = \ketbra{0}{0}_q \otimes\ketbra{\psi}{\psi}_{\bm m}$ and we can set the initial state as vacuum without loss of generality.

Consider the gradient with respect to the $k$th qubit rotation angle $\theta_k$, one can check that the parameter shift rule~\cite{mitarai2018quantum} holds for qubit rotation angles in the CV circuit, therefore
$
    \partial_{\theta_k}\calC = \left(\braket{O}_{k^{(+)}} -\braket{O}_{k^{(-)}}\right)/2,
$
where $\braket{O}_{k^{(\pm)}}$ corresponds to expectation of $O$ for the state $\rho_{k^{(\pm)}}$ prepared under circuit $U(\bm x)$ with the rotation angle $\theta_k$ shifted by $\pm \pi/2$. In the same spirit of Refs.~\cite{mcclean2018barren,cerezo2021cost}, we consider random initial conditions---when $U(\bm x)$ is randomly sampled from the energy-regularized ensemble $\calU_{E,L,M}$, the mean of the gradient
\begin{equation}
    \mathbb{E}\left[\partial_{\theta_k}\calC\right] = \frac{1}{2}\left(\mathbb{E}\left[\braket{O}_{k^{(+)}}\right] - \mathbb{E}\left[\braket{O}_{k^{(-)}}\right]\right) = 0,
\end{equation}
due to the fact that the ensemble average over parameters with $\theta_k\pm \pi/2$ are equal. In this case, the typical values of gradients are characterized by the variance
\begin{equation}
\begin{split}
    {\rm Var}\left[\partial_{\theta_k}\calC\right]
    = \frac{1}{2}\mathbb{E}\left[\braket{O}_{k^{(+)}}^2\right] -\frac{1}{2}\mathbb{E}\Big[\braket{O}_{k^{(+)}}\braket{O}_{k^{(-)}}\Big].
    \label{eq:var}
\end{split}
\end{equation}

\begin{figure*}
    \centering
    \includegraphics[width=1\textwidth]{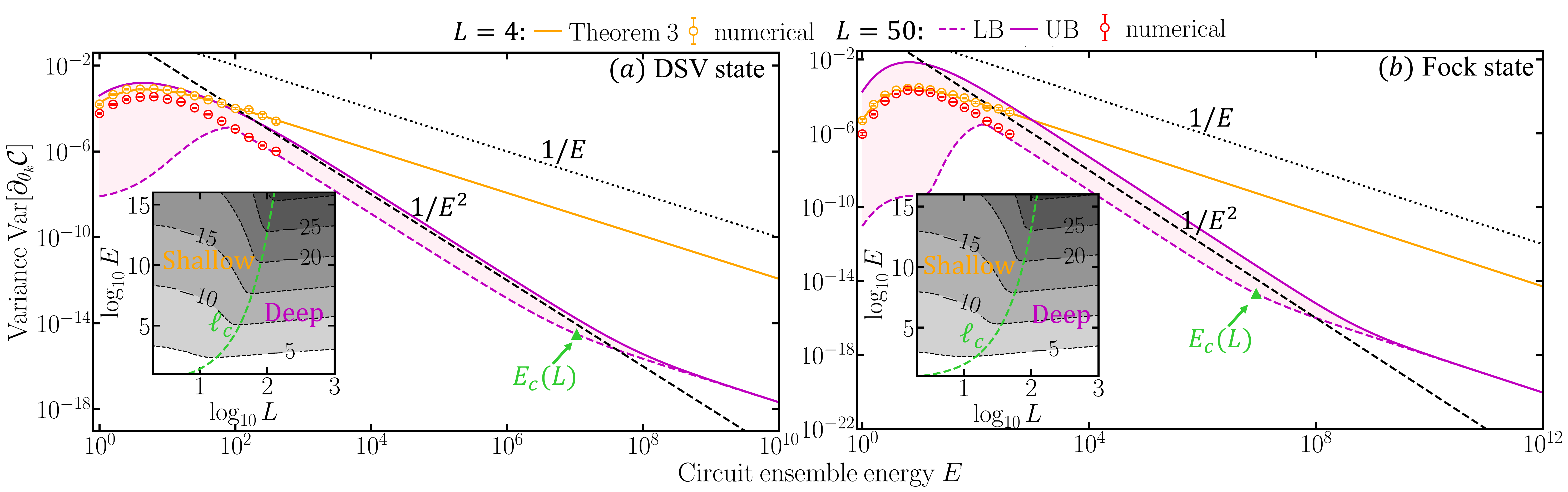}
    \caption{Variance of gradient ${\rm Var}[\partial_{\theta_{k}}\calC]$ at $k=L/2$ in preparation of (a) displaced squeezed vacuum (DSV) state with $\gamma = 2, \zeta=\sinh^{-1}(2)$ and (b) Fock state with $E_t=8$. Orange and red dots with errorbars show numerical results of variance in shallow and deep circuits. Orange solid curve represents the analytical variance in Theorem~\ref{res:shallow}; the dashed and solid magenta curves show the lower and upper bounds in Ineqs.~\eqref{eq:LB_main},~\eqref{eq:UB_main}. Black dotted and dashed lines indicate the scaling of $1/E$ and $1/E^2$. Insets in (a)(b) we plot the logarithm in base ten of the upper bound in Ineq.~\eqref{eq:UB_main} versus circuit depth and energy. Green triangle (main) and line (inset) show the corresponding boundary of variance at $E_c(L)$ and $\ell_c(E)$.
    }
    \label{fig:var_states}
\end{figure*}

The exact evaluation of ${\rm Var}\left[\partial_{\theta_k}\calC\right]$ is in general challenging, due to the lack of theoretical tools such as $t$-design in infinite dimension. Instead, our proposed energy-regularized ensemble allows us to solve a pair of asymptotic lower and upper bounds when $E\gg 1$ as (ignoring higher-order terms, see Appendix~\ref{app:multi-mode})
\begin{align}
    &{\rm Var}\left[\partial_{\theta_k}\calC\right]\ge \frac{1}{2}\left[\frac{3^{ML-1}}{4^{ML}} C_1+\left(\frac{1}{4}-\frac{3^{ML}}{4^{ML}}\right)\min_{\bm \ell}C_2\left(\frac{\bm \ell}{L}\right)\right], \label{eq:LB_main}\\
    &{\rm Var}\left[\partial_{\theta_k}\calC\right] \le \frac{1}{2}\left[\frac{3^{ML-1}}{4^{ML}} C_1+\left(\frac{1}{4}+\frac{2^{ML-1}}{4^{ML}}\right)\max_{\bm \ell} C_2\left(\frac{\bm \ell}{L}\right)\right] \label{eq:UB_main},
\end{align}
where the minimization and maximization are over the vector $\bm \ell=(\ell^{(1)},\dots, \ell^{(M)})^T$ with each integer element $\ell^{(j)}\in [1, L-1]\cap\mathbb{N}$. The state-dependent correlators are
\begin{align}
C_1&=\mathbb{E}_{\bm \alpha }\left[\left\lvert{} {}_{\bm m}\langle \psi| \bm \alpha\rangle_{\bm m}\right\rvert^4\right],
\label{eq:C1_main}\\ 
C_2(\bm z)&=\mathbb{E}_{\bm \alpha_{\bm z}, \bm \alpha_{\bm 1-\bm z} }\left[\prod_{h=0}^1\left\lvert{}_{\bm m}\langle \psi|\bm \alpha_{\bm z}+(-1)^h \bm \alpha_{\bm 1-\bm z}\rangle_{\bm m} \right\rvert^2\right].\label{eq:C2_main}
\end{align}
Here we have defined the vector notation $\ket{\bm \alpha}_{\bm m}=\bigotimes_{j=1}^M \ket{\alpha^{(j)}}_{m_j}$, with each $\ket{\alpha^{(j)}}_{m_j}$ being a coherent state with displacement $\alpha^{(j)}$ for mode $m_j$. The ensemble average in Eq.~\eqref{eq:C1_main} is over each component $\alpha^{(j)}$ sampling from $\calN_{E}^{\rm C}$, due to Definition~\ref{definition:Dlayer_circuit}. Similarly, $\ket{\bm \alpha_{\bm z}+(-1)^h \bm \alpha_{\bm 1-\bm z}}_{\bm m}=\bigotimes_{j=1}^M\ket{\alpha_{z_j}+(-1)^h\alpha_{1-z_j}}_{m_j}$, with each component $\alpha_{z_j}\sim \calN_{z_j E}^{\rm C}$ and $\alpha_{1-z_j}\sim \calN_{(1-z_j) E}^{\rm C}$. As the above two correlators are only functions of state fidelity, numerical evaluation is often efficient and analytical evaluation is sometimes possible. Note that for $M\ge 2$, Eq.~\eqref{eq:C2_main} has already ignored terms exponentially small in $L$, which will not affect our results (see Appendix~\ref{app:multi-mode}).
In the following, we present analytical results for general Gaussian states and number states, and numerical results for randomly generated states in the lower and upper bounds. Although Gaussian states can be efficiently prepared with a Gaussian circuit of linear optics and squeezers~\cite{weedbrook2012gaussian}, as the random initialized VQCs are not leveraging any heuristics, we expect trainability found there to be universal, which is supported by our other results. We begin with one-mode state preparation and then generalize to the multi-mode case.

\subsubsection{One-mode state preparation}
\label{sec:one_mode_prepare}
For one-mode state preparation, we consider Gaussian states, number states and random states.
An one-mode (pure) Gaussian state can be represented by a displaced squeezed vacuum state $D(\gamma)S(\zeta)\ket{0}_m$, where $S(\zeta) = \exp[\zeta(m^2-m^{\dagger 2})]$ is the squeezing operator (see Appendix~\ref{app:Gaussian_intro}). This is an important class of states, as coherent states provide a quantum model for lasers and squeezed vacuum is a key resource of quantum sensing, e.g. Laser Interferometer Gravitational-Wave Observatory~\cite{tse2019quantum,aasi2013enhanced,abadie2011gravitational} and dark matter search~\cite{backes2021}. Here we omit any possible phase rotation, which is treated in Appendix~\ref{app:var_gaussian}. The energy for such a state is $E_t=|\gamma|^2 + \sinh^2(\zeta)$. We can analytically solve the two correlators for one-mode Gaussian state as
\begin{align}
    C_1^{\rm Gauss} &= \frac{\sech^2(\zeta)e^{-R(E)/G_1(E)}}{\sqrt{G_1(E)}}, \label{eq:C1_gauss_main}\\
    C_2^{\rm Gauss}(z) &= \frac{\sech^2(\zeta)e^{-R(zE)/G_1(zE)}}{\sqrt{G_1(E-zE)G_1(zE)}},
    \label{eq:C2_gauss_main}
\end{align}
where we define 
$G_1(x) = 1+4x+4\sech^2(\zeta)x$ and $R(x) = 2|\gamma|^2+2\tanh(\zeta)(\Re{\gamma}^2-\Im{\gamma}^2)+4\sech^2(\zeta)|\gamma|^2x$. 

To go beyond Gaussian states, we consider the Fock number state with $E_t\in \mathbb{N}$ photons, whose correlators can be solved in closed-form as
\begin{align}
    C_1^{\rm Fock} &= \frac{(2E_t)!}{(2^{E_t}E_t!)^2}\frac{(1+1/2E)^{-2E_t}}{1+2E}, \label{eq:C1_fock_main}\\
    C_2^{\rm Fock}(z) &= \eta \frac{(1+2E_t)(2E_t)!}{(2^{E_t}E_t!)^2}\frac{(1+1/2Ez)^{-2E_t}}{(1+2Ez)[1+2E(1-z)]}. \label{eq:C2_fock_main}
\end{align}
In the second line, the right-hand-side represents the lower and upper bounds of $C_2^{\rm Fock}(z)$ as exact evaluation becomes hard: for the upper bound $\eta=1$ and for the lower bound $\eta={}_2F_1(1/2,-E_t,1,1)$ where ${}_2F_1$ is the original hypergeometric function. 
By having those correlators in Ineqs.~\eqref{eq:LB_main},~\eqref{eq:UB_main}, we can have the corresponding bounds for variance of gradient in preparation of coherent states and Fock states. 

With the above correlators in hand, we found that when the circuit depth $L$ is shallow, the lower and upper bounds in Ineqs.~\eqref{eq:LB_main} and~\eqref{eq:UB_main} coincide to the leading order of $\sim1/E$ for large $E$. When depth $L$ is large, the $\sim 1/E^2$ terms will dominate in both lower and upper bounds. Quantitatively, the cross-over of the scaling happens at depth
$
\ell_c (E) \in \mathcal{O}(\log E)
$.
For the full formula of $\ell_c (E)$, please refer to Appendix~\ref{app:variance}.
Equivalently, for circuits with a fixed depth $L$, the transition of scaling from $1/E^2$ to $1/E$ takes place at $E_c(L)\in \Omega(\exp(L))$. Overall, we have the following theorems:
\begin{theorem}
(Barren plateau for shallow depth.) For a single-mode ($M=1$) CV VQC randomly initialized from the energy-regularized ensemble $\calU_{E,L,1}$ with a shallow depth 
$L \le \ell_c(E) \in \mathcal{O}(\log E)$, the gradient with respect to qubit rotation angles when preparing a Gaussian or a Fock state has zero-mean and variance 
\begin{align}
       {\rm Var}\left[\partial_{\theta_k}\calC\right] &= \frac{1}{6}\left(\frac{3}{4}\right)^L C_1 + \mathcal{O}\left(\frac{1}{E^2}\right).
       \label{eq:var_shallow}
\end{align}
where $C_1$ is correlator in Eq.~\eqref{eq:C1_gauss_main} or~\eqref{eq:C1_fock_main}. In particular, ${\rm Var}\left[\partial_{\theta_k}\calC\right]\sim 1/E$ in the large $E$ limit.
\label{res:shallow}
\end{theorem}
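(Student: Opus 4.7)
The plan is to start from the bounds \eqref{eq:LB_main} and \eqref{eq:UB_main} specialized to $M=1$, and show that in the shallow regime both bounds collapse to the common leading term $\tfrac{1}{6}(3/4)^L C_1$ up to $\mathcal{O}(1/E^2)$ corrections. The first observation is that both bounds share the identical first term $\tfrac{1}{2}\cdot\tfrac{3^{L-1}}{4^L}C_1 = \tfrac{1}{6}(3/4)^L C_1$, so the theorem reduces to controlling the remaining $C_2$-dependent terms, whose coefficients $\tfrac{1}{2}[1/4-(3/4)^L]$ and $\tfrac{1}{2}[1/4+2^{L-1}/4^L]$ are uniformly bounded by a constant for all $L\ge 1$. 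Hence it suffices to bound both $\max_\ell C_2(\ell/L)$ and $\min_\ell C_2(\ell/L)$ by $\mathcal{O}(1/E^2)$.

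Next I would substitute the explicit closed-form expressions \eqref{eq:C2_gauss_main} and \eqref{eq:C2_fock_main}. For the Fock case this is immediate: in \eqref{eq:C2_fock_main} the denominator $(1+2Ez)(1+2E(1-z))$ is minimized at the boundary $z\in\{1/L,(L-1)/L\}$, yielding a worst-case magnitude of order $L/E$ in one factor times $1/E$ in the other, so that after the $E_t$-dependent prefactor is absorbed into a constant, $C_2^{\rm Fock}(\ell/L)=\mathcal{O}(1/E^2)$ uniformly. For the Gaussian case I would argue analogously from \eqref{eq:C2_gauss_main}: the large-$E$ scaling of $G_1$ gives $1/\sqrt{G_1(zE)G_1((1-z)E)}=\mathcal{O}(1/E^2)$ once the bounded exponential and the $\sech^2(\zeta)$ prefactor are taken into account. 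Consequently both bounds differ from $\tfrac{1}{6}(3/4)^L C_1$ by at most $\mathcal{O}(1/E^2)$, yielding Eq.~\eqref{eq:var_shallow}.

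With the two scales in hand, the crossover depth $\ell_c(E)$ is pinned down by equating the first-term magnitude $(3/4)^L C_1$ with the $\mathcal{O}(1/E^2)$ second-term magnitude. Since $C_1$ itself decays polynomially in $1/E$ via \eqref{eq:C1_gauss_main} and \eqref{eq:C1_fock_main}, this matching condition takes the form $(3/4)^{\ell_c}\sim 1/E^\kappa$ for some $\kappa>0$, hence $\ell_c(E)\in\mathcal{O}(\log E)$ as claimed. For every $L\le\ell_c(E)$ the first term dominates and Eq.~\eqref{eq:var_shallow} holds. The ${\rm Var}\sim 1/E$ statement in the large-$E$ limit is then obtained by reading off the leading inverse power of $E$ from $C_1$, while $(3/4)^L$ remains bounded.

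The main obstacle is the uniform control of $\max_\ell C_2(\ell/L)$ and $\min_\ell C_2(\ell/L)$ near the boundary values $\ell\in\{1,L-1\}$, where one of the Gaussian-integral factors becomes $\mathcal{O}(1)$ rather than $\mathcal{O}(E)$, threatening to inflate $C_2$ to $\mathcal{O}(1/E)$ and spoil the stated $\mathcal{O}(1/E^2)$ remainder. Verifying that the integrality constraint $\ell\ge 1$ combined with the growth of the other $G_1$ (or the other $1+2E(1-z)$) factor keeps $C_2$ at $\mathcal{O}(1/E^2)$ is the key technical step, and the precise form of $\ell_c(E)$ together with the detailed error analysis would naturally be deferred to Appendix~\ref{app:variance}.
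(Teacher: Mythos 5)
Your proposal is correct and follows essentially the same route as the paper's Appendix~\ref{app:variance}: both start from the common leading term $\tfrac{1}{2}\cdot\tfrac{3^{L-1}}{4^L}C_1=\tfrac{1}{6}(3/4)^LC_1$ shared by Ineqs.~\eqref{eq:LB_main} and~\eqref{eq:UB_main}, bound the $C_2$ contributions by $\mathcal{O}(1/E^2)$ using the closed forms~\eqref{eq:C2_gauss_main} and~\eqref{eq:C2_fock_main}, and fix $\ell_c(E)$ by matching the two scales, exactly as in Eqs.~\eqref{eq:Ec} and~\eqref{eq:E_critical}. Your worry about the boundary points $\ell\in\{1,L-1\}$ is resolved as you suspect: there $C_2\sim L/(4E^2)$, and since $L\le\ell_c(E)\in\mathcal{O}(\log E)$ this remains subleading to the $1/E$ term, matching the paper's treatment.
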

\begin{theorem}
(Barren plateau for deep depth.)
For a single mode ($M=1$) CV VQC randomly initialized from the energy-regularized enmseble $\calU_{E,L,1}$, with layers $L \ge  \ell_c(E) \in \Omega (\log E)$, the gradient with respect to qubit rotation angles when preparing a Gaussian state or a Fock state has zero-mean and asymptotic variance
$
{\rm Var}\sim 1/{E^2}.
$
\label{res:deep}
\end{theorem}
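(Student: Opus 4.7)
\textbf{Proof plan for Theorem~\ref{res:deep}.} The plan is to specialize the sandwich bounds \eqref{eq:LB_main} and \eqref{eq:UB_main} to $M=1$ and show that, once $L$ crosses a critical depth $\ell_c(E)\in\Omega(\log E)$, both bounds collapse to the same $\Theta(1/E^2)$ asymptotic, so that the variance itself inherits this scaling.

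First, I would observe that both bounds share the common piece $\tfrac{3^{L-1}}{4^L}C_1$, and that their $C_2$-coefficients, $(1/4-(3/4)^L)$ for the lower bound and $(1/4+2^{L-1}/4^L)$ for the upper bound, both converge to $1/4$ at exponential rate in $L$. Hence the only remaining task is to control the two state-dependent correlators $C_1$ and $C_2(\ell/L)$ with $\ell\in[1,L-1]\cap\mathbb{N}$ in the large-$E$ limit. Substituting the explicit forms from \eqref{eq:C1_gauss_main}--\eqref{eq:C2_fock_main}, one reads off $C_1=\mathcal{O}(E^{-1/2})$ for Gaussian targets and $C_1=\mathcal{O}(1/E)$ for Fock targets.

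Next, I would establish the analytic heart of the argument: $C_2(\ell/L)=\Theta(1/E^2)$ uniformly for $\ell/L\in\{1/L,\dots,(L-1)/L\}$, with prefactors that depend on $\ell/L$ and $L$ only through bounded multiplicative constants. Since the coefficient of $C_2$ in both bounds tends to $1/4$, this places $\min_\ell C_2(\ell/L)$ and $\max_\ell C_2(\ell/L)$ at the same leading order, so the $C_2$ contribution to each bound is $\Theta(1/E^2)$. Then I would determine $\ell_c(E)$ by enforcing $(3/4)^{L}C_1\lesssim 1/E^2$; for both state classes this yields $\ell_c(E)\in\Theta(\log E)$, with a state-dependent prefactor matching the one used in Appendix~\ref{app:variance} and in Theorem~\ref{res:shallow}. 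For all $L\ge\ell_c$ the $C_1$ piece is subdominant, both bounds give $\Theta(1/E^2)$, and the theorem follows.

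The main obstacle I anticipate is uniform control of $C_2(\ell/L)$ when $\ell/L$ approaches $0$ or $1$, since in that regime one of the two conditional Gaussian widths collapses to $E/L$ and a careless expansion of \eqref{eq:C2_gauss_main} or \eqref{eq:C2_fock_main} could spuriously inflate $C_2$ to $1/E$. The remedy I would use is the change of variables $(u,v)=(\alpha_z+\alpha_{1-z},\alpha_z-\alpha_{1-z})$, which decouples the two coherent-state overlaps in \eqref{eq:C2_main} into a product of independent single-Gaussian overlap integrals; this exposes the product structure $\sim 1/[(zE)(1-z)E]$ and makes the $1/E^2$ scaling manifest for every $z\in[1/L,1-1/L]$ as long as $E/L\gg 1$, which is automatic in the regime $L\in\Omega(\log E)$, $E\gg 1$. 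The Fock case then closes using the two-sided hypergeometric estimate $\eta\in[{}_2F_1(1/2,-E_t,1,1),1]$ appearing in \eqref{eq:C2_fock_main}, which pins the numerator independently of $\ell$, while the rotated-phase Gaussian case is handled by the same argument applied to the rotated correlators in Appendix~\ref{app:var_gaussian}.
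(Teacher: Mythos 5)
Your proposal follows essentially the same route as the paper's proof in Appendix~\ref{app:variance}: specialize the sandwich bounds to $M=1$, show the $C_2$ contribution is $\Theta(1/E^2)$ uniformly over the admissible $\ell\in[1,L-1]$, and locate the crossover depth by comparing $(3/4)^L C_1$ against the $C_2$ term, exactly as in Eqs.~\eqref{eq:Ec},~\eqref{eq:var_deep_gaussian} for Gaussian targets and Eqs.~\eqref{eq:E_critical},~\eqref{eq:var_deep_coh} for Fock targets. Two small corrections that do not change your conclusion: first, $C_1^{\rm Gauss}\sim \sech(\zeta)/2E$, not $\mathcal{O}(E^{-1/2})$ --- the last term of $G_1$ is quadratic in its argument (see Eq.~\eqref{eq:G1_def} and the coherent-state limit $C_1^{\rm Coh}\propto 1/(1+2E)$; an $E^{-1/2}$ scaling would also contradict Theorem~\ref{res:shallow}), so your estimate of $\ell_c(E)$ shifts only by a constant factor. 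Second, $E/L\gg 1$ is not automatic from $L\in\Omega(\log E)$ (that is a lower bound on $L$); your endpoint analysis at $z=1/L$ and $z=1-1/L$, like the paper's, implicitly requires $E\gtrsim L$ so that the extra factor of $L$ in $\max_\ell C_2(\ell/L)\sim L/4E^2$ is absorbed (cf.\ the remark following Eq.~\eqref{eq:C2_smsv}).
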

In terms of the asymptotic region, in practice we find the scaling $1/E^2$ to hold as long as $E\in \Omega(E_t)$.

\begin{figure}[t]
    \centering
    \includegraphics[width=0.45\textwidth]{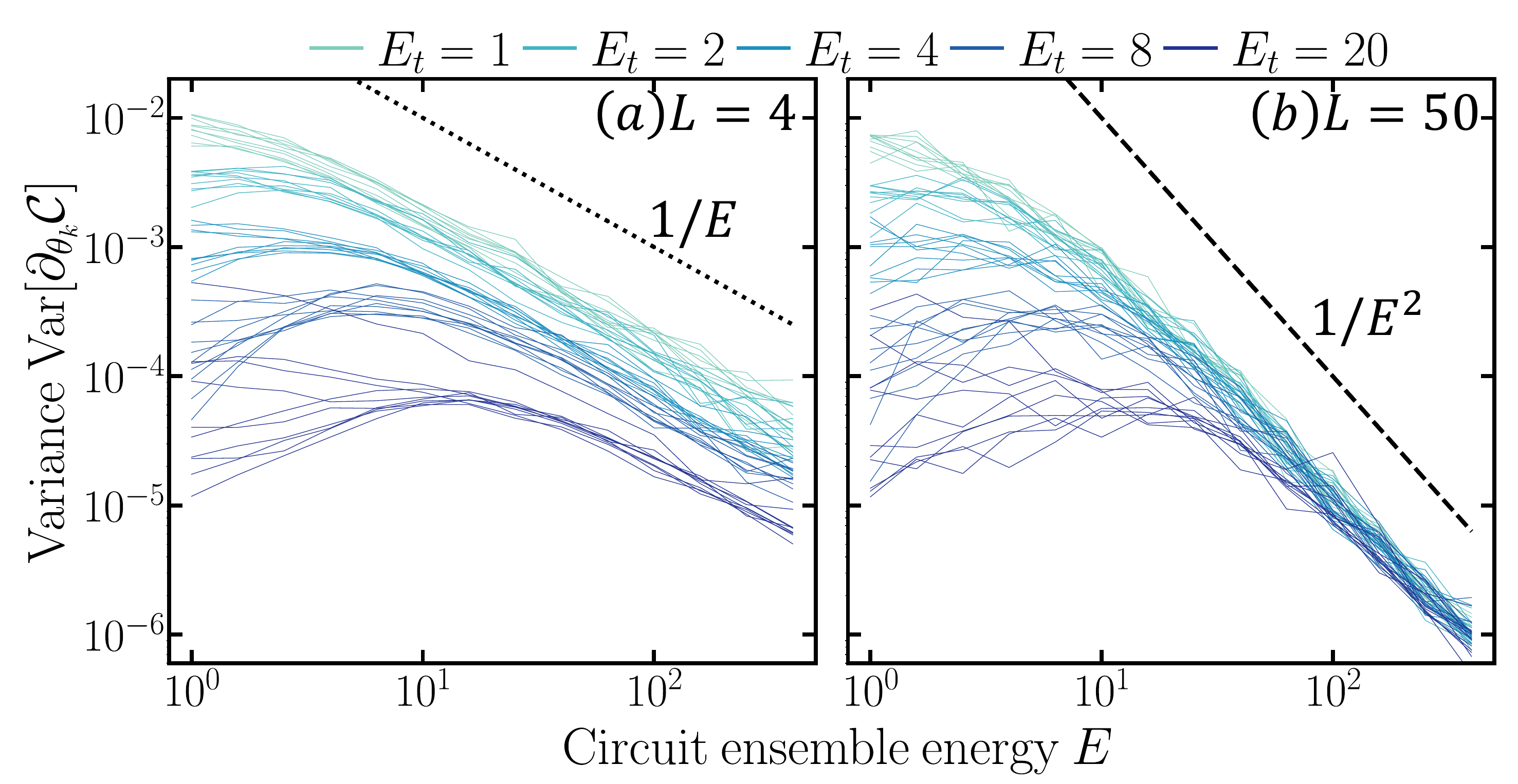}
    \caption{Variance of gradient ${\rm Var}[\partial_{\theta_{k}}\calC]$ at $k=L/2$ in preparation of random CV states $\ket{\psi}_m=\sum_{n} b_n \ket{n}_m$ with $L=4$ (a) and $L=50$ (b) circuits. Curves with same color show the variance of different sample target states. Black dotted and dashed lines in (a) and (b) represent the scaling of $1/E$ and $1/E^2$. In our calculation, we have chosen cutoff $n_c\sim 2E_t$ and $\epsilon=0.1$. 
    }
    \label{fig:var_rand}
\end{figure}

In Fig.~\ref{fig:var_states}(a),(b), we show the variance of the gradient ${\rm Var}\left[\partial_{\theta_k}\calC\right]$ versus ensemble energy $E$ in shallow and deep circuits and compare numerical results to the bounds and theorems. We consider the preparation of a displaced squeezed vacuum (DSV) state with $\gamma=2,\zeta=\sinh^{-1}(2)$ and Fock states with $E_t=8$ separately. We see that the numerical variance in shallow circuits $L=4$ (orange dots) agrees well with Eq.~\eqref{eq:var_shallow} stated in Theorem~\ref{res:shallow} (orange line), which suggests a scaling of $1/E$ in the asymptotic region of $E$. For deep circuits $L=50$, the numerical results (red) lie between the lower bound and upper bound, and indeed obey the $1/E^2$ scaling, which supports Theorem~\ref{res:deep}. To our surprise, the lower bound in Ineq.~\eqref{eq:LB_main} becomes extremely tight in asymptotic region of $E$ in both cases. At the same time, despite the large circuit depth $L=50$, given extremely high ensemble energy above $E_c(L)$, the VQCs are again shallow compared with $\ell_c(E)$ and the variance of the gradient obeys the $1/E$ scaling for shallow circuits. To understand the transition between shallow and deep circuits, in the inset of Fig.~\ref{fig:var_states} we present the contours of upper bound in Ineq.~\eqref{eq:UB_main} versus circuit depth $L$ and energy $E$. Here we identify a clear contrast between shallow and deep circuits in terms of circuit depth and ensemble energy, where the boundary representing $\ell_c\in \mathcal{O}(\log E)$ is indicated by the green curve. 
Besides DSV states, we also consider special cases of coherent and single mode squeezed vacuum (SMSV) states in Appendix~\ref{app:one_mode_gaussian}, where our bounds and theorems are again verified.

For general state preparation, the evaluation of the correlators $C_1, C_2$ in Ineqs.~\eqref{eq:LB_main} and~\eqref{eq:UB_main} can be hard. However, informed from the above results, we conjecture that Theorems~\ref{res:shallow},~\ref{res:deep} hold for arbitrary single-mode state preparation. To support this conjecture, we present numerical evidence for the preparation of randomly generated CV states. These states are random superposition of the number bases in the form of $\ket{\psi}_m\propto \sum_{n=0}^{n_c} b_n \ket{n}_m$, where each $b_n\sim \calN_{2}^{\rm C}$ is randomly chosen. We post-select states within the energy window $[E_t- \epsilon,E_t+\epsilon]$. With the target states generated, we evaluate the variances of gradient in preparing each fixed state versus circuit energy for different choices of $E_t$ (indicated by the color) in Fig.~\ref{fig:var_rand}. Despite state-dependent behaviors in the low energy part, a universal decay of the gradients with the energy can be identified in the $E\gtrsim E_t$ region. The decay shows a scaling of $\sim 1/E$ in shallow circuits (subplot (a)) and a scaling of $\sim 1/E^2$ in deep circuits (subplot (b)).

\subsubsection{Multi-mode state preparation}

Now we generalize our results of energy-dependent barren plateau to the multi-mode case, including analytical results on general Gaussian states, number states and numerical results for random states. We begin our discussion with the simple case of a product state $\ket{\psi}_{\bm m} = \otimes_{j=1}^M \ket{\psi_j}_{m_j}$, whose correlators in Eqs.~\eqref{eq:C1_main},~\eqref{eq:C2_main} have the form
\begin{align}
    C_1^{\rm Prod} &= \prod_{j=1}^M \left(\mathbb{E}_{\alpha^{(j)}\sim \calN_E^{\rm C}}\left[\left\lvert\braket{\psi_j|\alpha^{(j)}}_{m_j}\right\rvert ^4\right]\right), \label{eq:C1_prod_main}\\
    C_2^{\rm Prod}(\bm z) &= \prod_{j=1}^M \mathbb{E}_{\alpha_y\sim \calN_{yE}^{\rm C}}\left[\prod_{h=0}^1 \left\lvert\braket{\psi_j|\alpha_{z_j}+(-1)^h\alpha_{1-z_j}}_{m_j}\right\rvert^2\right], \label{eq:C2_prod_main}
\end{align}
which reduce to a product of single-mode correlators. Consequently, Theorems~\ref{res:shallow} and~\ref{res:deep} directly generalize to state preparation of products of single-mode Gaussian states and products of number states: the scaling of the variance of the gradient is $1/E^M$ for shallow circuits $L\in\mathcal{O}(\log E)$ and $1/E^{2M}$ for deep circuits $L\in \Omega(\log E)$ (see Appendix~\ref{app:var_prod} for a detailed proof). 

\begin{figure}[t]
    \centering
    \includegraphics[width=0.5\textwidth]{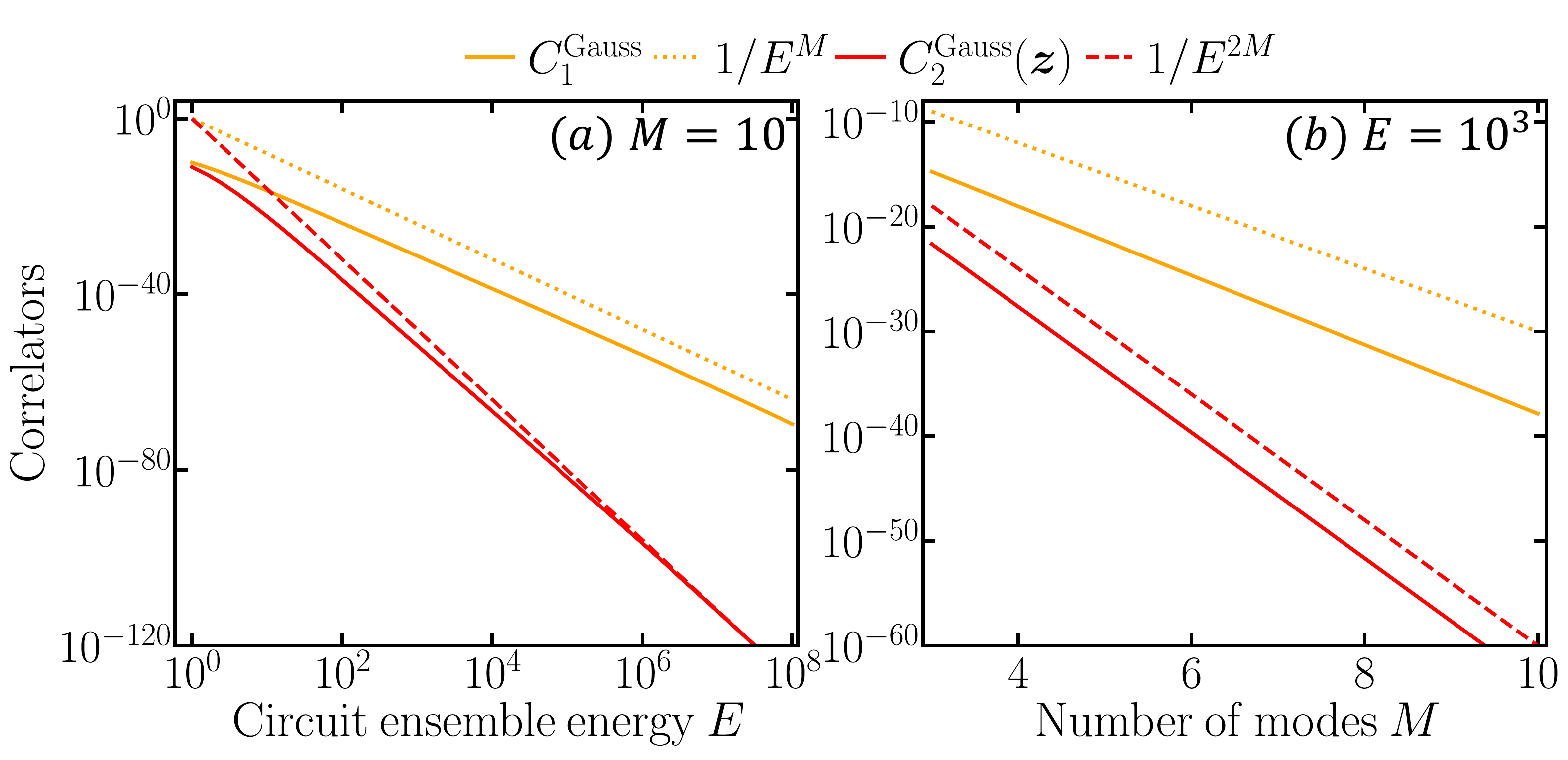}
    \caption{Correlators $C_1^{\rm Gauss}$ and $C_2^{\rm Gauss}(\bm z)$ with $\bm z=\{1/2,\cdots,1/2\}$ in Eqs.~\eqref{eq:C1_gauss_modes_main},~\eqref{eq:C2_gauss_modes_main} versus (a) ensemble energy $E$ and (b) modes $M$. The target state $\ket{\psi}_{\bm m}$ is generated by global random passive Gaussian unitary following a single-mode squeezer with strength $r=8$.}
    \label{fig:ruc}
\end{figure}

To go beyond product states, we consider an arbitrary $M$-mode Gaussian state, which is typically highly entangled~\cite{iosue2022page}. A general Gaussian state $\ket{\psi}_{\bm m}$ can be described by its mean and covariance matrix $\bV_{\bm m}$ of its Wigner function (see Appendix~\ref{app:Gaussian_intro}). For simplicity, we show the zero-mean results in the main text, and the non-zero mean case is presented in Appendix~\ref{app:var_Gaussian_modes}. The correlators $C_1$ and $C_2$ can be analytically solved as (see Appendix~\ref{app:var_Gaussian_modes})
\begin{align}
    C_1^{\rm Gauss} &= \frac{4^{M}\det(\bK)}{\sqrt{\det(4\bK+\bI/E)} E^M}\label{eq:C1_gauss_modes_main}\\
    C_2^{\rm Gauss}(\bm z) &= \frac{4^M\det(\bK)}{\sqrt{\det (4 \bK+\bS_{\bm z})\det(4 \bK+\bS_{\bm 1-\bm z}) }}\nonumber\\
    &\quad \times\frac{1}{\left[\prod_{j=1}^M z_j(1-z_j)\right]E^{2M}},
    \label{eq:C2_gauss_modes_main}
\end{align}
where $\bI$ is the $2M\times 2M$ identity matrix. Here we have defined $\bK=(\bV_{\bm m}+\bI)^{-1}$ and $\bS_{\bm z}= \oplus_{j=1}^M \bI_2/(z_jE)$ with $\bI_2$ being the $2\times 2$ identity matrix. In the asymptotic limit of $E\gg 1$, one can directly see that $C_1^{\rm Gauss}\sim 1/E^M$ while $C_2^{\rm Gauss}(\bm z)\sim 1/E^{2M}$ (see Appendix~\ref{app:multi-mode} for a proof in the non-zero mean case), which leads to the following theorem.

\begin{theorem}
\label{theorem:multi}
(Barren plateau for multi-mode Gaussian states) For an $M$-mode CV VQC randomly initialized from the energy-regularized ensemble $\calU_{E,L,M}$, the gradient with respect to qubit rotation angles when preparing an $M$-mode general Gaussian state has zero-mean and asymptotic variance ${\rm Var}\sim 1/E^M$ with a shallow depth $L \in \mathcal{O}(\log E)$, while ${\rm Var}\sim 1/E^{2M}$ with a deep depth $L\in \Omega(\log E)$. 
\end{theorem}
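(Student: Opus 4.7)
The plan is to plug the closed-form multi-mode Gaussian correlators of Eqs.~\eqref{eq:C1_gauss_modes_main}--\eqref{eq:C2_gauss_modes_main} into the general variance bounds of Ineqs.~\eqref{eq:LB_main}--\eqref{eq:UB_main}, and then extract the leading behavior in $E$ in the two depth regimes. The zero-mean claim is inherited directly from the general argument preceding Eq.~\eqref{eq:var}: a $\theta_k\to\theta_k+\pi$ shift maps the rotation-angle distribution ${\bf U}[0,2\pi)$ to itself, forcing $\mathbb{E}[\braket{O}_{k^{(+)}}]=\mathbb{E}[\braket{O}_{k^{(-)}}]$, and this is independent of the target so it applies verbatim for multi-mode Gaussian targets (including the non-zero mean case treated in Appendix~\ref{app:var_Gaussian_modes}).

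The key analytic step is to extract the $E\to\infty$ asymptotics of the two correlators. Writing $\det(4\bK+\bI/E)=\det(4\bK)\det(\bI+(4\bK)^{-1}/E)=4^{2M}\det(\bK)^2[1+\mathcal{O}(1/E)]$ and using that $\bK=(\bV_{\bm m}+\bI)^{-1}$ is strictly positive-definite for any physical state, Eq.~\eqref{eq:C1_gauss_modes_main} reduces to $C_1^{\rm Gauss}=E^{-M}[1+\mathcal{O}(1/E)]$. An identical expansion applied to $\det(4\bK+\bS_{\bm z})$ and $\det(4\bK+\bS_{\bm 1-\bm z})$ (each $\bS_{\bm z}$ has entries of order $1/E$) turns Eq.~\eqref{eq:C2_gauss_modes_main} into $C_2^{\rm Gauss}(\bm z)=\{[\prod_{j=1}^M z_j(1-z_j)]E^{2M}\}^{-1}[1+\mathcal{O}(1/E)]$. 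Substituting these asymptotics into Ineqs.~\eqref{eq:LB_main}--\eqref{eq:UB_main}, the first term behaves as $(3/4)^{ML}/E^M$, while the second term's geometric coefficient tends to $1/8$ and its $C_2$ factor is $\Theta(1/(L^M E^{2M}))$ after the $\min_{\bm\ell}$ or $\max_{\bm\ell}$. Equating the two $E$-exponents yields the crossover depth $\ell_c(E)\sim\log E/\log(4/3)$, matching the single-mode calculation. Hence for $L\in\mathcal{O}(\log E)$ the first term dominates both bounds and ${\rm Var}\sim 1/E^M$; for $L\in\Omega(\log E)$ the prefactor $(3/4)^{ML}$ decays super-polynomially in $1/E$, leaving the second term as the leading contribution and giving ${\rm Var}\sim 1/E^{2M}$.

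The main technical obstacle is squeezing the lower and upper bounds to the same $E$-scaling in the deep regime despite the $\min$ and $\max$ over the discrete vector $\bm\ell$. Because each $z_j=\ell^{(j)}/L$ is confined to $[1/L,1-1/L]$, the product $\prod_{j=1}^M z_j(1-z_j)$ ranges over $[\Theta(1/L^M),(1/4)^M]$, so the ratio $C_2^{\min}/C_2^{\max}$ is at most polynomial in $L$ and carries no $E$-dependence. Both bounds therefore share the $1/E^{2M}$ asymptote in $E$, pinning down the stated scaling. The terms discarded in Eq.~\eqref{eq:C2_main} for $M\ge 2$ are of order ${\rm poly}(E)\cdot e^{-\Omega(L)}$ and hence subleading whenever $L\gtrsim\log E$, so they do not spoil the deep-regime conclusion.
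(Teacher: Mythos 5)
Your proposal follows essentially the same route as the paper: substitute the closed-form correlators of Eqs.~\eqref{eq:C1_gauss_modes_main}--\eqref{eq:C2_gauss_modes_main} into the bounds~\eqref{eq:LB_main}--\eqref{eq:UB_main}, extract the $1/E^{M}$ and $1/E^{2M}$ asymptotics from the determinants, and locate the crossover at $L\sim \log E/\log(4/3)$, including the observation that the configurations dropped for $M\ge 2$ (those with some $z_j\in\{0,1\}$) carry an extra $2^{-L}$ suppression and are therefore harmless in both regimes. The only blemish is the identity $\det(4\bK+\bI/E)=4^{2M}\det(\bK)^2[1+\mathcal{O}(1/E)]$, which should read $4^{2M}\det(\bK)[1+\mathcal{O}(1/E)]$, so the leading constant of $C_1^{\rm Gauss}$ is $\sqrt{\det \bK}$ rather than $1$ --- a constant-prefactor slip that does not affect the claimed $E$-scaling.
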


\begin{figure}[t]
    \centering
    \includegraphics[width=0.45\textwidth]{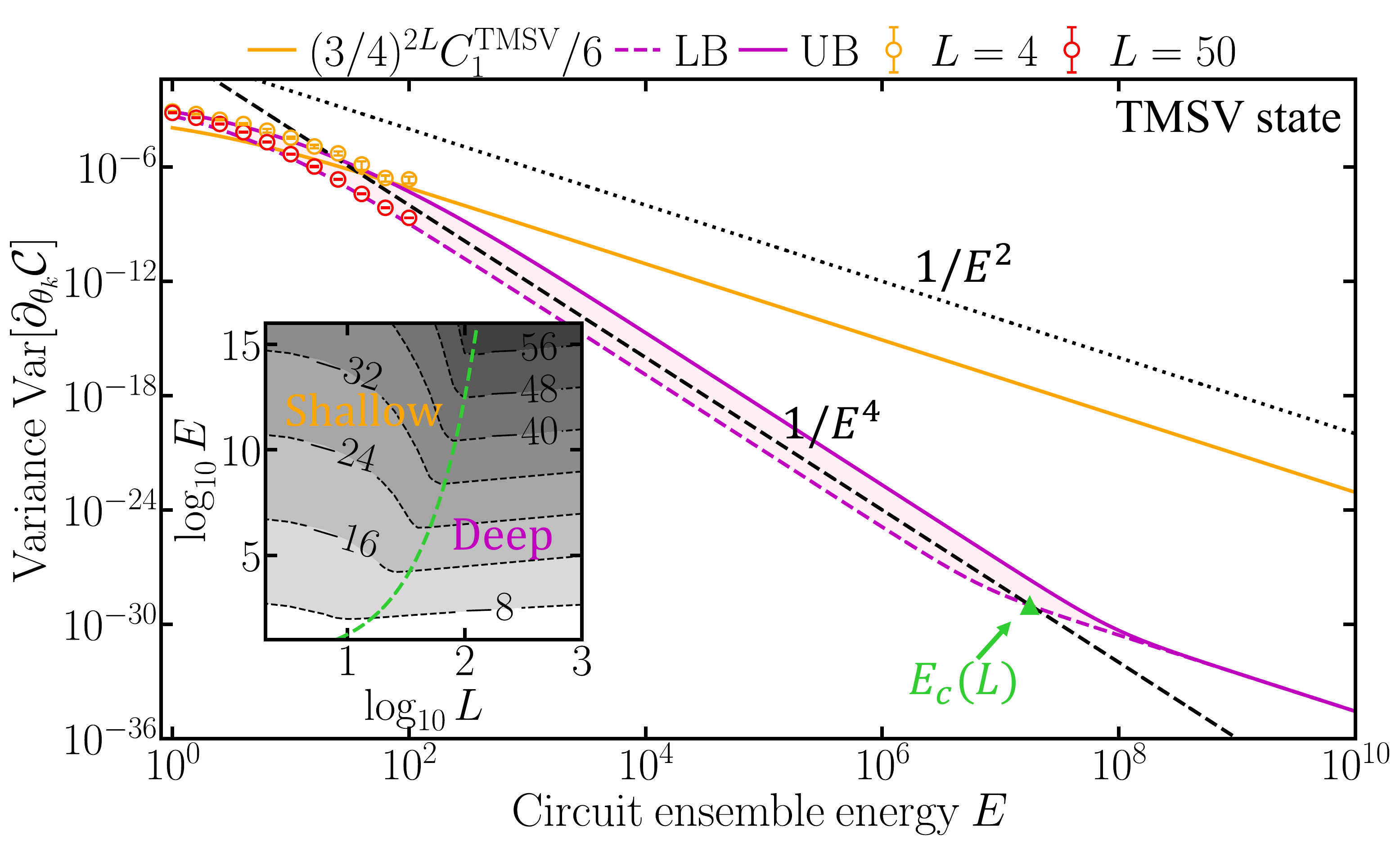}
    \caption{Variance of gradient ${\rm Var}[\partial_{\theta_{k}}\calC]$ at $k=ML/2$ in preparation of a TMSV state $\ket{\zeta}_{\rm TMSV}$ with $\zeta = \sinh^{-1}(2)$. Orange and red dots with errorbars show numerical results of variance in shallow and deep circuits. Orange solid curve represents the $(3/4)^{2L}C_1^{\rm TMSV}/6$ for reference; the dashed and solid magenta curves show the lower and upper bounds in Ineqs.~\eqref{eq:LB_main},~\eqref{eq:UB_main}. Black dotted and dashed lines indicate the scaling of $1/E^2$ and $1/E^4$. Inset shows the logarithm in base ten upper bound versus circuit depth and energy. Green triangle (main) and line (inset) show the corresponding boundary of variance at $E_c(L)$ and $\ell_c(E)$.}
    \label{fig:var_tmsv}
\end{figure}

As an example, we consider a multipartite entangled distributed squeezed state generated by passing a single-mode squeezed vacuum over a random beamsplitter array (a global random passive Gaussian unitary), which is known to be typically highly entangled in the study of continuous-variable quantum information scrambling~\cite{zhang2021entanglement}. They are also the crucial form of entanglement that empowers distributed quantum sensing applications~\cite{zhuang2018distributed,zhang2021distributed,zhuang2019physical,xia2021quantum,brady2022entangled,xia2023entanglement}. For a fixed number of modes $M=10$, in Fig.~\ref{fig:ruc}(a), we see that the correlators $C_1^{\rm Gauss}$ (orange) and $C_2^{\rm Gauss}(\bm z)$ with $\bm z=\{1/2,\cdots,1/2\}$ (red) approach the scaling of $1/E^{M}$ (orange dotted) and $1/E^{2M}$ (red dashed) separately. On the other hand, with an asymptotic energy of $E=10^3$, we see the two correlators decays exponentially with the mode number $M$ in Fig.~\ref{fig:ruc}(b). In this case, however, the direct evalaution of gradients is challenging due to the large number of $M=10$ modes.

\begin{figure}[t]
    \centering
    \includegraphics[width=0.45\textwidth]{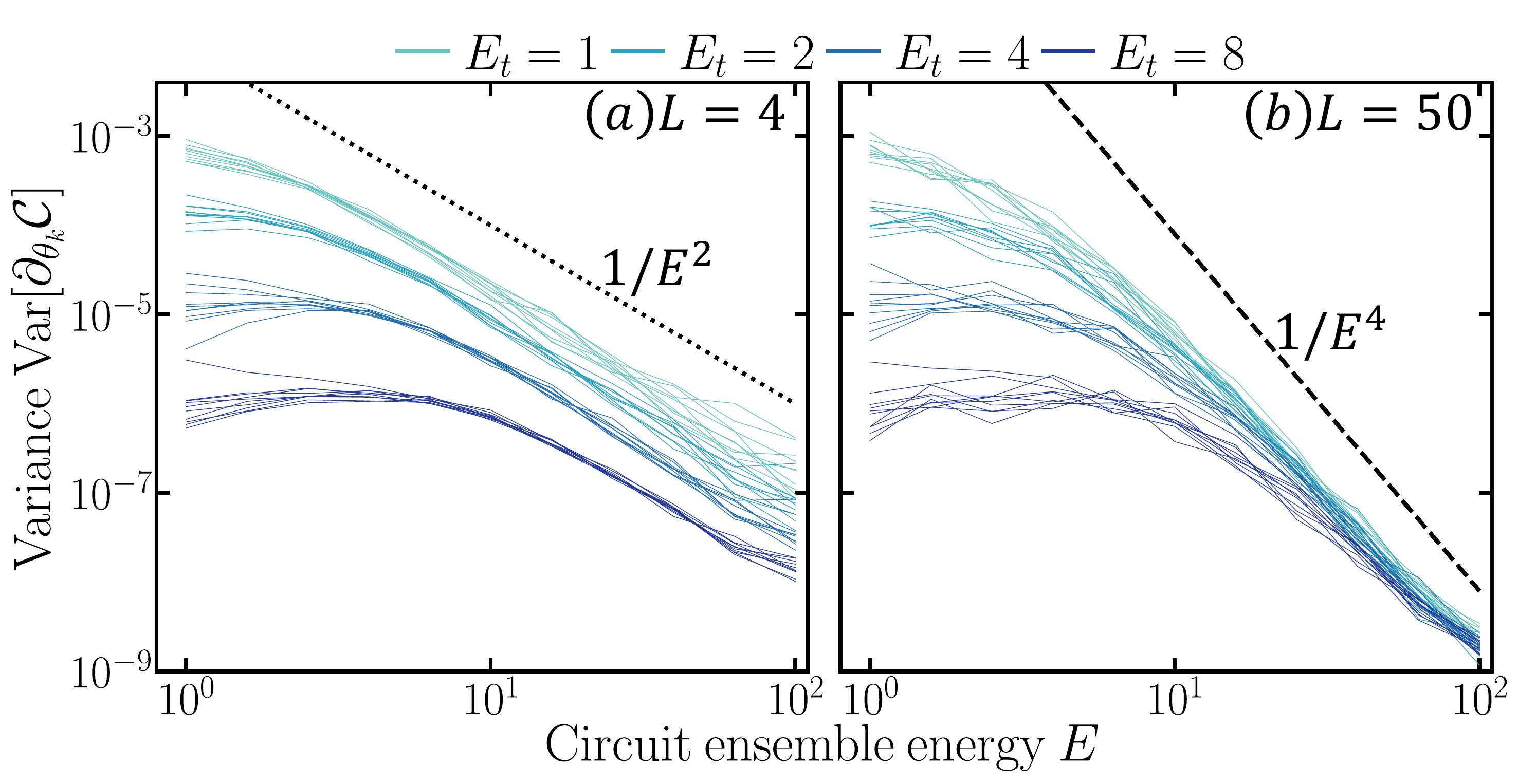}
    \caption{Variance of gradient ${\rm Var}[\partial_{\theta_{k}}\calC]$ at $k=ML/2$ in preparation of random two-mode CV states $\ket{\psi}_{\bm m}=\sum_{n_1,n_2} b_{n_1,n_2} \ket{n_1}_{m_1}\ket{n_2}_{m_2}$ with $L=4$ (a) and $L=50$ (b) circuits. Curves with same color show the variance of different sample target states. Black dotted and dashed lines in (a) and (b) represent the scaling of $1/E^2$ and $1/E^4$. In the calculation, we have chosen $\epsilon=0.1$ and $n_c\sim 2 E_t$. 
    }
    \label{fig:var_rand2}
\end{figure}

To evaluate the gradient of CV VQCs in preparation of Gaussian states for a direct comparison with Theorem~\ref{theorem:multi}, we consider the two-mode squeezed vacuum (TMSV) states, the CV analog of Bell states. A TMSV state $\ket{\zeta}_{\rm TMSV}$ is generated by applying a two-mode squeezing operator $S_2(\zeta) = \exp[\zeta(m_1 m_2- m_1^\dagger m_2^\dagger)/2]$ on vacuums $\ket{0}_{m_1}\ket{0}_{m_2}$, and has energy per mode $E_t = \sinh^2(\zeta)$. The correlators can be found utilizing Eqs.~\eqref{eq:C1_gauss_modes_main},~\eqref{eq:C2_gauss_modes_main} as
\begin{align}
    C_1^{\rm TMSV} &= \frac{\sech^4(\zeta)}{G_1(E)}, \label{eq:C1_tmsv_main}\\
    C_2^{\rm TMSV}(z_1, z_2) &= \frac{\sech^4(\zeta)}{G_2(z_1, z_2)G_2(1-z_1,1-z_2)}, \label{eq:C2_tmsv_main}
\end{align}
where $G_2(z_1,z_2) = 1+2(z_1+z_2)E + 4\sech^2(\zeta)z_1 z_2 E^2$. In the asymptotic region, the correlators show the scaling of $1/E^2$ and $1/E^4$ separately, and thus correspond to the scaling of gradient variance.
We compare the results above to numerical simulation in Fig.~\ref{fig:var_tmsv}, and see good agreement in asymptotic region of $E$ for both shallow and deep circuits, while the variance in shallow circuits with finite energy $E$ could deviate from the asymptotic predictions.


Similar to the single-mode case, the evaluation of correlators for non-Gaussian states is in general challenging. We conjecture that for general non-Gaussian states the barren plateau also holds, and support it with numerical results of preparation of randomly generated target states---a natural generalization to the one-mode case studied in Fig.~\ref{fig:var_rand}, $\ket{\psi}_{\bm m} \propto \sum_{n_1,n_2=0}^{n_c} b_{n_1,n_2}\ket{n_1}_{m_1}\ket{n_2}_{m_2}$, where each $b_{n_1,n_2}\sim \calN_{2}^{\rm C}$ is randomly chosen. We post-select states within the energy window $[E_t- \epsilon,E_t+\epsilon]$. With the target states generated, we evaluate the gradient variance versus circuit energy for different choices of target state energy $E_t$ (indicated by the color) in Fig.~\ref{fig:var_rand2}. Again, the scaling of $\sim 1/E^2$ in shallow circuits (left) and $\sim 1/E^4$ in deep circuits (right) are verified. However, numerical simulation for non-Gaussian states with more modes is still challenging due to the enormous demand of computing resource.

\begin{figure}[t]
    \centering
    \includegraphics[width=0.45\textwidth]{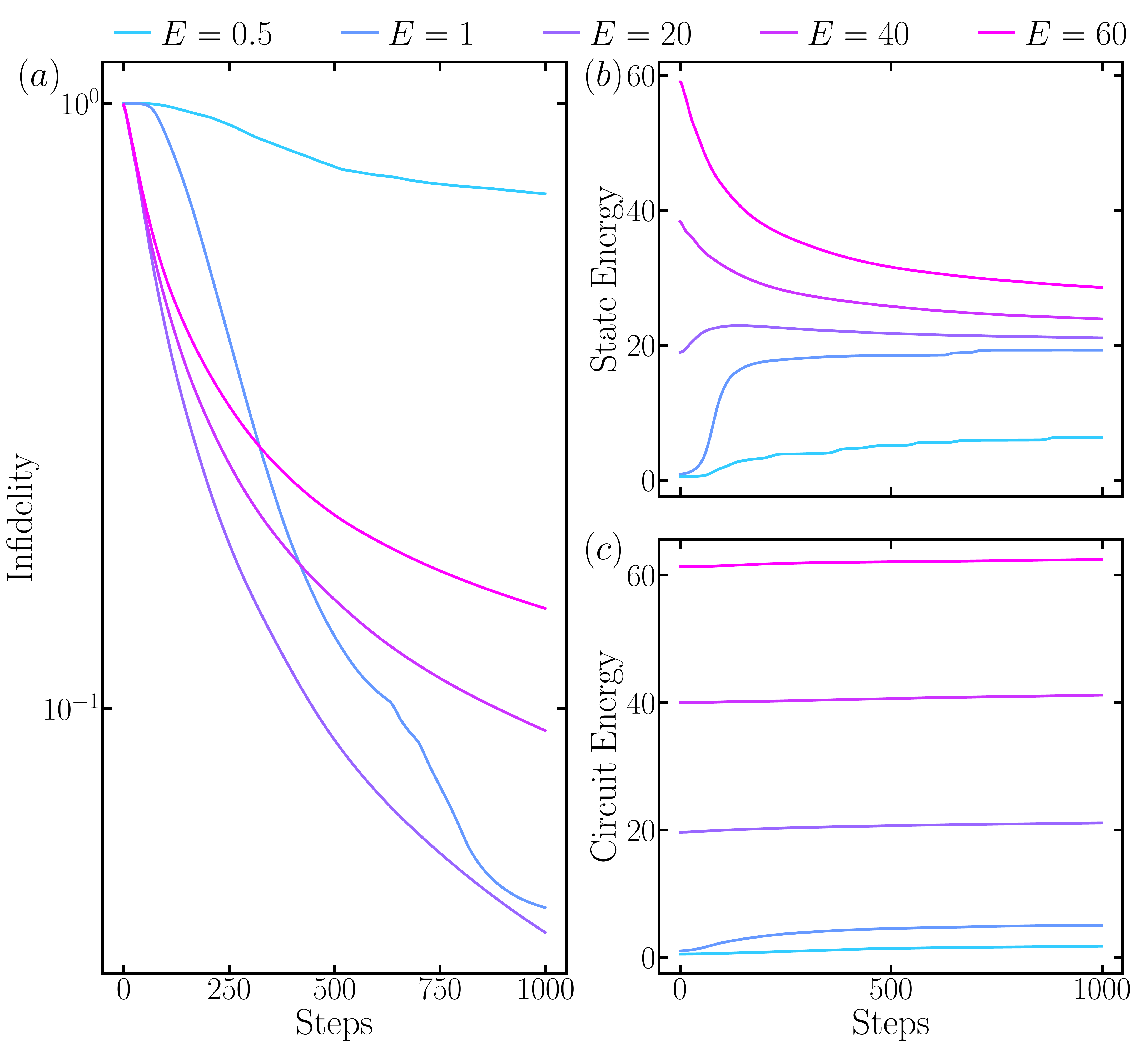}
    \caption{Training for Fock state $\ket{\psi}_m = \ket{20}_m$ with a $L=50$ CV VQC initilized with different ensemble energy $E$. We show (a) average infidelity of output state, (b) average output state energy and (c) average circuit energy $\sum_{j=1}^L |\beta_j|^2$ versus training steps.}
    \label{fig:training}
\end{figure}

\subsection{Strategies to circumvent training issues}
Barren plateau in general creates a challenge in training VQCs to solve problems. Although general approaches of entirely solving the training issues seem out of reach due to complexity arguments, problem-specific heuristics is promising in mitigating barren plateau. 
For DV VQCs, various methods~\cite{grant2019initialization,volkoff2021large,kiani2022learning,stefan2022,cichy2022perturbative,liu2022mitigating} have been proposed to mitigate barren plateau in training. 

In the case of the CV VQCs, the random ensemble $\calU_{E,L,M}$ has a unique tunable parameter---the circuit energy $E$---besides the circuit depth, thanks to the infinite dimensional Hilbert space. Therefore, different random initialization strategies can be adopted by varying the circuit energy $E$. As shown in Fig.~\ref{fig:training}(a) for the preparation of a Fock number state, when we adopt different initial circuit energy $E$, the training history of the cost function (state infidelity) is drastically different. When the circuit energy $E$ roughly matches the target state energy $E_t=20$, we see the best convergence. This is due to the peak of variance of gradient shown in Fig.~\ref{fig:var_states}(b) in Fock state preparation.
As expected, the decrease of infidelity is also reflected from the ensemble state energy which decays to the target state energy, shown in Fig.~\ref{fig:training}(b). In contrast, we also plot the circuit energy, defined as $\sum_{\ell=1}^L |\beta_\ell|^2$, in Fig.~\ref{fig:training}(c), and all of them changed only slightly during the training. Similarly, we also verify the strategy in the preparation of an SMSV state (see Appendix~\ref{app:one_mode_gaussian}), initialization with the proper energy also improves the convergence. As the gradients are maximal with zero energy in that case, low energy initializations lead to the best training convergence.

Overall, by initializing the circuit at low energy, one can mitigate the barren plateau to speed up the convergence. The best initial energy is, however, target state dependent. For instance, to prepare a Gaussian state, the quadrature mean of target state affects monotonicity of variance of gradient in the non-asymptotic region of $E$ (see Eqs.~\eqref{eq:C1_gauss_main} and~\eqref{eq:C2_gauss_main}): for zero-mean cases, the gradients keep decreasing with increasing circuit energy, as we see in Fig.~\ref{fig:var_tmsv} for TMSV and similarly for SMSV in Appendix~\ref{app:one_mode_gaussian}; for non-zero mean cases, the gradients are maximal when the circuit energy $E$ matches the target state energy $E_t$, as we see in Fig.~\ref{fig:var_states}(a). For general states, both monotonic decreasing cases and peaked cases can be found in Fig.~\ref{fig:var_rand} for random CV states, while the number state is found to be peaked in gradient variance in Fig.~\ref{fig:var_states}(b). In practice, when training the circuit to prepare a specific state, one can spend some computation resource in evaluating the gradients in the low energy region ($E\lesssim E_t$) to pick the best initial point before the actual training, which can drastically speed up the convergence.

\section{Discussion}

In this paper, we explore the trainability of CV VQCs implemented with universal control based on ECD gates. Through examples of preparing $M$-mode general Gaussian states and Fock number states, we analytically identify the barren plateau phenomena that the variance of the gradient decays with the circuit energy ${\rm Var}\sim 1/E^M$ in shallow circuit, while ${\rm Var}\sim 1/E^{2M}$ in deep ones. The barren plateau is also numerically extended to arbitrary non-Gaussian single-mode and two-mode CV states. To mitigate the barren plateau, we propose a strategy by tuning the ensemble energy in initialization to match the behavior of gradient variance, which is shown to be able to boost performance. 

Finally, we point out a few open problems. We have focused on the gradients with respect to qubit rotations to study the trainability of CV VQCs (as low trainability in part of the parameters suffices to demonstrate a barren plateau), it is unknown how the gradients with respect to the displacement parameters in the ECD gates decay as the energy and number of modes. It is also an important task to generalize our results to general tasks other than state preparation.
In this regard, we believe that the energy-dependent barren plateau can be generalized to the training including any bounded multi-mode CV observables. As any bounded operator acting on qumodes has a Glaubenr–Sudarshan P representation~\cite{mehta1967diagonal,vourdas2006analytic} in coherent state basis, the expectation value of the bounded operator can then be obtained by integration over coherent states' expectation values. Therefore, arbitrary bounded cost function can always be interpreted as a weighted average (with possible complex weight) of the cost function for coherent states, and the energy-dependent barren plateau in state preparation should typically represent the trainability involving any bounded operator. Beyond barren plateau, it is also of interest to explore other challenges such as traps~\cite{anschuetz2022quantum} in training in the CV VQCs.




\begin{acknowledgements}
This project is supported by the NSF CAREER Award CCF-2142882. QZ also acknowledges support from Defense Advanced Research Projects Agency (DARPA) under Young Faculty Award (YFA) Grant No. N660012014029, Office of Naval Research (ONR) Grant No. N000142312296 and Cisco Systems, Inc.. 
This research was supported in part by the National Science Foundation under Grant No. NSF PHY-1748958, during QZ's participation in KITP DYNISQ22 workshop.

QZ proposed the study. BZ performed the analyses and generated the figures, under the supervision of QZ. Both authors wrote the manuscript.
\end{acknowledgements}

%

\appendix 

\tableofcontents

\section{Gaussian states}
\label{app:Gaussian_intro}
Here we provide a succinct introduction to Gaussian states. More details can be found in Ref.~\cite{weedbrook2012gaussian}, the convention of which is utilized here. 

A system consisting of $M$ modes is described by $M$ annihilation and creation operators, $\{m_j,m^\dagger_j\}_{j=1}^M$, and they satisfy the commutation relation $[m_j,m^\dagger_{j^\prime}] = \delta_{j,j^\prime}$. One can also describe it via the position and momentum operators in the phase space $q_j = m_j+m_j^\dagger$ and $p_j = i(m_j^\dagger - m_j)$. Those quadratures can be grouped together to form a quadrature vector as $\mathcal{X}=(q_1,p_1,\dots,q_M,p_M)^T$. The first and second moments, which are also known as the mean quadrature and covariance matrix (CM) as
\begin{align}
    \overline{\mathcal{X}} &= \braket{\mathcal{X}}\\
    \bV_{ij} &= \frac{1}{2}\braket{\{q_i-\braket{q_i},q_j-\braket{q_j}\}},
    \label{eq:CM_def}
\end{align}
where $\braket{\cdot}$ represents the expectation and $\{A,B\}$ is the anticommutator of operators $A,B$. Any one-mode pure Gaussian state can be represented as a rotated and displaced squeezed state 
\begin{align}
    \ket{\psi}_m = D(\gamma)R(\tau)S(\zeta)\ket{0}_m ,\label{eq:one_mode_gaussian}
\end{align}
where $R(\tau) = \exp(-i\tau m^\dagger m)$ is the phase rotation and $S(\zeta) = \exp[\zeta(m^2-m^{\dagger 2})/2]$ is the squeezing operator. The displacement operator $D(\beta) = \exp(\beta m^\dagger-\beta^* m)$ satisfies the braiding relation $D(\alpha)D(\beta)=e^{(\alpha \beta^\star-\alpha^\star \beta)/2}D(\alpha+\beta)$. Its mean quadrature and CM are 
\small 
\begin{align}
    \overline{\mathcal{X}} &= (2\Re{\gamma},2\Im{\gamma})^T \label{eq:mean_one_mode}\\
    V &= \begin{pmatrix}
        e^{2\zeta} \sin^2(\tau)+e^{-2\zeta}\cos^2(\tau) & \sin(2\tau) \sinh(2\zeta) \\
        \sin(2\tau) \sinh(2\zeta) &  e^{2\zeta} \cos^2(\tau)+e^{-2\zeta}\sin^2(\tau)
    \end{pmatrix},
    \label{eq:CM_one_mode}
\end{align}
\normalsize
Below we specify the CM and mean for some examples; A coherent state has $\tau=\zeta=0$. Therefore, the CM is reduced to $\bI_2$, a $2\times 2$ identity matrix while the mean can be abitrary; A single-mode squeezed vacuum (SMSV) state has $\gamma=\tau=0$. Therefore, it has zero mean $\overline{\mathcal{X}}=0$ and a diagonal CM $V={\rm diag}(e^{-2\zeta},e^{2\zeta})$.

A two-mode squeezed vacuum (TMSV) state is the maximally entangled Gaussian state for two modes, which can be generated by a two-mode squeezing operator $S_2(\zeta) = \exp[\zeta(m_1m_2-m_1^\dagger m_2^\dagger)/2]$ on a product of vacuum states $\ket{0}_{m_1}\ket{0}_{m_2}$. It also has a zero mean and its CM becomes
\begin{align}
    \bV_{\rm TMSV} = \begin{pmatrix}
        \cosh (2\zeta) \bI_2 &\sinh (2\zeta) \sigma^z\\
        \sinh (2\zeta) \sigma^z &\cosh (2\zeta) \bI_2
    \end{pmatrix},
    \label{eq:CM_tmsv}
\end{align}
where $\bI_2$ is $2\times 2$ identity operator and $\sigma^z$ is Pauli-Z operator.

The fidelity between two Gaussian quantum state $\rho_A, \rho_B$ is fully determined by the mean quadratures $\overline{\mathcal{X}}_A, \overline{\mathcal{X}}_B$ and CMs $V_A,V_B$. Moreover, it can be analytically solved~\cite{marian2012uhlmann,spedalieri2012limit,banchi2015quantum}. When one of the state is pure, it has the simple form as~\cite{spedalieri2012limit}
\begin{align}
    F(\rho_A, \rho_B) &= \left(\Tr{\sqrt{\sqrt{\rho_A}\rho_B\sqrt{\rho_A}}}\right)^2\\
    &= \frac{2^M}{\sqrt{\det(V_A+V_B)}}\exp\left[-\frac{1}{2}{\bm d}^T (V_A + V_B)^{-1} {\bm d}\right],
    \label{eq:Fidelity_gaussian}
\end{align}
where $\bm d \equiv \overline{\mathcal{X}}_A - \overline{\mathcal{X}}_B$.

\section{Universality of ECD gate sets: proof of Lemma~\ref{lemma_universality}}
\label{app:universal_control}

For the CV VQCs in the main text involving interactions between a qubit and a qumode, Ref.~\cite{eickbusch2022fast} has shown that the gate set of ECD gates and single qubit rotations is universal, in the sense that linear combinations of repeated nested commutators of the gate generators cover the full Lie algebra of the qubit-qumode system~\cite{braunstein2005quantum, d2021introduction}. Below, we review the proof. The generators of ECD gates and single qubit rotations are
\begin{equation}
    \calG = \{q\sigma^z, p\sigma^z, \sigma^x, \sigma^y\},
    \label{eq:generators}
\end{equation}
where $q, p$ are position and momentum of the qumode, and $\sigma^c$ with $c\in \{x,y,z\}$ are Pauli operators on the qubit.
First, the commutators between $q\sigma^z, p\sigma^z$ and $\sigma^x, \sigma^y$ produce operators in the form of $q\sigma^c, p\sigma^c$ with $c \in \{x, y, z\}$. To obtain operators of higher order polynomials of $q, p$, one can consider the commutator $[q\sigma^a, q\sigma^b]\propto \epsilon_{abc}q^2\sigma^c$ where $\epsilon_{abc}$ is the three-dimensional Levi-Civita symbol. The commutation can be repeated to generate operators $q^j\sigma^a$ with $j \ge 2$ and similarly for $ p^j\sigma^a$. To obtain the terms coupling quadratures along with Pauli operators, we consider the commutator
\begin{equation}
    [q^{j+1}\sigma^a,p\sigma^b] 
    = 2i\epsilon_{abc}q^{j+1}p\sigma^c + (j+1)\epsilon_{abc}q^{j}\sigma^c,
\end{equation}
assuming $a \neq b$.
Combined with $q^j\sigma^c$, it leads to $p^{j+1}\sigma^c$. By repeating the process on commutator with $p\sigma^b$, one has all polynomial quadrature terms $q^jp^k\sigma^c$ with $c\in\{x,y,z\}$. The last step is to eliminate the Pauli operators by commutator $[q^{j+1}p^k\sigma^a,p\sigma^a] \propto q^jp^k$. Therefore, all unitaries with generators in the form $q^jp^k\sigma^c$ with $\sigma^c\in\{\bI_2, \sigma^x,\sigma^y,\sigma^z\}$ are achieved by the gate sets involving ECD gates and single-qubit rotations.

Now we generalize the universality of ECD gates and single-qubit unitaries to arbitrary $M\ge 1$ qumodes and $N\ge 1$ qubits. The generators of ECD gates (between any qubit-qumode pair) combined with single qubit rotations are
\begin{equation}
    \calG = \left[\bigcup_{\ell=1}^M\bigcup_{r=1}^N \{q_\ell\sigma^z_r,p_\ell\sigma^z_r\}\right]\bigcup\left[\bigcup_{r=1}^N\{\sigma_r^x,\sigma_r^y\}\right].
\end{equation}
For two modes $\ell,\ell^\prime$ that are coupled to a qubit $r$ by ECD gates, the commutator between $q_\ell^{j_\ell}p_\ell^{k_\ell}\sigma_r^a$ and $q_{\ell^\prime}^{j_{\ell^\prime}}p_{\ell^\prime}^{k_{\ell^\prime}}\sigma_r^b$ (assuming $\ell\neq \ell^\prime$) is 
\begin{equation}
\begin{split}
    \left[q_\ell^{j_\ell}p_\ell^{k_\ell}\sigma_r^a, q_{\ell^\prime}^{j_{\ell^\prime}}p_{\ell^\prime}^{k_{\ell^\prime}}\sigma_r^b\right] &= q_\ell^{j_\ell}p_\ell^{k_\ell}q_{\ell^\prime}^{j_{\ell^\prime}}p_{\ell^\prime}^{k_{\ell^\prime}}\left[\sigma_r^a,\sigma_r^b\right]\\
    &\propto q_\ell^{j_\ell}p_\ell^{k_\ell}q_{\ell^\prime}^{j_{\ell^\prime}}p_{\ell^\prime}^{k_{\ell^\prime}}\epsilon_{abc}\sigma_r^c.
\end{split}
\end{equation}
Lastly, the commutator $
\left[q_\ell^{j_\ell}p_\ell^{k_\ell}q_{\ell^\prime}^{j_{\ell^\prime}+1}p_{\ell^\prime}^{k_{\ell^\prime}}\sigma_r^x, p_{\ell^\prime}\sigma_r^x\right] \propto q_\ell^{j_\ell}p_\ell^{k_\ell}q_{\ell^\prime}^{j_{\ell^\prime}}p_{\ell^\prime}^{k_\ell^\prime}$, where any operators in the form $q_\ell^{j_\ell}p_\ell^{k_\ell}q_{\ell^\prime}^{j_{\ell^\prime}}p_{\ell^\prime}^{k_{\ell^\prime}}\sigma_r^c$ with $\sigma_r^c\in\{\bI_2,\sigma_r^x,\sigma_r^y, \sigma_r^z\}$ can be generated. Through repeated process above for the other $M-2$ modes and one can generate arbitrary unitary with generator $\left(\prod_{\ell=1}^M q_\ell^{j_\ell}p_\ell^{k_\ell}\right)\sigma_r^c$ with universal control on $M$ modes and one qubit.
Next, we consider two qubits $r, r^\prime$ connected to one mode $\ell$, the commutator between $q_\ell^{j_\ell}p_\ell^{k_\ell+1}\sigma_r^z$ and $q_\ell\sigma_{r^\prime}^z$ is
\begin{equation}
\begin{split}
    \left[q_\ell^{j_\ell}p_\ell^{k_\ell+1}\sigma_r^z, q_\ell\sigma_{r^\prime}^z\right] &= q_\ell^{j_\ell}\left[p_\ell^{k_\ell+1},q_\ell \right]\sigma_r^z\sigma_{r^\prime}^z\\
    &\propto q_\ell^{j_\ell}p_\ell^{k_\ell}\sigma_r^z\sigma_{r^\prime}^z.
\end{split}
\end{equation}
With the commutators between $q_\ell^{j_\ell}p_\ell^{k_\ell}\sigma_r^z\sigma_{r^\prime}^z$ and $\{\sigma_r^x, \sigma_{r^\prime}^x, \sigma_r^y, \sigma_{r^\prime}^y\}$, we have the form $q_\ell^{j_\ell}p_\ell^{k_\ell}\sigma_r^c\sigma_{r^\prime}^{c^\prime}$ with $\sigma_r^c,\sigma_{r^\prime}^{c^\prime} \in \{\bI_2, \sigma_r^x, \sigma_r^y, \sigma_r^z\}$. By repeating the process discussed above, we can generate all unitaries whose generator is in the form of $q_\ell^{j_\ell}p_\ell^{k_\ell}\left(\prod_r \sigma_r^c\right)$. Combined with the results with $M$ modes and one qubit, we finally obtain generators in the form
$
\left(\prod_{\ell =1}^M q_\ell^{j_\ell}p_\ell^{k_\ell}\right)\left(\prod_{r=1}^N \sigma_r^c\right)
$
where the universal control is showed to be performed on the system with $M$ modes and $n$ qubits.

\begin{widetext}

\section{Representation of states: single-mode case}
\label{app:state_repre}

In this section, we present the proof for the representation of the single-mode energy-regularized ensemble of states in Eq.~\eqref{eq:state_ensemble}. The generalization to the multi-mode case is presented in Appendix~\ref{app:multi-mode}. We first prove that the representation holds for all $L$-layer circuits, and then provide analysis on the energy-regularization. 

To simplify the notation, we define ${\bm \beta} = (\beta_1, \dots, \beta_L)^T$, ${\bm \theta}= (\theta_1, \dots, \theta_{L})^T$, ${\bm \phi}= (\phi_1, \dots, \phi_{L})^T$ and the overall parameters $\bm x=({\bm \beta},{\bm \theta}, {\bm \phi})$.

Consider a single-mode qubit-qumode variational circuit consisting of ECD blocks, in the qubit $\ket{0}$ and $\ket{1}$ basis, each ECD block can be written in the matrix form as
\begin{equation}
    U_{\rm ECD}(\beta)U_{\rm R}(\theta, \phi)=\begin{pmatrix}
    e^{i\phi}\sin\frac{\theta}{2}D(-\beta) & \cos\frac{\theta}{2}D(-\beta)\\
    \cos\frac{\theta}{2}D(\beta) & e^{i(\pi-\phi)}\sin\frac{\theta}{2}D(\beta)
    \end{pmatrix},
    \label{eq:block_matrix}
\end{equation}
where for convenience, we relabel the variable as $\phi-\pi/2\rightarrow\phi$, and we use this definition through the following. The output state from unitary $U = \prod_{\ell=1}^L U_{\rm ECD}(\beta_\ell)U_{\rm R}(\theta_\ell, \phi_\ell)$ on the input state $\ket{0}_q\ket{0}_m$ is
\begin{align}
    \ket{\psi(\bm \beta, \bm \theta, \bm \phi)}_{q,m} &= U(\bm \beta,\bm \theta, \bm \phi)\ket{0}_q\ket{0}_m\nonumber\\
    &= \sum_{a=0}^1 \sum_{\bm s} w_{\bm s,a}(\bm \theta, \bm \phi)\ket{a}_q e^{i\chi_{\bm s}(\bm \beta)}\ket{(-1)^a \bm s\cdot \bm \beta}_m
    \equiv\sum_{a=0}^1 \sum_{\bm s} w_{\bm s,a}(\bm \theta, \bm \phi)\ket{a}_q \ket{B_{\bm s,a}}_m \label{eq:psi}
\end{align}
where the length-$L$ sign vector $\bm s$ is defined as $\bm s=(\bm s_{1:L-1},-1)$ with $\bm s_{1:L-1}\in \{-1,1\}^{L-1}$. We absorb the extra phase in qumode ket vector as $\ket{B_{\bm s,a}}\equiv e^{i\chi_{\bm s}(\bm \beta)}\ket{(-1)^a \bm s\cdot \bm \beta}_m$ for convenience, where the phase is defined to be
\begin{align}
    \chi_{\bm s}(\bm \beta) \equiv \sum_{\ell=1}^{L-1}\sum_{\ell^\prime=\ell+1}^{L}\bm s_\ell \bm s_{\ell^\prime} (\Re{\beta_\ell}\Im{\beta_{\ell^\prime}}-\Im{\beta_\ell}\Re{\beta_{\ell^\prime}}), \label{eq:chi_def}
\end{align}
Via defining $v_{\bm s,a}(\bm \theta,\bm \phi, \bm \beta)\equiv e^{i\chi_{\bm s}(\bm \beta)}w_{\bm s,a}(\bm \theta, \bm \phi)$, Eq.~\eqref{eq:psi} is equivalent to Eq.~\eqref{eq:state_ensemble} in the main paper.
In Eq.~\eqref{eq:psi}, $\ket{a}_q$ is the qubit state in computational basis. The weight for each of the superpositions is 
\begin{align}
    w_{\bm s,a}(\bm \theta, \bm \phi) = e^{i\Phi_{\bm s,a}(\bm \phi)}T_{\bm s,a}(\bm \theta),
    \label{eq:psi_w}
\end{align}
where
\begin{align}
    \Phi_{\bm s,a}(\bm \phi) &\equiv a(n_{\bm s}\pi+\phi_1)+\mathbb{P}(a)\sum_{\ell=1}^L \left({\bf d}\bm s_\ell-1\right)\left(\bm s_\ell\phi_\ell-\delta_{\bm s_\ell,1}\pi\right), \label{eq:Phi_def}\\
    T_{\bm s, a}(\bm \theta) &\equiv \prod_{\ell=2}^L \sin\left(\frac{\theta_\ell+{\bf d}\bm s_\ell\pi}{2}\right) \sin\left(\frac{\theta_1}{2}+\frac{(\mathbb{P}(a)\bm  s_1+1)\pi}{4}\right).  \label{eq:T_def}
\end{align}
There are some notations to be explained. 
We define $\mathbb{P}(x) \equiv (-1)^x$ as the parity of a variable $x$, which equals $\pm 1$ given $x$ is even/odd. We define the difference sign vector ${\bf d}\bm s$ to represent the change of signs in the vector $\bm s$---the $\ell$th element of the difference sign vector is 
\begin{align}
    {\bf d}\bm s_\ell \equiv |\bm s_{\ell}-\bm s_{\ell-1}|/2,
    \label{eq:sign_diff}
\end{align}
which is zero when $\bm s_{\ell+1}=\bm s_\ell$ but $1$ otherwise. Note that $(-1)^{{\bf d}\bm s_\ell}=\bm s_{\ell}\times \bm s_{\ell-1}$. As $\bm s_L= -1$ always, we assign ${\bf d}\bm s_1 = |\bm s_1 - \bm s_L|/2 = (\bm s_1 + 1)/2$ such that it reflects the value of $\bm s_1$. 

We can find that the size of $\{{\bf d}\bm s_\ell|{\bf d}\bm s_\ell=0\}_{\ell\ge 2}$ is 
\begin{equation}
    n_{\bm s} = L-1-\sum_{\ell=2}^L {\bf d}\bm s_\ell
\end{equation}
since ${\bf d}\bm s_\ell = 0,1$. We would like to comment that $n_{\bm s}$ also equals the number of $\{\phi_\ell\}_{\ell \ge 2}$ appearing in $\Phi_{\bm s,a}$ with nonzero coefficient and the number of $\{\sin(\theta_\ell/2)\}_{\ell\ge 2}$ appearing in $T_{\bm s,a}$. From Eq.~\eqref{eq:Phi_def}, as the coefficient of each $\phi_\ell$ in $\Phi_{\bm s,a}$ with $\ell\ge 2$ is $\mathbb{P}(a)({\bf d}\bm s_\ell-1)\bm s_\ell$, since $\mathbb{P}(a)=(-1)^a$ and $\bm s_\ell=\pm 1$, only ${\bf d}\bm s_\ell = 1$ can make the coefficient be zero, and thus $n_{\bm s}$ counts the number of $\{\phi_\ell\}_{\ell\ge 2}$ with nonzero coefficient. Moreover, one can also see from Eq.~\eqref{eq:T_def} that if ${\bf d}\bm s_\ell=0$ with $\ell\ge 2$, the function of $\theta_\ell$ in $T_{\bm s,a}$ is in the form of $\sin(\theta_\ell/2)$ but $\cos(\theta_\ell/2)$ if ${\bf d}\bm s_\ell = 1$. To count the total number of $\{\phi_\ell\}_{\ell=1}^L$ with nonzero coefficient in $\Phi_{\bm s,a}$, denoted as $N_{\bm s,a}$, from Eq.~\eqref{eq:Phi_def} the coefficient of $\phi_1$ is $a+\mathbb{P}(a)({\bf d}\bm s_1-1)\bm s_1 =a+ (-1)^a({\bf d}\bm s_1-1)(2{\bf d}\bm s_1-1) =1-a-(-1)^a{\bf d}\bm s_1 \ge 0$, where in the first equation we utilize the definition $\bm s_1=2{\bf d}\bm s_1-1$ and in the second equation we utilize $({\bf d}\bm s_1)^2 = {\bf d}\bm s_1$ and $a+(-1)^a=1-a$. As the coefficient of $\phi_1$ is either zero or one, the total number is just 
\begin{align}
    N_{\bm s,a} = n_{\bm s}+1-a-(-1)^a {\bf d}\bm s_1.
\end{align}
Note that it is also the total number of $\{\sin(\theta_\ell/2)\}_{\ell=1}^L$ in $T_{\bm s,a}$, because $\sin(\theta_1/2)$ exists in $T_{\bm s,a}$ only if $1-(\mathbb{P}(a)\bm s_1+1)/2 = 1$. Using the identities $\bm s_1 = 2{\bf d}\bm s_1-1$ and $1+(-1)^a = 2a$, we have $1-(-1)^a{\bf d}\bm s_1+a$, which is exactly the terms following $n_{\bm s}$ in $N_{\bm s,a}$ we have above.

To help understanding the notations, we provide an example of $L=2$ as following
\begin{align}
    &\ket{\psi(\bm \beta, \bm \theta, \bm \phi)}_{q,m}\nonumber\\
    &= \ket{0}_q\left(e^{i(\phi_1+\phi_2+\beta_1^{\rm R}\beta_2^{\rm I}-\beta_1^{\rm I}\beta_2^{\rm R})}\sin\frac{\theta_1}{2}\sin\frac{\theta_2}{2}\ket{-\beta_1-\beta_2}_m + e^{i(\beta_1^{\rm I}\beta_2^{\rm R}-\beta_1^{\rm R}\beta_2^{\rm I})}\cos\frac{\theta_1}{2}\cos\frac{\theta_2}{2}\ket{+\beta_1-\beta_2}_m\right)\nonumber\\
    &+\ket{1}_q\left(e^{i(\pi-\phi_2+\beta_1^{\rm R}\beta_2^{\rm I}-\beta_1^{\rm I}\beta_2^{\rm R})}\cos\frac{\theta_1}{2}\sin\frac{\theta_2}{2}\ket{+\beta_1+\beta_2}_m + e^{i(\phi_1-\beta_1^{\rm R}\beta_2^{\rm I}+\beta_1^{\rm I}\beta_2^{\rm R})}\sin\frac{\theta_1}{2}\cos\frac{\theta_2}{2}\ket{-\beta_1+\beta_2}_m\right).
    \label{eq:psi_d2}
\end{align}
where $\beta_{1,2}^{\rm R/ \rm I}$ denotes the real or imaginary part of $\beta_{1,2}$ to shorten the formula.
One can check that it agrees with the representation of state in Eq.~\eqref{eq:psi} with weight following the definitions from Eq.~\eqref{eq:psi_w}.

Below we present the detailed proof of the state representation of Eq.~\eqref{eq:psi} (equivalently Eq.~\eqref{eq:state_ensemble} in the main paper) by mathematical induction.

\begin{proof}
First, we start from $L= 2$: in this case it has already been shown in Eq.~\eqref{eq:psi_d2} that Eq.~\eqref{eq:psi} is true. Then we suppose that for an $L$-layer circuit, it is also in the form of Eq.~\eqref{eq:psi}, and for the $L+1$-layer circuit we have
\begin{align}
    &\ket{\psi_{L+1}(\bm \beta, \bm \theta, \bm \phi)} = U_{\rm ECD}(\beta_{L+1})U_{\rm R}(\theta_{L+1},\phi_{L+1})\ket{\psi_L(\bm \beta, \bm \theta, \bm \phi)}\nonumber\\
    &= \begin{pmatrix}
    e^{i\phi_{L+1}}\sin\left(\frac{\theta_{L+1}}{2}\right)D(-\beta_{L+1}) & \cos\left(\frac{\theta_{L+1}}{2}\right)D(-\beta_{L+1})\\
    \cos\left(\frac{\theta_{L+1}}{2}\right)D(\beta_{L+1}) & e^{i(\pi-\phi_{L+1})}\sin\left(\frac{\theta_{L+1}}{2}\right)D(\beta_{L+1})
    \end{pmatrix}
    \begin{pmatrix}
        \sum_{\bm s} w_{\bm s,0}(\bm \theta, \bm \phi)\prod_{\ell=1}^L D(\bm s_\ell \beta_\ell)\ket{0}_m\\
        \sum_{\bm s} w_{\bm s,1}(\bm \theta, \bm \phi) \prod_{\ell=1}^L D(-\bm s_\ell \beta_\ell)\ket{0}_m
    \end{pmatrix}\nonumber\\
    &=
    \begin{pmatrix}
        \sum_{\bm s} e^{i(\Phi_{\bm s,0}+\phi_{L+1})}\sin\left(\frac{\theta_{L+1}}{2}\right)T_{\bm s,0}D(-\beta_{L+1})\prod_{\ell=1}^L D(\bm s_\ell \beta_\ell)\ket{0}_m+\sum_{\bm s} e^{i\Phi_{\bm s,1}}\cos\left(\frac{\theta_{L+1}}{2}\right)T_{\bm s,1}D(-\beta_{L+1})\prod_{\ell=1}^L D(-\bm s_\ell \beta_\ell)\ket{0}_m\\
        \sum_{\bm s} e^{i\Phi_{\bm s,0}}\cos\left(\frac{\theta_{L+1}}{2}\right)T_{\bm s,0}D(\beta_{L+1})\prod_{\ell=1}^L D(\bm s_\ell \beta_\ell)\ket{0}_m + \sum_{\bm s} e^{i(\Phi_{\bm s,1}+\pi-\phi_{L+1})}\sin\left(\frac{\theta_{L+1}}{2}\right)T_{\bm s,1}D(\beta_{L+1})\prod_{\ell=1}^L D(-\bm s_\ell \beta_\ell)\ket{0}_m
    \end{pmatrix}.
    \label{eq:psi_d+1}
\end{align}
Note that the product of displacement operator above is evaluated from $\beta_1$ to $\beta_L$.
To compare with the representation in Eq.~\eqref{eq:psi}, we define $\bm s^\prime$ to be $(\pm \bm s, -1)$, which actually covers all possible cases in the definition of sign vector of length $L+1$. To prove the result, all we need to do is to show that Eq.~\eqref{eq:psi_d+1} agrees with the representation in Eq.~\eqref{eq:psi}. 

We start from the displacement on the qumode. The total displacement on qumode in Eq.~\eqref{eq:psi_d+1} can be directly seen that it satisfies $(-1)^a {\bm s}^\prime \cdot \bm \beta$ with $\bm s=(\pm \bm s,-1)$. There is an phase generated due to the braiding relation of displacement operators $D(\alpha) = e^{\alpha m^\dagger -\alpha^* m} = e^{-|\alpha|^2/2}e^{\alpha m^\dagger}e^{-\alpha^* m}$, and we can directly prove that it is in the form of Eq.~\eqref{eq:chi_def} by
\begin{align}
    & D(\alpha_L)\cdots D(\alpha_1) = e^{-\frac{1}{2}\sum_{\ell=1}^L |\alpha_\ell|^2}e^{\alpha_L m^\dagger}e^{-\alpha_L^* m}\cdots e^{\alpha_1 m^\dagger}e^{-\alpha_1^* m}\nonumber\\
    &=e^{-\frac{1}{2}\sum_{\ell=1}^L |\alpha_\ell|^2} \left(\prod_{\ell=1}^M e^{\alpha_\ell m^\dagger}\right)\left(\prod_{\ell=1}^L e^{-\alpha_\ell^* m}\right)\left(\prod_{\ell=1}^{L-1}e^{-\alpha_\ell(\sum_{\ell^\prime=\ell+1}^L \alpha_{\ell^\prime}^*)}\right)\\
    &= e^{-\frac{1}{2}\sum_{\ell=1}^L |\alpha_\ell|^2} e^{\sum_{\ell=1}^L \alpha_\ell m^\dagger} e^{-\sum_{\ell=1}^L \alpha_\ell^* m} e^{-\sum_{\ell=1}^{L-1}\sum_{\ell^\prime=\ell+1}^L \alpha_\ell \alpha_{\ell^\prime}^*}\\
    &= e^{-\frac{1}{2}\sum_{\ell=1}^L |\alpha_\ell|^2} e^{-\sum_{\ell=1}^{L-1}\sum_{\ell^\prime=\ell+1}^L \alpha_\ell \alpha_{\ell^\prime}^*} e^{\frac{1}{2}|\sum_{\ell=1}^L \alpha_\ell|^2}D\left(\sum_{\ell=1}^L \alpha_\ell\right)\\
    &= e^{i\sum_{\ell=1}^{L-1} \sum_{\ell^\prime=\ell+1}^L (\Re{\alpha_\ell}\Im{\alpha_{\ell^\prime}}-\Im{\alpha_\ell}\Re{\alpha_{\ell^\prime}})} D\left(\sum_{\ell=1}^L \alpha_\ell\right),
\end{align}
where in the second line we perform a reordering to the qumode operators such that all annilation operators follows creation ones, introducing extra phase by the Baker–Campbell–Hausdorff identity $e^{A}e^{B}=e^{B}e^{A}e^{[A,B]}$ (when higher-order commutators are zero). The last line is obtained from expanding every $\alpha_\ell = \Re{\alpha_\ell}+i\Im{\alpha_{\ell}}$. By letting $\alpha_{\ell}=(-1)^a \bm s_\ell \beta_\ell$, we have the formula in Eq.~\eqref{eq:chi_def}.

Next we need to show the weight satisfy the form in Eq.~\eqref{eq:psi_w}. Starting from the first item in the first line of Eq.~\eqref{eq:psi_d+1} which corresponds to $\bm s^\prime = (\bm s,-1)$ and $a=0$, the difference in sign vector is ${\bf d}\bm s^\prime = ({\bf d}\bm s, 0)$, then the phase and amplitude in the weight can be reduced to
\begin{align}
    \Phi_{\bm s,0}+\phi_{L+1} &= \sum_{\ell=1}^L ({\bf d}\bm s_\ell-1)\bm s_\ell \phi_\ell -({\bf d}\bm s_\ell-1)\delta_{\bm s_\ell,1}\pi + \phi_{L+1} = \sum_{\ell=1}^{L+1} ({\bf d}\bm s^\prime_\ell-1)\bm s_\ell^\prime \phi_\ell - ({\bf d}\bm s^\prime_\ell-1)\delta_{\bm s^\prime_\ell,1}\pi = \Phi_{\bm s^\prime, 0},
    \label{eq:Phi_d+1_1}\\
    \sin\left(\frac{\theta_{L+1}}{2}\right)T_{\bm s,0} &= \sin\left(\frac{\theta_{L+1}}{2}\right)\prod_{\ell=2}^L\sin\left(\frac{\theta_\ell+{\bf d}\bm s_\ell \pi}{2}\right)\sin\left(\frac{\theta_1}{2}+\frac{(\bm  s_1+1)\pi}{4}\right) \nonumber\\
   &=\prod_{\ell=2}^{L+1}\sin\left(\frac{\theta_\ell+{\bf d}\bm s^\prime_\ell \pi}{2}\right)\sin\left(\frac{\theta_1}{2}+\frac{(\bm s^\prime_1+1)\pi}{4}\right)= T_{\bm s^\prime, 0}, \label{eq:T_d+1_1}
\end{align}
where in Eqs.~\eqref{eq:Phi_d+1_1} and~\eqref{eq:T_d+1_1} we utilize the fact that ${\bf d}\bm s_{L+1}^\prime = 0$.

The last item in the second line of Eq.~\eqref{eq:psi_d+1} also corresponds to $\bm s^\prime=(\bm s,-1)$ but $a=1$ as the total displacement in qumode is $(-1)^a \bm s\cdot \bm \beta$, therefore the weight becomes 
\begin{align}
    \Phi_{\bm s,1}+\pi-\phi_{L+1} &= n_{\bm s}\pi + \phi_1-\Phi_{\bm s,0}+\pi - \phi_{L+1} = (n_{\bm s}+1)\pi + \phi_1 - \Phi_{\bm s^\prime, 0} = n_{\bm s^\prime}\pi +\phi_1 - \Phi_{\bm s^\prime, 0} = \Phi_{\bm s^\prime, 1} ,\label{eq:Phi_d+1_2}\\
    \sin\left(\frac{\theta_{L+1}}{2}\right)T_{\bm s,1} &= \sin\frac{\theta_{L+1}}{2}\prod_{\ell=2}^L \sin\left(\frac{\theta_\ell+{\bf d}\bm s_\ell \pi}{2}\right)\sin\left(\frac{\theta_1}{2}+\frac{(-\bm  s_1+1)\pi}{4}\right) 
    \nonumber
    \\
    &=\prod_{\ell=2}^{L+1}\sin\left(\frac{\theta_\ell+{\bf d}\bm s^\prime_\ell \pi}{2}\right)\sin\left(\frac{\theta_1}{2}+\frac{(-\bm s^\prime_1+1)\pi}{4}\right)= T_{\bm s^\prime, 1}, \label{eq:T_d+1_2}
\end{align}
where in the second equation of Eq.~\eqref{eq:Phi_d+1_2} we apply result from Eq.~\eqref{eq:Phi_d+1_1}, and the last equation of Eq.~\eqref{eq:Phi_d+1_2} is obtained by directly utilizing the definition of $n_{\bm s^\prime} = L-\sum_{\ell=2}^{L+1}{\bf d}\bm s^\prime_\ell = n_{\bm s}+1$ given ${\bf d}\bm s^\prime = ({\bf d}\bm s, 0)$.

The second item in the first line of Eq.~\eqref{eq:psi_d+1} corresponds to $\bm s^\prime = (-\bm s, -1)$ and $a=0$ where the difference is ${\bf d}\bm s^\prime = (1-{\bf d}\bm s_1, ({\bf d}\bm s)_{2:L}, 1)$. The phase and amplitude in weight are
\begin{align}
     \Phi_{\bm s,1} &= n_{\bm s}\pi + \phi_1 - \sum_{\ell=1}^L ({\bf d}\bm s_\ell-1)(\bm s_\ell \phi_\ell -\delta_{\bm s_\ell,1}\pi) = \left(L-1-\sum_{\ell=2}^L {\bf d}\bm s_\ell\right)\pi+ \phi_1 - \sum_{\ell=1}^L ({\bf d}\bm s_\ell-1)(\bm s_\ell \phi_\ell -\delta_{\bm s_\ell,1}\pi)\nonumber\\
     &= -\sum_{\ell=2}^L ({\bf d}\bm s_\ell-1)\bm s_\ell \phi_\ell + \left[1-({\bf d}\bm s_1-1)\bm s_1\right]\phi_1 - \sum_{\ell=2}^L  ({\bf d}\bm s_\ell-1)(1- \delta_{\bm s_\ell,1})\pi + ({\bf d}\bm s_1-1)\delta_{\bm s_1,1} \pi \nonumber \\
     &= \sum_{\ell=2}^L ({\bf d}\bm s_\ell^\prime-1)\bm s_\ell^\prime \phi_\ell + (1-{\bf d}\bm s^\prime_1\cdot \bm s^\prime_1)\phi_1 - \sum_{\ell=2}^L ({\bf d}\bm s_\ell^\prime-1)\delta_{\bm s^\prime_\ell,1}\pi - {\bf d}\bm s_1^\prime (1-\delta_{\bm s^\prime_\ell,1})\pi \nonumber \\
     &= \sum_{\ell=1}^{L+1}({\bf d}\bm s_\ell^\prime-1)\bm s_\ell^\prime \phi_\ell -\sum_{\ell=1}^{L+1} ({\bf d}\bm s^\prime_\ell-1)\delta_{\bm s^\prime_\ell,1}\pi = \Phi_{\bm s^\prime, 0}, \label{eq:Phi_d+1_3}
     \end{align}
and
     \begin{align}
     \cos\left(\frac{\theta_{L+1}}{2}\right)T_{\bm s,1} &= \cos\left(\frac{\theta_{L+1}}{2}\right)\prod_{\ell=2}^L \sin\left(\frac{\theta_\ell+{\bf d}\bm s_\ell \pi}{2}\right)\sin\left(\frac{\theta_1}{2}+\frac{(1-\bm  s_1)\pi}{4}\right) 
     \nonumber
     \\
     &= \prod_{\ell=2}^{L+1}\sin\left(\frac{\theta_\ell+{\bf d}\bm s^\prime_\ell \pi}{2}\right)\sin\left(\frac{\theta_1}{2}+\frac{(\bm s^\prime_1+1)\pi}{4}\right)= T_{\bm s^\prime, 0}, \label{eq:T_d+1_3}
\end{align}
where in the last line of Eq.~\eqref{eq:Phi_d+1_3}, we apply identities $1-{\bf d}\bm s_1^\prime \cdot \bm s_1^\prime = ({\bf d}\bm s_1^\prime-1)\bm s_1^\prime$ and ${\bf d}\bm s_1^\prime (1-\delta_{\bm s_1^\prime,1}) = ({\bf d}\bm s_1^\prime-1)\delta_{\bm s_1^\prime,1}$. One can easily check those identities by substituting $\bm s_1^\prime = \pm 1$ separately. 

Similarly, the first item in the second line of Eq.~\eqref{eq:psi_d+1} with $\bm s^\prime = (-\bm s, -1)$ but $a=1$ has the weight as
\begin{align}
    \Phi_{\bm s,0} &= n_{\bm s}\pi + \phi_1 - \Phi_{\bm s,1} = n_{\bm s^\prime}\pi +\phi_1 - \Phi_{\bm s^\prime, 0} = \Phi_{\bm s^\prime, 1}\label{eq:Phi_d+1_4}\\
    \cos\left(\frac{\theta_{L+1}}{2}\right)T_{\bm s,0} &= \cos\left(\frac{\theta_{L+1}}{2}\right)\prod_{\ell=2}^L \sin\left(\frac{\theta_\ell+{\bf d}\bm s_\ell \pi}{2}\right)\sin\left(\frac{\theta_1}{2}+\frac{(\bm  s_1+1)\pi}{4}\right) 
    \nonumber
    \\
    &= \prod_{\ell=2}^{L+1}\sin\left(\frac{\theta_\ell+{\bf d}\bm s^\prime_\ell \pi}{2}\right)\sin\left(\frac{\theta_1}{2}+\frac{(1-\bm s^\prime_1)\pi}{4}\right)= T_{\bm s^\prime,1},
\end{align}
where in the second equation of Eq.~\eqref{eq:Phi_d+1_4} we utilize the fact that $n_{\bm s^\prime} = L-\sum_{\ell=2}^{L+1}{\bf d}\bm s_\ell = L-\sum_{\ell=2}^{L}{\bf d}\bm s_\ell-1=n_{\bm s}$ given ${\bf d}\bm s^\prime = (1-{\bf d}\bm s_1, ({\bf d}\bm s)_{2:L}, 1)$.

Therefore, the weight in Eq.~\eqref{eq:psi_w} is proved, and combined with the qumode displacement, we prove that the output state from an $(L+1)$-layer circuit satisfies the form in Eq.~\eqref{eq:psi}.
\end{proof}

We now introduce the energy regularization.
Without losing generality, we consider the displacement on each step following a complex Gaussian distribution, $\beta_\ell \sim \calN_{E/L}^{\rm C}$, or equivalently $\Re{\beta_\ell},\Im{\beta_\ell}\sim \calN_{E/2L}$ with zero mean and variance $E/2L$ such that the ensemble-averaged energy of output state is $E$. To see that, from Eq.~\eqref{eq:psi}, the ensemble-averaged energy of the output state is
\begin{align}
    \mathbb{E}\left[\braket{\psi|m^\dagger m|\psi}\right] &= \sum_{a,a^\prime=0}^1\sum_{\bm s,\bm s^\prime} \mathbb{E}\left[w_{\bm s,a}w^*_{\bm s^\prime,a^\prime}\right]\braket{a^\prime|a}\mathbb{E}\left[e^{i(\chi_{\bm s,a}-\chi_{\bm s^\prime,a^\prime})}\braket{(-1)^{a^\prime}\bm s^\prime \cdot \bm \beta|m^\dagger m|(-1)^a \bm s \cdot \bm \beta}\right]\nonumber\\
    &= \sum_{a=0}^1\sum_{\bm s}\mathbb{E}\left[|w_{\bm s,a}|^2\right]\mathbb{E}\left[|\bm s\cdot \bm \beta|^2\right]\nonumber\\
    &= 2\cdot 2^{L-1}\cdot \frac{1}{2^L} E = E ,\label{eq:total_energy}
\end{align}
where we have applied the identity $\braket{a|a^\prime} = \delta_{a,a^\prime}$, $\mathbb{E}[w^*_{\bm s^\prime,a}w_{\bm s,a}] = \mathbb{E}[|w_{\bm s,a}|^2]\delta_{\bm s,\bm s^\prime} = 1/2^L \delta_{\bm s,\bm s^\prime}$ and
\begin{align}
    \mathbb{E}\left[|\bm s\cdot \bm \beta|^2\right] &=
    \mathbb{E}\left[\Re{\bm s\cdot \bm \beta}^2\right] + \mathbb{E}\left[\Im{\bm s\cdot \bm \beta}^2\right] = 2L\mathbb{E}\left[\Re{\beta_\ell}^2\right] = E,
\end{align}
utilizing the independence of each $\beta_\ell$ in $\bm \beta$ and the symmetry of real and imaginary parts of $\beta_\ell$.

\section{Methods for gradient evaluation}
\label{app:grad_methods}

In this section, we provide some preparation materials for the eventual evaluation of the variance of the gradient in Appendix~\ref{app:variance}.

As stated in the main text, the cost function in general can be written as $\calC = \Tr\left[{O}U\rho_0 U^\dagger\right]$ where $\rho_0$ is the initial state and ${O}$ is the observable. Consider the gradient with respect to the $k$th qubit rotation angle $\theta_k$, then the gradient becomes 
\begin{equation}
    \partial_{\theta_k}\calC = \partial_{\theta_k}\Tr\left[{O}U\rho_0 U^\dagger\right]
    = \frac{-i}{2}\Tr\left[{O} U_{\rm right}[G(\phi_k), U_{\rm left}\rho_0 U_{\rm left}^{\dagger}] U_{\rm right}^\dagger\right]
    = \frac{1}{2}\Big(\braket{O}_{k^{(+1)}} -\braket{O}_{k^{(-1)}} \Big),
\end{equation}
where we denote $G(\phi_k)\equiv \cos\phi_k \sigma^x + \sin\phi_k \sigma^y$, $U_{\rm left} = \prod_{j=1}^{k-1}U_{\rm ECD}(\beta_j)U_{\rm R}(\theta_j,\phi_j)$ as the circuit ahead of $k$th layer and $U_{\rm right}$ as the complement circuit from $k$th to $L$th layer. The last equation is obtained by applying the parameter-shift rule~\cite{mitarai2018quantum},
$
    [G(\phi_k), \rho] = i\left[U_{\rm R}\left(\frac{\pi}{2},\phi_k\right)\rho U_{\rm R}^\dagger\left(\frac{\pi}{2},\phi_k\right)- U_{\rm R}\left(-\frac{\pi}{2}, \phi_k\right) \rho U_{\rm R}^\dagger\left(-\frac{\pi}{2}, \phi_k\right)\right],
$
and $\braket{O}_{k^{(\pm 1)}}$ corresponds to expectation of $O$ with output state from the $L$-layer circuit, where $\theta_k$ is shifted as $\theta_k\rightarrow \theta_k\pm \pi/2$. For convenience, in the following discussion, we denote $w_{\bm s,a,k^{(\mu)}}$ as the weight defined in Eq.~\eqref{eq:psi_w}, where $\theta_k$ is shifted by $\mu \pi/2$ with $\mu=\pm 1$.

It is easy to check that
\begin{equation}
    \mathbb{E}\left[\partial_{\theta_k}\calC\right] = \frac{1}{2}\left(\mathbb{E}\left[\braket{O}_{k^{(+1)}}\right] - \mathbb{E}\left[\braket{O}_{k^{(-1)}}\right]\right) = 0,
\end{equation}
due to the fact that the ensemble average is performed over $\theta_j \in [0,2\pi)$.
From the definition of variance, the variance is reduced to 
\begin{equation}
    {\rm Var}\left[\partial_{\theta_k}\calC\right] = \mathbb{E}\left[(\partial_{\theta_k}\calC)^2\right]
    = \frac{1}{2}\left(\mathbb{E}\left[\braket{O}_{k^{(+1)}}^2\right] -\mathbb{E}\left[\braket{O}_{k^{(+1)}}\braket{O}_{k^{(-1)}}\right]\right),
    \label{eq:variance}
\end{equation}
where again we take $\mathbb{E}[\braket{O}_{k^{(+)}}^2] = \mathbb{E}[\braket{O}_{k^{(-)}}^2]$. 
For the state-preparation task being considered in this paper, operator $O = \ketbra{\phi}{\phi}_q\otimes \ketbra{\psi}{\psi}_m$, the two items in the variance above can be expanded via the output state representation in Eq.~\eqref{eq:psi} as
\begin{align}
     &\mathbb{E}\left[\braket{O}_{k^{(+1)}} \braket{O}_{k^{(\mu)}} \right]\nonumber\\
     &= \sum_{\substack{a,a^\prime,\\b,b^\prime=0}}^1 \braket{\phi|a}\braket{a^\prime|\phi}\braket{\phi|b}\braket{b^\prime|\phi}  \sum_{\substack{\bm s,\bm s^\prime,\\ \bm r,\bm r^\prime}}  \mathbb{E}\left[w_{\bm s,a, k^{(+1)}}w^*_{\bm s^\prime,a^\prime, k^{(+1)}}w_{\bm r,b, k^{(\mu)}}w^*_{\bm r^\prime, b^\prime, k^{(\mu)}}\right]
     \mathbb{E}\left[\braket{\psi|B_{\bm s,a}}\braket{B_{\bm s^\prime,a^\prime}|\psi}\braket{\psi|B_{\bm r,b}}\braket{B_{\bm r^\prime,b^\prime}|\psi}\right].
     \label{eq:Opm_general}
\end{align}

The exact calculation of variance above is hard due to the fact that $w_{\bm s,a,k^{(\mu)}}$ depends on ${\bf d}\bm s$ while $B_{\bm s,a}$ depends on $\bm s$, instead we consider the lower and upper bounds built from the following basic inequalities. For two sets of $N$ real numbers in increasing order $x_1\le x_2 \le \dots \le x_N$ and $y_1\le y_2 \le \dots \le y_N$, there is a well-known rearrangement inequality
\begin{equation}
    \sum_{j=1}^n x_j y_{n+1-j}
    \le
    \sum_{j=1}^n x_{\sigma(j)}y_j 
    \le 
    \sum_{j=1}^n x_j y_j,
    \label{eq:rearrangement_ineq}
\end{equation}
where $\sigma(j) \in \mathcal{S}_n$ is an arbitrary permutation of $n$ elements. In general $x_i$ can be either positive or negative, so we consider a relaxed version of the bounds only under the assumption that $y_j>0$ for all $j$, which decouples the index dependence between $x$ and $y$. The lower and upper bounds are
\begin{subequations}
\begin{align}
    \sum_{j=1}^n x_{\sigma(j)}y_j &\ge \left(\sum_{\substack{j=1\\ x_j\ge 0}}^n x_j\right)y_1 + \left(\sum_{\substack{j=1\\ x_j<0}}^n x_j\right)y_n ,\label{eq:lb}\\
    \sum_{j=1}^n x_{\sigma(j)}y_j &\le \left(\sum_{\substack{j=1\\ x_j\ge 0}}^n x_j\right)y_n + \left(\sum_{\substack{j=1\\ x_j<0}}^n x_j\right)y_1 .\label{eq:ub}
\end{align}    
\end{subequations}

\begin{proof}
We only show the proof for lower bound and the upper bound is a natural extension by swapping the minimum $y_1$ and maximum $y_n$.
The lower bound in Eq.~\eqref{eq:lb} is obvious from the first inequality in Eq.~\eqref{eq:rearrangement_ineq} as
\begin{align}
    \sum_{j=1} x_{\sigma(j)}y_j \ge \sum_{j=1}^n x_j y_{n+1-j} = \sum_{\substack{j=1\\ x_j>0}}^n x_j y_{n+1-j} + \sum_{\substack{j=1\\ x_j<0}}^n x_j y_{n+1-j} \ge \sum_{\substack{j=1\\ x_j>0}}^n x_j y_1 + \sum_{\substack{j=1\\ x_j<0}}^n x_j y_n.
\end{align}
\end{proof}

\subsection{Preliminary}

In this part, we introduce some prerequisite lemmas and propositions which are necessary in the evaluation of gradient variance in Appendix.~\ref{app:variance}. Recall the definition of the difference sign vector ${\bf d}\bm s_\ell = |\bm s_\ell - \bm s_{\ell-1}|/2$ in Eq.~\eqref{eq:sign_diff}, it has the following property.
\begin{lemma}
    The sum of all elements in difference sign vectors is always even, $\mathbb{P}\left(\sum_{\ell=1}^L {\bf d}\bm s_\ell\right) = +1$.
    \label{signdiff_parity}
\end{lemma}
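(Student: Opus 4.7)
The plan is to exploit the identity $(-1)^{{\bf d}\bm s_\ell} = \bm s_\ell \bm s_{\ell-1}$ already noted in the excerpt (just below Eq.~\eqref{eq:sign_diff}), and to extend it cyclically to handle the special index $\ell = 1$. Recall that by convention $\bm s_L = -1$ and ${\bf d}\bm s_1 = |\bm s_1 - \bm s_L|/2 = (\bm s_1 + 1)/2$, so setting $\bm s_0 \equiv \bm s_L = -1$ makes the formula ${\bf d}\bm s_\ell = |\bm s_\ell - \bm s_{\ell-1}|/2$ uniformly valid for every $\ell \in \{1,\dots,L\}$, and the identity $(-1)^{{\bf d}\bm s_\ell} = \bm s_\ell \bm s_{\ell-1}$ holds for every $\ell$ as well (since $\bm s_\ell, \bm s_{\ell-1} \in \{\pm 1\}$).

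With this cyclic extension in hand, the proof is a one-line telescoping calculation: take the product of $(-1)^{{\bf d}\bm s_\ell}$ over $\ell = 1,\dots,L$, so that
\begin{equation}
\mathbb{P}\!\left(\sum_{\ell=1}^L {\bf d}\bm s_\ell\right) = \prod_{\ell=1}^L (-1)^{{\bf d}\bm s_\ell} = \prod_{\ell=1}^L \bm s_\ell \, \bm s_{\ell-1} = \prod_{\ell=0}^L \bm s_\ell^{\,m_\ell},
\end{equation}
where each index $\ell \in \{1,\dots,L-1\}$ appears with multiplicity $m_\ell = 2$, and the boundary indices $\ell = 0$ and $\ell = L$ each appear once but satisfy $\bm s_0 = \bm s_L = -1$, so their contributions combine as $\bm s_L^2 = 1$. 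Since $\bm s_\ell^2 = 1$ for every $\ell$, the product collapses to $+1$, which proves the lemma.

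The step requiring the most care is the boundary one: extending the definition of ${\bf d}\bm s_\ell$ cyclically so that the telescoping product is clean. Once that is done, there is no real obstacle — the argument is essentially the observation that in any cyclic sequence of $\pm 1$ values the number of adjacent sign flips is even, because one must return to the starting value after one full loop.
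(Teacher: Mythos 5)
Your proof is correct and follows essentially the same route as the paper's: both use the identity $(-1)^{{\bf d}\bm s_\ell}=\bm s_\ell\,\bm s_{\ell-1}$ (with the $\ell=1$ case handled via $\bm s_0\equiv\bm s_L$) and observe that the telescoping product collapses to a product of squares of $\pm1$. The cyclic-extension framing is a slightly cleaner way of stating what the paper does by treating $\ell=1$ separately, but the argument is the same.
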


\begin{proof}
As $(-1)^{{\bf d}\bm s_\ell}=\bm s_{\ell}\times \bm s_{\ell-1}$ for $\ell\ge 2$ and $(-1)^{{\bf d}\bm s_1}=\bm s_{1}\times \bm s_{L}$, we have
\begin{align}
\prod_{\ell=1}^L (-1)^{{\bf d}\bm s_\ell}= (-1)^{\sum_{\ell=1}^L {\bf d}\bm s_\ell}=\bm s_{1}\times \bm s_{L}\times \prod_{\ell=2}^L \left(\bm s_{\ell}\times \bm s_{\ell-1}\right)=\prod_{\ell=1}^L \bm s_{\ell}^2=1,
\end{align}
because $\bm s_\ell=\pm 1$.
Therefore, $\sum_{\ell=1}^L {\bf d}\bm s_\ell$ is even and $\mathbb{P}\left(\sum_{\ell=1}^L {\bf d}\bm s_\ell\right) = +1$
\end{proof}

A direct result from Lemma~\ref{signdiff_parity} is
about the number of $\phi_j$s in $\Phi_{\bm s,a}$ satisfies
\begin{corollary}
    The number of $\phi_\ell$s with nonzero coefficient in $\Phi_{\bm s,a}$ is $N_{\bm s,a} = n_{\bm s}+1-a-(-1)^a{\bf d}\bm s_1$, whose parity satisfies 
    $
        \mathbb{P}(N_{\bm s,a}) = \mathbb{P}(a)\mathbb{P}(L).
    $
    \label{phinum_parity}
\end{corollary}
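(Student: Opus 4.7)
The first assertion---that $N_{\bm s,a} = n_{\bm s}+1-a-(-1)^a{\bf d}\bm s_1$---is already established in the paragraph immediately preceding the corollary by case-splitting on $a\in\{0,1\}$ and using $\bm s_1 = 2{\bf d}\bm s_1-1$ to rewrite the coefficient of $\phi_1$ in $\Phi_{\bm s,a}$. So the only content that still needs a proof is the parity identity $\mathbb{P}(N_{\bm s,a}) = \mathbb{P}(a)\mathbb{P}(L)$.

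My plan is to reduce $N_{\bm s,a}$ modulo $2$ and then invoke Lemma~\ref{signdiff_parity}. Substituting the definition $n_{\bm s} = L-1-\sum_{\ell=2}^L {\bf d}\bm s_\ell$ into the formula for $N_{\bm s,a}$ gives
\begin{equation}
N_{\bm s,a} = L - \sum_{\ell=2}^L {\bf d}\bm s_\ell - a - (-1)^a {\bf d}\bm s_1.
\end{equation}
Since $(-1)^a \in \{\pm 1\}$, the term $(-1)^a {\bf d}\bm s_1$ is congruent to ${\bf d}\bm s_1$ modulo $2$ regardless of $a$, so
\begin{equation}
N_{\bm s,a} \equiv L - \sum_{\ell=1}^L {\bf d}\bm s_\ell - a \pmod{2}.
\end{equation}

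Now I apply Lemma~\ref{signdiff_parity}, which tells us that $\sum_{\ell=1}^L {\bf d}\bm s_\ell$ is even, so it drops out of the congruence and leaves $N_{\bm s,a} \equiv L - a \equiv L + a \pmod{2}$. Taking $(-1)^{(\cdot)}$ of both sides yields $\mathbb{P}(N_{\bm s,a}) = (-1)^{L+a} = \mathbb{P}(L)\mathbb{P}(a)$, as claimed.

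There is essentially no obstacle here: the corollary is an immediate bookkeeping consequence of the prior lemma, and the only subtlety is noticing that the sign flip in $(-1)^a {\bf d}\bm s_1$ is invisible modulo $2$, which lets us close up the partial sum $\sum_{\ell=2}^L {\bf d}\bm s_\ell$ into the full sum $\sum_{\ell=1}^L {\bf d}\bm s_\ell$ that Lemma~\ref{signdiff_parity} controls.
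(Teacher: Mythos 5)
Your proof is correct and matches the paper's own argument in all essentials: both reduce to the observation that $(-1)^a{\bf d}\bm s_1$ has the same parity as ${\bf d}\bm s_1$, fold it into the full sum $\sum_{\ell=1}^L{\bf d}\bm s_\ell$, and then invoke Lemma~\ref{signdiff_parity} to kill that sum, leaving $\mathbb{P}(N_{\bm s,a})=(-1)^{L+a}$. The only difference is cosmetic — you work with congruences mod $2$ while the paper manipulates products of $(-1)$ — so there is nothing further to add.
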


\begin{proof}
As $N_{\bm s,a} = n_{\bm s}+1-a-(-1)^a{\bf d}\bm s_1$, we can see that the parity of $N_{\bm s,a}$ follows
\begin{align}
    \mathbb{P}\left(N_{\bm s,a}\right) = (-1)^{n_{\bm s}}(-1)^{(-1)^a {\bf d}\bm s_1}(-1)^{1-a}=(-1)^{L-1-\sum_{\ell=2}^L{\bf d}\bm s_\ell}(-1)^{-{\bf d}\bm s_1}(-1)^{1-a}
    = (-1)^{L-a}(-1)^{-\sum_{\ell=1}^L {\bf d}\bm s_\ell}
    = \mathbb{P}(a)\mathbb{P}(L),
    \label{eq:phinum_parity_proof}
\end{align}
where in the second equation we rewrite $(-1)^{(-1)^a{\bf d}\bm s_1}=(-1)^{-{\bf d}\bm s_1}$ as the sign of exponent does not change the value, and the last equality is obtained from the Lemma~\ref{signdiff_parity}.
\end{proof}

\begin{proposition}
    For two arbitrary different sign vector $\bm s, \bm r$ uniformly random sampled, the number of elements that ${\bf d}\bm s_\ell = {\bf d}\bm r_\ell$ is $N_T = L- \sum_{\ell=1}^L |{\bf d}\bm s_\ell - {\bf d}\bm r_\ell|$, and the distribution probability is $p(N_T) = \binom{L}{N_T}/(2^{L-1}-1)$ with constraint $0 \le N_T \le L-2$ and $\mathbb{P}(N_T) = \mathbb{P}(L)$.
    \label{Nt_def}
\end{proposition}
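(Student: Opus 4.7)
The plan hinges on the bijection between sign vectors (with $\bm s_L=-1$) and their difference sign vectors. First I would observe that Lemma~\ref{signdiff_parity} says $\sum_{\ell=1}^L {\bf d}\bm s_\ell$ is always even, so the map $\bm s \mapsto {\bf d}\bm s$ sends the $2^{L-1}$ sign vectors injectively into the set of length-$L$ binary vectors of even weight. Since that latter set also has cardinality $2^{L-1}$, and injectivity is immediate from the reconstruction formula $\bm s_\ell = \bm s_{\ell-1}\,(-1)^{{\bf d}\bm s_\ell}$ with $\bm s_L=-1$, the map is a bijection. This reduces the problem to a counting question about pairs of distinct even-weight binary vectors of length $L$.

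Next, I would reinterpret $N_T$ in Hamming-distance language. Because each ${\bf d}\bm s_\ell, {\bf d}\bm r_\ell \in \{0,1\}$, we have $|{\bf d}\bm s_\ell - {\bf d}\bm r_\ell| \in \{0,1\}$ and it equals $1$ precisely when the two vectors disagree at position $\ell$. Hence $L-N_T = d_H({\bf d}\bm s,{\bf d}\bm r)$, the Hamming distance between the two difference vectors. Since both vectors have even weight, their componentwise XOR has even weight, so $L-N_T$ must be even; equivalently $\mathbb{P}(N_T) = \mathbb{P}(L)$. Moreover $\bm s\neq \bm r$ implies ${\bf d}\bm s\neq {\bf d}\bm r$ (by the bijection), hence $L-N_T\ge 1$; combined with evenness this gives $L-N_T\ge 2$, i.e.\ $N_T\le L-2$. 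The lower bound $N_T\ge 0$ is trivial.

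For the probability, I would fix $\bm s$ (equivalently fix ${\bf d}\bm s$) and count how many $\bm r\neq\bm s$ yield Hamming distance $d := L-N_T$ from ${\bf d}\bm s$. Naively there are $\binom{L}{d}$ binary vectors at distance $d$ from ${\bf d}\bm s$. The key observation is that, because $d$ is even, flipping $d$ bits of an even-weight vector preserves even weight, so every one of these $\binom{L}{d}$ vectors is itself a valid difference vector of some sign vector $\bm r$. Dividing by the $2^{L-1}-1$ choices of $\bm r\neq\bm s$ gives $p(N_T)=\binom{L}{L-N_T}/(2^{L-1}-1)=\binom{L}{N_T}/(2^{L-1}-1)$, as claimed.

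There is no real technical obstacle here; the only subtlety is making sure the bijection and the parity-preservation under bit flips are stated cleanly, since the whole argument collapses to a standard Hamming-sphere count once those are in place. As a sanity check I would verify the normalization $\sum_{N_T=0,\,\mathbb{P}(N_T)=\mathbb{P}(L)}^{L-2}\binom{L}{N_T} = 2^{L-1}-1$, which follows from $\sum_{d\text{ even}}\binom{L}{d}=2^{L-1}$ after removing the $d=0$ (i.e.\ $\bm s=\bm r$) term.
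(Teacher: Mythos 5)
Your proof is correct, and it takes a genuinely different route from the paper's. The paper establishes the parity claim $\mathbb{P}(N_T)=\mathbb{P}(L)$ indirectly: it tabulates how many $\theta_\ell$ appear as $\sin(\theta_\ell/2)$ versus $\cos(\theta_\ell/2)$ in $T_{\bm s,a}$ and $T_{\bm r,a}$, derives the identity $N_{\bm s,a}+N_{\bm r,a}+N_T-2n_{\rm sin}=L$, and then invokes Corollary~\ref{phinum_parity} (which gives $\mathbb{P}(N_{\bm s,a})=\mathbb{P}(N_{\bm r,a})$) to conclude; the binomial form of $p(N_T)$ is only asserted from the ``equal prior'' of ${\bf d}\bm s_\ell\in\{0,1\}$. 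You instead make the map $\bm s\mapsto{\bf d}\bm s$ into an explicit bijection onto the even-weight binary strings (cardinality $2^{L-1}$ on both sides, injectivity from the reconstruction $\bm s_\ell=\bm s_{\ell-1}(-1)^{{\bf d}\bm s_\ell}$ anchored at $\bm s_L=-1$, with consistency guaranteed by Lemma~\ref{signdiff_parity}), identify $L-N_T$ with the Hamming distance $d_H({\bf d}\bm s,{\bf d}\bm r)$, and get the parity from $w(a\oplus b)=w(a)+w(b)-2w(a\wedge b)$ and the bounds from $d_H\ge 1$ plus evenness. Your Hamming-sphere count, noting that flipping an even number of bits preserves even weight so all $\binom{L}{L-N_T}$ candidates are legitimate difference vectors, actually \emph{proves} the distribution $p(N_T)=\binom{L}{N_T}/(2^{L-1}-1)$ rather than asserting it, and your normalization check $\sum_{d\ \mathrm{even},\,d\ge 2}\binom{L}{d}=2^{L-1}-1$ confirms it. What the paper's route buys is a direct link between $N_T$ and the structure of the four-fold weight averages used later; what yours buys is a self-contained, fully rigorous combinatorial argument that does not lean on Corollary~\ref{phinum_parity} at all.
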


\begin{proof}
For arbitrary two sign vectors $\bm s\neq \bm r$, we have $0\le N_T \le L-2$ due to parity constraint in Lemma~\ref{signdiff_parity}. As it is an equal prior of ${\bf d}\bm s_\ell=0,1$, the distribution probability is $\binom{L}{N_T}/(2^{L-1}-1)$. Suppose the number of $\theta_\ell$ in $T_{\bm s,a}$ and $T_{\bm r,a}$ that are both in the form of $\sin(\theta_\ell/2)$ is $n_{\rm sin}$, and the number of $\theta_\ell$ that are both in the form of $\cos(\theta_\ell/2)$ is $N_{\bm s,a}-n_{\rm sin}$. On the other hand, the number of $\theta_\ell$ that are in the form of $\sin(\theta_\ell/2)$ in $T_{\bm s,a}$ but $\cos(\theta_\ell/2)$ in $T_{\bm r,a}$ is $N_T-n_{\rm sin}$, and the number for opposite correspondence is $N_{\bm r,a}-n_{\rm sin}$. The above statement is summarized in Table.~\ref{tab:Nt_proof}.
\begin{table}[]
    \centering
    \begin{tabular}{|c|c|c|}
    \hline
    \diagbox{$T_{\bm s,a}$}{$T_{\bm r,a}$} & $\sin(\theta_\ell/2)$ & $\cos(\theta_\ell/2)$  \\
    \hline
    $\sin(\theta_\ell/2)$  & $n_{\rm sin}$ & $N_{\bm s,a}-n_{\rm sin}$ \\
    \hline
    $\cos(\theta_\ell/2)$ & $N_{\bm r,a}-n_{\rm sin}$ & $N_T - n_{\rm sin}$\\
    \hline
    \end{tabular}
    \caption{The number of $\theta_\ell$ in $T_{\bm s,a}$ and $T_{\rm r,a}$ with corresponding form.}
    \label{tab:Nt_proof}
\end{table}
As the total number $\theta_\ell$ for both $T_{\bm s,a}$ and $T_{\bm r,a}$ is $L$, thus the summation of Table.~\ref{tab:Nt_proof} should equal to $L$, $N_{\bm s,a}+N_{\bm r,a}+N_T-2n_{\rm sin} = L$, and the parity relation is
\begin{equation}
    1=\mathbb{P}(L-N_{\bm s,a}-N_{\bm r,a}-N_T+2n_{\rm sin}) = \mathbb{P}(L)\mathbb{P}(N_{\bm s,a})\mathbb{P}(N_{\bm r,a})\mathbb{P}(N_T) = \mathbb{P}(L)\mathbb{P}(N_T),
\end{equation}
where in the second equality we utilize $\mathbb{P}(N_{\bm s,a}) = \mathbb{P}(N_{\bm r,a})$ from Corollary~\ref{phinum_parity}.
\end{proof}

\begin{proposition}
    For two arbitrary different sign vector $\bm s, \bm r$, the number of elements that $\bm s_\ell = \bm r_\ell$ is $\ell = L- \sum_{\ell=1}^L |\bm s_\ell - \bm r_\ell|/2$ with distribution $p(\ell) = \binom{L-1}{\ell-1}/(2^{L-1}-1)$ under the constraint $1\le \ell \le L-1$.
    \label{Nb_def}
\end{proposition}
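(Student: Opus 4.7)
The plan is to prove this by a direct counting argument that parallels the proof of Proposition~\ref{Nt_def}, but operating on the sign vectors $\bm s, \bm r$ themselves rather than on their difference vectors ${\bf d}\bm s, {\bf d}\bm r$. First, I would note two structural constraints that fix the endpoints of the range for $\ell$. Recall from Appendix~\ref{app:state_repre} that by convention $\bm s_L = \bm r_L = -1$, so the $L$-th components always coincide; this contributes one automatic match and gives $\ell \ge 1$. On the other hand, since $\bm s \neq \bm r$ by hypothesis, at least one of the remaining $L-1$ positions must differ, yielding $\ell \le L-1$.

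Next I would compute the probability. Conditioning on a fixed $\bm s$, the vector $\bm r$ is uniformly distributed over the $2^{L-1} - 1$ length-$L$ sign vectors with $\bm r_L = -1$ and $\bm r \neq \bm s$. Requiring exactly $\ell$ total matches amounts to requiring exactly $\ell - 1$ matches among the first $L-1$ positions (since position $L$ already matches automatically). The number of length-$(L-1)$ binary patterns that agree with $\bm s_{1:L-1}$ in exactly $\ell - 1$ prescribed positions is $\binom{L-1}{\ell-1}$, where we choose which positions are the matching ones and the remaining positions are forced to be the opposite of $\bm s$. Normalizing by $2^{L-1}-1$ gives $p(\ell) = \binom{L-1}{\ell-1}/(2^{L-1}-1)$.

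Finally I would verify the normalization as a sanity check: $\sum_{\ell=1}^{L-1} \binom{L-1}{\ell-1} = \sum_{k=0}^{L-1}\binom{L-1}{k} - \binom{L-1}{L-1} = 2^{L-1} - 1$, which matches the denominator. There is no real obstacle here — the only subtlety is keeping track of the fixed last component $\bm s_L = \bm r_L = -1$, which shifts the binomial from $\binom{L}{\ell}$ to $\binom{L-1}{\ell-1}$ and enforces the range $1 \le \ell \le L-1$. Unlike Proposition~\ref{Nt_def}, there is no parity constraint on $\ell$ itself, because the parity condition from Lemma~\ref{signdiff_parity} applies to sums of difference-vector entries rather than to sign-vector agreements directly, so all values in $[1, L-1]$ are realized.
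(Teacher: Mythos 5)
Your proof is correct, and it supplies the natural counting argument for a statement the paper actually leaves unproved (Proposition~\ref{Nb_def} is stated without a proof environment, unlike Proposition~\ref{Nt_def}). You correctly use the convention $\bm s_L=\bm r_L=-1$ to get the automatic match at position $L$, which both shifts the binomial to $\binom{L-1}{\ell-1}$ and forces $1\le\ell\le L-1$, and your normalization check $\sum_{\ell=1}^{L-1}\binom{L-1}{\ell-1}=2^{L-1}-1$ confirms consistency with the denominator; your closing remark that no parity constraint arises here (in contrast to Proposition~\ref{Nt_def}, where the constraint lives on the difference vectors via Lemma~\ref{signdiff_parity}) is also accurate.
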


\subsection{Ensemble average of four-fold product of weights}

In this part, we evaluate the ensemble average of the four-fold weight which occurs in Eq.~\eqref{eq:Opm_general}, and discuss its properties.
The four-fold weight product in general is
$w_{\bm s,a, k^{(+1)}}w^*_{\bm s^\prime,a^\prime,  k^{(+1)}}w_{\bm r,b, k^{(\mu)}}w^*_{\bm r^\prime, b^\prime, k^{(\mu)}}$ with $\mu = \pm 1$, and the ensemble average over all $\theta$ and $\phi$ is
\begin{align}
    &\mathbb{E}_{\bm \theta, \bm \phi}\left[w_{\bm s,a, k^{(+1)}}(\bm \theta, \bm \phi)w^*_{\bm s^\prime,a^\prime, k^{(+1)}}(\bm \theta, \bm \phi)w_{\bm r,b, k^{(\mu)}}(\bm \theta, \bm \phi)w^*_{\bm r^\prime, b^\prime, k^{(\mu)}}(\bm \theta, \bm \phi)\right]\nonumber\\
    &= \mathbb{E}_{\bm \phi}\left[e^{i\left(\Phi_{\bm s,a}(\bm \phi)-\Phi_{\bm s^\prime, a^\prime}(\bm \phi)+\Phi_{\bm r,b}(\bm \phi)-\Phi_{\bm r^\prime,b^\prime}(\bm \phi)\right)}\right]\mathbb{E}_{\bm \theta}\left[T_{\bm s,a, k^{(+1)}}(\bm \theta)T_{\bm s^\prime, a^\prime, k^{(+1)}}(\bm \theta)T_{\bm r,b, k^{(\mu)}}(\bm \theta)T_{\bm r^\prime,b^\prime, k^{(\mu)}}(\bm \theta)\right].
    \label{eq:w_average}
\end{align}
The average over $\phi$'s is simply zero if there is any $\phi_\ell$ left in the phase $\Phi_{\bm s,a}-\Phi_{\bm s^\prime,a^\prime}+\Phi_{\bm r,b}-\Phi_{\bm r^\prime,b^\prime}$, otherwise it can be $\pm 1$ depending on whether an even number of $\pi$ presents in the phase.
The average with respect to $\bm \theta$ is
\begin{proposition}
    The ensemble average over $\theta_\ell$s in the four-fold weight product is
    \begin{align}
        &\mathbb{E}_{\bm \theta}\left[T_{\bm s,a, k^{(+1)}}(\bm \theta)T_{\bm s^\prime, a^\prime, k^{(+1)}}(\bm \theta)T_{\bm r,b, k^{(\mu)}}(\bm \theta)T_{\bm r^\prime,b^\prime, k^{(\mu)}}(\bm \theta)\right]\nonumber\\
        &= \frac{1}{8^L}\prod_{\substack{\ell=2\\\ell\neq k}}^L \left(2\delta_{{\bf d}\bm s_\ell,{\bf d}\bm s^\prime_\ell}\delta_{{\bf d}\bm r_\ell,{\bf d}\bm r^\prime_\ell}+\cos\left[\frac{\pi}{2}({\bf d} \bm s_\ell+{\bf d} \bm s^\prime_\ell-{\bf d} \bm r_\ell-{\bf d} \bm r^\prime_\ell)\right]\right)
        \left(2\delta_{{\bf d}\bm s_k,{\bf d}\bm s^\prime_k}\delta_{{\bf d}\bm r_k,{\bf d}\bm r^\prime_k}+\mu \cos\left[\frac{\pi}{2}({\bf d} \bm s_k+{\bf d} \bm s^\prime_k-{\bf d} \bm r_k-{\bf d} \bm r^\prime_k)\right]\right)^{1-\delta_{k,1}}\nonumber\\
        & \quad 
        \times \left(2\delta_{\mathbb{P}(a)\bm s_1,\mathbb{P}(a^\prime)\bm s^\prime_1}\delta_{\mathbb{P}(b)\bm r_1,\mathbb{P}(b^\prime)\bm r^\prime_1}+(1+(\mu-1)\delta_{k,1}) \cos\left[\frac{\pi}{4}(\mathbb{P}(a)\bm s_1+\mathbb{P}(a^\prime)\bm s^\prime_1-\mathbb{P}(b)\bm r_1-\mathbb{P}(b^\prime)\bm r^\prime_1)\right]\right).
        \label{eq:T_average}
    \end{align}
\end{proposition}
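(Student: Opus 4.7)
The plan is to exploit the independence of the $L$ rotation angles $\theta_1,\ldots,\theta_L$ under the uniform measure on $[0,2\pi)^L$, so that the four-fold expectation factorizes into a product of $L$ single-variable integrals, one per layer. Inspecting the definition in Eq.~\eqref{eq:T_def}, each $T_{\bm s,a}$ contributes exactly one $\theta_\ell$-dependent factor per layer. For $\ell \ge 2$ the factor has the form $\sin((\theta_\ell + c\pi)/2)$ with $c \in \{0,1\}$, while for $\ell = 1$ it has the form $\sin(\theta_1/2 + (\alpha+1)\pi/4)$ with $\alpha \in \{\pm 1\}$. The shift $\theta_k \mapsto \theta_k + \mu \pi/2$ coming from the parameter-shift rule only affects the single factor at $\ell=k$, and the ensemble over $\theta_k$ remains uniform.

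I would then carry out the per-layer computation in three cases. For a generic layer $\ell \ge 2$ with $\ell \ne k$, I apply the product-to-sum identity $\sin A \sin B = \tfrac{1}{2}[\cos(A-B) - \cos(A+B)]$ to pair up the four sine factors, obtaining
\begin{equation*}
\tfrac{1}{4}\bigl[\cos(\tfrac{(c_1-c_2)\pi}{2}) - \cos(\theta_\ell + \tfrac{(c_1+c_2)\pi}{2})\bigr]\bigl[\cos(\tfrac{(c_3-c_4)\pi}{2}) - \cos(\theta_\ell + \tfrac{(c_3+c_4)\pi}{2})\bigr],
\end{equation*}
with $(c_1,c_2,c_3,c_4)=({\bf d}\bm s_\ell,{\bf d}\bm s^\prime_\ell,{\bf d}\bm r_\ell,{\bf d}\bm r^\prime_\ell)$. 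Integrating kills the single-frequency terms, and the cross term of the two remaining cosines contributes $\tfrac{1}{2}\cos(\tfrac{(c_1+c_2-c_3-c_4)\pi}{2})$. Using the key observation that for $c\in\{0,1\}$ one has $\cos((c_1-c_2)\pi/2) = \delta_{c_1,c_2}$, this collapses to
\begin{equation*}
\tfrac{1}{8}\bigl[2\delta_{c_1,c_2}\delta_{c_3,c_4} + \cos(\tfrac{(c_1+c_2-c_3-c_4)\pi}{2})\bigr],
\end{equation*}
which is exactly the $\ell$th factor of the claim (the $1/8^L$ prefactor collects one $1/8$ per layer).

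For $\ell = k \ge 2$ the same computation goes through, except each of the two arguments now picks up an extra $\pi/2$ shift (from the two $k^{(+1)}$ factors) or an extra $\mu\pi/2$ shift (from the two $k^{(\mu)}$ factors). The pair-diagonal cosines $\cos((c_1-c_2)\pi/2)$ are unchanged, while the surviving cross cosine acquires an additional phase of $(1-\mu)\pi/2$; since $\mu=\pm 1$, this simply flips the sign of the cosine, giving the prefactor $\mu$ in the claim. For $\ell = 1$ the factors are of the form $\sin(\theta_1/2 + (\alpha_j+1)\pi/4)$ with $\alpha_j=\mathbb{P}(a_j)\bm t^{(j)}_1 \in \{\pm 1\}$; the same pair-and-integrate strategy produces the identity with $\pi/2$ replaced by $\pi/4$, using the analogous fact $\cos((\alpha_1-\alpha_2)\pi/4)=\delta_{\alpha_1,\alpha_2}$ for $\alpha_i\in\{\pm 1\}$. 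If moreover $k=1$, the $\pm\pi/4$ shifts from the parameter-shift rule again introduce an extra phase $(1-\mu)\pi/2$ in the cross cosine, contributing the $1+(\mu-1)\delta_{k,1}=\mu$ prefactor.

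There is no real obstacle, only bookkeeping; the main source of error is keeping the various shifts consistent in the two special cases $\ell=k$ and $\ell=1$ (and in particular their overlap $k=1$), and in tracking the precise arguments of the cosines so that the two exceptional $\mu$-dependent prefactors emerge in the right places. Assembling the three types of per-layer contributions yields the factor $1/8^L$ and the claimed triple product, completing the proof.
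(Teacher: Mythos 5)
Your proof is correct and follows essentially the same route as the paper's: factorize over the independent $\theta_\ell$, apply product-to-sum identities layer by layer, use $\cos\!\left(\tfrac{(c_1-c_2)\pi}{2}\right)=\delta_{c_1,c_2}$ (and its $\pi/4$ analogue at $\ell=1$), and track the extra $(1-\mu)\pi/2$ phase in the cross cosine at $\ell=k$ to produce the $\mu$-dependent prefactors. The only slip is cosmetic: the parameter shift $\theta_k\to\theta_k+\mu\pi/2$ adds $\mu\pi/4$ (not $\mu\pi/2$) to each half-angle argument $\tfrac{\theta_k+c\pi}{2}$, which is precisely what makes your stated cross-phase $(1-\mu)\pi/2$ come out right.
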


\begin{proof}
As the $\theta_\ell$'s are independent each other, it is allowed to handle the average over each $\theta_\ell$ independently, and according to Eq.~\eqref{eq:T_def}, we only need to calculate the average over $\theta_1$ and $\theta_\ell$ with $\ell\ge 2$ separately. The ensemble average over $\theta_\ell$ with $\ell\ge 2$ in $\mathbb{E}\left[T_{\bm s,a,k^{(+1)}}T_{\bm s^\prime,a^\prime,k^{(+1)}}T_{\bm r,b,k^{(\mu)}}T_{\bm r^\prime,b^\prime,k^{(\mu)}}\right]$ is
\begin{align}
    &\mathbb{E}_{\bm \theta}\left[\sin\left(\frac{\theta_\ell+\pi{\bf d}\bm s_\ell}{2}+\frac{\pi}{4}\right)\sin\left(\frac{\theta_\ell+\pi{\bf d}\bm s^\prime_\ell}{2}+\frac{\pi}{4}\right)\sin\left(\frac{\theta_\ell+\pi{\bf d}\bm r_\ell}{2}+\frac{\mu\pi}{4}\right)\sin\left(\frac{\theta_\ell+\pi{\bf d}\bm r^\prime_\ell}{2}+\frac{\mu\pi}{4}\right)\right]\nonumber\\
    &= \frac{1}{8}\left(\cos\left[\frac{\pi}{2}({\bf d} \bm s_\ell-{\bf d} \bm s^\prime_\ell+{\bf d} \bm r_\ell-{\bf d} \bm r^\prime_\ell)\right]+\cos\left[\frac{\pi}{2}({\bf d} \bm s_\ell-{\bf d} \bm s^\prime_\ell-{\bf d} \bm r_\ell+{\bf d} \bm r^\prime_\ell)\right]+\mu\cos\left[\frac{\pi}{2}({\bf d} \bm s_\ell+{\bf d} \bm s^\prime_\ell-{\bf d} \bm r_\ell-{\bf d} \bm r^\prime_\ell)\right]\right)\nonumber\\
    &= \frac{1}{8}\left(2\cos\left[\frac{\pi({\bf d}\bm s_\ell-{\bf d}\bm s^\prime_\ell)}{2}\right]\cos\left[\frac{\pi({\bf d}\bm r_\ell-{\bf d}\bm r^\prime_\ell)}{2}\right]+\mu \cos\left[\frac{\pi}{2}({\bf d} \bm s_\ell+{\bf d} \bm s^\prime_\ell-{\bf d} \bm r_\ell-{\bf d} \bm r^\prime_\ell)\right]\right)\nonumber\\
    &= \frac{1}{8}\left(2\delta_{{\bf d}\bm s_\ell,{\bf d}\bm s^\prime_\ell}\delta_{{\bf d}\bm r_\ell,{\bf d}\bm r^\prime_\ell}+\mu \cos\left[\frac{\pi}{2}({\bf d} \bm s_\ell+{\bf d} \bm s^\prime_\ell-{\bf d} \bm r_\ell-{\bf d} \bm r^\prime_\ell)\right]\right),
\end{align}
where in the last line we apply the identity that $\cos(\pi(x-y)/2) = \delta_{x,y}$ for $x,y\in \{0,1\}$.
For $\theta_1$, the average is
\begin{align}
    &\mathbb{E}_{\bm \theta}\left[\sin\left(\frac{\theta_1}{2}+\frac{(\mathbb{P}(a)\bm s_1+2)\pi}{4}\right)\sin\left(\frac{\theta_1}{2}+\frac{(\mathbb{P}(a^\prime)\bm s^\prime_1+2)\pi}{4}\right)\sin\left(\frac{\theta_1}{2}+\frac{(\mathbb{P}(b)\bm r_1+\mu+1)\pi}{4}\right)\sin\left(\frac{\theta_1}{2}+\frac{(\mathbb{P}(b^\prime)\bm r^\prime_1+\mu+1)\pi}{4}\right)\right]\nonumber\\
    &= \frac{1}{8}\left(\cos\left[\frac{\pi}{4}\left( \mathbb{P}(a)\bm s_1-\mathbb{P}(a^\prime)\bm s^\prime_1+\mathbb{P}(b)\bm r_1-\mathbb{P}(b^\prime)\bm r^\prime_1\right)\right]+\cos\left[\frac{\pi}{4}\left(\mathbb{P}(a)\bm s_1-\mathbb{P}(a^\prime)\bm s^\prime_1-\mathbb{P}(b)\bm r_1+\mathbb{P}(b^\prime)\bm r^\prime_1\right)\right]\right.\nonumber\\
    &\left.\quad \quad +\mu\cos\left[\frac{\pi}{4}\left(\mathbb{P}(a)\bm s_1+\mathbb{P}(a^\prime)\bm s^\prime_1-\mathbb{P}(b)\bm r_1-\mathbb{P}(b^\prime)\bm r^\prime_1\right)\right]\right)\nonumber\\
    &= \frac{1}{8}\left(2\cos\left[\frac{\pi(\mathbb{P}(a)\bm s_1-\mathbb{P}(a^\prime)\bm s^\prime_1)}{4}\right]\cos\left[\frac{\pi(\mathbb{P}(b)\bm r_1-\mathbb{P}(b^\prime)\bm r^\prime_1)}{4}\right]+\mu \cos\left[\frac{\pi}{4}(\mathbb{P}(a)\bm s_1+\mathbb{P}(a^\prime)\bm s^\prime_1-\mathbb{P}(b)\bm r_1-\mathbb{P}(b^\prime)\bm r^\prime_1)\right]\right)\nonumber\\
    &= \frac{1}{8}\left(2\delta_{\mathbb{P}(a)\bm s_1,\mathbb{P}(a^\prime)\bm s^\prime_1}\delta_{\mathbb{P}(b)\bm r_1,\mathbb{P}(b^\prime)\bm r^\prime_1}+\mu \cos\left[\frac{\pi}{4}(\mathbb{P}(a)\bm s_1+\mathbb{P}(a^\prime)\bm s^\prime_1-\mathbb{P}(b)\bm r_1-\mathbb{P}(b^\prime)\bm r^\prime_1)\right]\right),
\end{align}
where in last line we apply $\cos(\pi(x-y)/4) = \delta_{x,y}$ for $x,y=\pm 1$.
Note that the average over all $\theta_\ell$'s depends on the choice of $k$, more specifically, whether $k> 1$ or not. One can write out the ensemble average for those two separately, and figure out that the ensemble average over all $\theta_\ell$'s can be unified as
\begin{align}
    &\mathbb{E}_{\bm \theta}\left[T_{\bm s,a, k^{(+1)}}(\bm \theta)T_{\bm s^\prime, a^\prime, k^{(+1)}}(\bm \theta)T_{\bm r,b, k^{(\mu)}}(\bm \theta)T_{\bm r^\prime,b^\prime, k^{(\mu)}}(\bm \theta)\right]\nonumber\\
    &= \frac{1}{8^L}\prod_{\substack{\ell=2\\\ell\neq k}}^L \left(2\delta_{{\bf d}\bm s_\ell,{\bf d}\bm s^\prime_\ell}\delta_{{\bf d}\bm r_\ell,{\bf d}\bm r^\prime_\ell}+\cos\left[\frac{\pi}{2}({\bf d} \bm s_\ell+{\bf d} \bm s^\prime_\ell-{\bf d} \bm r_\ell-{\bf d} \bm r^\prime_\ell)\right]\right)
    \left(2\delta_{{\bf d}\bm s_k,{\bf d}\bm s^\prime_k}\delta_{{\bf d}\bm r_k,{\bf d}\bm r^\prime_k}+\mu \cos\left[\frac{\pi}{2}({\bf d} \bm s_k+{\bf d} \bm s^\prime_k-{\bf d} \bm r_k-{\bf d} \bm r^\prime_k)\right]\right)^{1-\delta_{k,1}}\nonumber\\
    &\quad \quad \times \left(2\delta_{\mathbb{P}(a)\bm s_1,\mathbb{P}(a^\prime)\bm s^\prime_1}\delta_{\mathbb{P}(b)\bm r_1,\mathbb{P}(b^\prime)\bm r^\prime_1}+(1+(\mu-1)\delta_{k,1}) \cos\left[\frac{\pi}{4}(\mathbb{P}(a)\bm s_1+\mathbb{P}(a^\prime)\bm s^\prime_1-\mathbb{P}(b)\bm r_1-\mathbb{P}(b^\prime)\bm r^\prime_1)\right]\right).
\end{align}
\end{proof}

A direct corrollary drawn from Eq.~\eqref{eq:T_average} is the following
\begin{corollary}
    If three of the sign vectors $\bm s, \bm s^\prime, \bm r, \bm r^\prime$ are equal, then the ensemble average in Eq.~\eqref{eq:w_average} is zero.
    \label{weight_average_zero_sign}
\end{corollary}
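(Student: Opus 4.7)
The plan is to show that whenever three of the four sign vectors coincide, the $\bm\theta$ average in Eq.~\eqref{eq:T_average} already vanishes, which is enough to kill the combined $(\bm\theta,\bm\phi)$ average in Eq.~\eqref{eq:w_average}. First I would exploit the built-in symmetry of Eq.~\eqref{eq:T_average}: the formula is invariant under swapping $\bm s\leftrightarrow\bm s'$ inside the first pair, swapping $\bm r\leftrightarrow\bm r'$ inside the second pair, and (up to the overall $\mu$) under swapping the two pairs. So it suffices to analyze a single representative configuration, say $\bm s=\bm s'=\bm r\neq\bm r'$, and the other three three-equal cases follow by relabeling.

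Second, I would plug $\bm s=\bm s'=\bm r$ into each per-$\ell$ factor. Using the identity $\cos[\pi(x-y)/2]=\delta_{x,y}$ for $x,y\in\{0,1\}$, the generic factor for $\ell\ge 2$, $\ell\neq k$ collapses from
\begin{equation*}
2\delta_{{\bf d}\bm s_\ell,{\bf d}\bm s'_\ell}\delta_{{\bf d}\bm r_\ell,{\bf d}\bm r'_\ell}+\cos\!\left[\tfrac{\pi}{2}({\bf d}\bm s_\ell+{\bf d}\bm s'_\ell-{\bf d}\bm r_\ell-{\bf d}\bm r'_\ell)\right]
\end{equation*}
to $3\,\delta_{{\bf d}\bm s_\ell,{\bf d}\bm r'_\ell}$, and the $\ell=k\ge 2$ factor similarly collapses to $(2+\mu)\,\delta_{{\bf d}\bm s_k,{\bf d}\bm r'_k}$. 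Thus the product over $\ell=2,\dots,L$ in Eq.~\eqref{eq:T_average} is a nonzero constant times $\prod_{\ell=2}^L \delta_{{\bf d}\bm s_\ell,{\bf d}\bm r'_\ell}$, regardless of whether $k=1$ or $k\ge 2$.

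Third, I would observe that for this product to be nonzero we would need ${\bf d}\bm s_\ell={\bf d}\bm r'_\ell$ for every $\ell=2,\dots,L$, i.e.\ the relative signs $\bm s_{\ell-1}\bm s_\ell=\bm r'_{\ell-1}\bm r'_\ell$ agree throughout the chain. Together with the hard anchor $\bm s_L=\bm r'_L=-1$ that defines the ensemble, this propagates backwards to force $\bm s_\ell=\bm r'_\ell$ for all $\ell$, i.e.\ $\bm s=\bm r'$. That contradicts the hypothesis that $\bm r'$ differs from the other three, so the product—and hence the whole $\bm\theta$ average—is zero. The $\ell=1$ factor $F_1$ and the $\bm\phi$ integral never need to be examined.

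The one step I expect to require care is making the ``WLOG'' reduction fully rigorous across all four three-equal patterns: the $\mu$-dependence and the $\delta_{k,1}$-dependence of the $F_k$ and $F_1$ factors are not completely symmetric in $(\bm s,\bm s')\leftrightarrow(\bm r,\bm r')$, so I would spell out each of the four cases explicitly and check that in every case the product of $\ell\ge 2$ factors collapses to a nonzero prefactor times $\prod_{\ell\ge 2}\delta_{{\bf d}\bm u_\ell,{\bf d}\bm v_\ell}$ for some pair $(\bm u,\bm v)$ of sign vectors that must then coincide, yielding the same contradiction.
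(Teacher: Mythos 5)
Your proof is correct and is essentially the paper's argument: both reduce to the observation that, once three of the sign vectors coincide, each factor of Eq.~\eqref{eq:T_average} degenerates so that a nonzero result would require the difference sign vectors of the two remaining distinct vectors to agree, which (via the anchor $\bm s_L=-1$) would force those vectors to be equal. The only difference is presentational --- the paper points at a single index where the factor vanishes, while you collapse every factor to $3\,\delta_{{\bf d}\bm u_\ell,{\bf d}\bm v_\ell}$ and derive a global contradiction, which has the minor merit of making explicit the injectivity of $\bm s\mapsto{\bf d}\bm s$ that the paper's one-line proof leaves implicit.
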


\begin{proof} If three of the sign vectors are the same, then one can check that $2\delta_{{\bf d}\bm s_j,{\bf d}\bm s^\prime_j}\delta_{{\bf d}\bm r_j,{\bf d}\bm r^\prime_j} = 0$ and $\cos\left[\frac{\pi}{2}({\bf d} \bm s_j+{\bf d} \bm s^\prime_j-{\bf d} \bm r_j-{\bf d} \bm r^\prime_j)\right] = \cos(\pm \pi/2) = 0$, which makes the average in Eq.~\eqref{eq:T_average} be zero.
\end{proof}

\section{Variance of gradient in state preparation of single-mode CV state}
\label{app:variance}

In this section, we provide the detailed proof for the bounds of variance of gradient (Ineqs.~\eqref{eq:LB_main} and~\eqref{eq:UB_main} with $M=1$) in the preparation of a single-mode CV state $\ket{\psi}$ with target energy $\braket{m^\dagger m} = E_t$, while the target state of the ancilla qubit is simply chosen as $\ket{0}_q$. 

With the qubit target state $\ket{0}_q$, the expansion of items in variance of gradient (see Eq.~\eqref{eq:Opm_general}) is reduced to 
\begin{align}
     \mathbb{E}\left[\braket{O}_{k^{(+1)}} \braket{O}_{k^{(\mu)}} \right] &= \sum_{\bm s,\bm s^\prime,\bm r,\bm r^\prime} \mathbb{E}_{\bm \theta, \bm \phi}\left[w_{\bm s,k^{(+1)}}w^*_{\bm s^\prime, k^{(+1)}}w_{\bm r,k^{(\mu)}}w^*_{\bm r^\prime, k^{(\mu)}}\right] \mathbb{E}_{\bm \beta}\left[\braket{\psi|B_{\bm s}}\braket{B_{\bm s^\prime}|\psi}\braket{\psi|B_{\bm r}}\braket{B_{\bm r^\prime}|\psi}\right],
     \label{eq:Opm}
\end{align}
where we omit the subscript related to qubit state $a,a^\prime,b,b^\prime$ in the expression for simplicity since $a=a^\prime=b=b^\prime=0$. The summation over $\bm s,\bm s^\prime, \bm r, \bm r^\prime$ can be nonzero in the following four cases: (i) $\bm s=\bm s^\prime=\bm r=\bm r^\prime$; (ii) $\bm s-\bm s^\prime=\bm r-\bm r^\prime=\bm 0$ but $\bm s\neq \bm r$; (iii)  $\bm s-\bm r^\prime=\bm r-\bm s^\prime=\bm 0$ but $\bm s\neq \bm r$; (iv) $\bm s,\bm s^\prime,\bm r,\bm r^\prime {\rm unequal}$. However $\bm s-\bm r=\bm s^\prime-\bm r^\prime=\bm 0$ with $\bm s\neq \bm s^\prime$ does {\it not} contribute as the average over $\bm \phi$ becomes $\mathbb{E}[e^{2i(\Phi_{\bm s,a}-\Phi_{\bm s^\prime,a})}] = 0$ with $\bm s\neq \bm s^\prime$.

From the above analysis, the variance of the gradient  shown in Eq.~\eqref{eq:variance} becomes
\begin{align}
    &{\rm Var}\left[\partial_{\theta_k}\calC\right]=\frac{1}{2}\left(\mathbb{E}\left[\braket{O}_{k^{(+1)}}^2\right] -\mathbb{E}\left[\braket{O}_{k^{(+1)}} \braket{O}_{k^{(-1)}}\right]\right)\nonumber\\
    &= \frac{1}{2}\left(\sum_{\bm s} \Delta_\mu\left\{\mathbb{E}_{\bm \theta,\bm \phi}\left[|w_{\bm s,k^{(+1)}}|^2|w_{\bm s, k^{(\mu)}}|^2\right]\right\} \mathbb{E}_{\bm \beta}\left[|\braket{\psi|B_{\bm s}}|^4\right]\right.\nonumber\\
    & \quad \quad
    + \sum_{\bm s\neq \bm r}\Delta_\mu\left\{\mathbb{E}_{\bm \theta,\bm \phi}\left[|w_{\bm s,k^{(+1)}}|^2 |w_{\bm r, k^{(\mu)}}|^2\right] + \mathbb{E}_{\bm \theta,\bm \phi}\left[ w_{\bm s, k^{(+1)}}w^*_{\bm r,k^{(+1)}}w_{\bm r,k^{(\mu)}}w^*_{\bm s, k^{(\mu)}}\right]\right\}
    \mathbb{E}_{\bm \beta}\left[|\braket{\psi|B_{\bm s}}|^2|\braket{\psi|B_{\bm r}}|^2\right]\nonumber\\
    & \quad \quad
    \left.+\sum_{\substack{\bm s,\bm s^\prime,\bm r, \bm r^\prime\\ \text{unequal}}}\Delta_\mu\left\{\mathbb{E}_{\bm \theta, \bm \phi}\left[w_{\bm s,k^{(+1)}}w^*_{\bm s^\prime, k^{(+1)}}w_{\bm r,k^{(\mu)}}w^*_{\bm r^\prime, k^{(\mu)}}\right]\right\}\mathbb{E}_{\bm \beta}\left[\braket{\psi|B_{\bm s}}\braket{B_{\bm s^\prime}|\psi}\braket{\psi|B_{\bm r}}\braket{B_{\bm r^\prime}|\psi}\right]\right)\\
    &\equiv \frac{1}{2}\left(S_1+S_2+S_3\right).
    \label{eq:variance_single_mode}
\end{align}
The notation $\Delta_\mu \{X\} \equiv X|_{\mu = 1} - X|_{\mu = -1}$ represents the difference of quantity $X$ with $\mu = 1$ and $\mu = -1$, and the difference is only evaluated on the average over $\theta_\ell$'s. We introduce $\{S_1,S_2,S_3\}$ in the last line to denote the three summations in the large parenthesis above for convenience. In the following, we will evaluate them term by term. 

For $S_1$, the ensemble average of weight from Eq.~\eqref{eq:T_average} is
\begin{align}
    &\mathbb{E}\left[|w_{\bm s, k^{(+1)}}|^2|w_{\bm s, k^{(\mu)}}|^2\right] = \mathbb{E}\left[T_{\bm s,k^{(+1)}}^2T_{\bm s, k^{(\mu)}}^2\right]
    = \frac{1}{8^L}3^{L-2+\delta_{k,1}}(2+\mu)^{1-\delta_{k,1}}\left(3+(\mu-1)\delta_{k,1}\right) = \frac{3^{L-1}}{8^L}(2+\mu).
\end{align}
The difference with $\mu = \pm 1$ is
\begin{equation}
    \Delta_\mu \left\{\mathbb{E}\left[|w_{\bm s, k^{(+1)}}|^2|w_{\bm s, k^{(\mu)}}|^2\right]\right\} = \frac{2\cdot 3^{L-1}}{8^L}.
    \label{eq:S1_wdiff}
\end{equation}
For the displacement part, recall that $\ket{B_{\bm s}}=e^{i\chi_{\bm s}}\ket{\bm s\cdot \bm \beta}$ where $\chi_{\bm s}$ is a pure phase and $\bm s\cdot \bm \beta$ is a complex Gaussian variable with zero mean and variance $E$ as $(\bm s\cdot \bm \beta) \sim \calN^{\rm C}_E$, and the average over $\bm \beta$ becomes
\begin{align}
    \mathbb{E}\left[|\braket{\psi|B_{\bm s}}|^4\right] = \mathbb{E}\left[|\braket{\psi|\bm s \cdot \bm \beta}|^4\right] = \mathbb{E}_{\alpha\sim \calN_E^{\rm C}}\left[|\braket{\psi|\alpha}|^4\right] \equiv C_1,
    \label{eq:C1_def}
\end{align}
where in the second equation $\ket{\alpha}$ is a coherent state with displacement $\alpha\sim \calN_{E}^{\rm C}$, and we denote the average to be correlator $C_1$.
As Eqs~\eqref{eq:S1_wdiff} and~\eqref{eq:C1_def} are both independent of $\bm s$, the summation $S_1$ is simply
\begin{align}
    S_1 &= \sum_{\bm s} \Delta_\mu \left\{\mathbb{E}\left[|w_{\bm s, k^{(+1)}}|^2|w_{\bm s, k^{(\mu)}}|^2\right]\right\}  \mathbb{E}\left[|\braket{\psi|B_{\bm s}}|^4\right]= 2^{L-1} \frac{2\cdot 3^{L-1}}{8^L} C_1 = \frac{3^{L-1}}{4^L} C_1.
    \label{eq:S1}
\end{align}

The average over qubit rotation angles $\bm \phi, \bm \theta$ in $S_2$ utilizing Eq.~\ref{eq:T_average} is
\begin{align}
    &\mathbb{E}\left[|w_{\bm s, k^{(+1)}}|^2 |w_{\bm r,k^{(\mu)}}|^2\right] + \mathbb{E}\left[ w_{\bm s, k^{(+1)}}w^*_{\bm r,k^{(+1)}}w_{\bm r,k^{(\mu)}}w^*_{\bm s, k^{(\mu)}}\right]= \mathbb{E}\left[T_{\bm s,k^{(+1)}}^2T_{\bm r,k^{(\mu)}}^2\right] + \mathbb{E}\left[T_{\bm s,k^{(+1)}}T_{\bm r,k^{(+1)}}T_{\bm r,k^{(\mu)}}T_{\bm s,k^{(\mu)}}\right]\nonumber\\
    &= \frac{1}{8^L}\prod_{\substack{\ell=2\\ \ell\neq k}}^L\left(1+2\delta_{{\bf d}\bm s_\ell,{\bf d}\bm r_\ell}\right)\left(2-\mu+2\mu\delta_{{\bf d}\bm s_k,{\bf d}\bm r_k}\right)^{1-\delta_{k,1}}\left[2+(1+(\mu-1)\delta_{k,1})\left(2\delta_{{\bf d}\bm s_1,{\bf d}\bm r_1}-1\right)\right]\nonumber\\
    & \quad 
    + \frac{1}{8^L}\prod_{\substack{\ell=2\\\ell\neq k}}^L\left(1+2\delta_{{\bf d}\bm s_\ell,{\bf d}\bm r_\ell}\right)\left(2\delta_{{\bf d}\bm s_k,{\bf d}\bm r_k}+\mu\right)^{1-\delta_{k,1}}\left[2\delta_{{\bf d}\bm s_1,{\bf d}\bm r_1}+(1+(\mu-1)\delta_{k,1})\right]\nonumber\\
    &= \frac{1}{8^L}\prod_{\substack{\ell=1\\ \ell\neq k}}^L \left(1+2\delta_{{\bf d}\bm s_\ell,{\bf d}\bm r_\ell}\right)\left(2-\mu+2\mu\delta_{{\bf d}\bm s_k,{\bf d}\bm r_k}\right) + \frac{1}{8^L}\prod_{\substack{\ell=1\\ \ell\neq k}}^L \left(1+2\delta_{{\bf d}\bm s_\ell,{\bf d}\bm r_\ell}\right)\left(2\delta_{{\bf d}\bm s_k,{\bf d}\bm r_k}+\mu\right)\nonumber\\
    &= \frac{2}{8^L}\prod_{\substack{\ell=1\\ \ell\neq k}}^L\left(1+2\delta_{{\bf d}\bm s_\ell,{\bf d}\bm r_\ell}\right)\left(1+(1+\mu)\delta_{{\bf d}\bm s_k,{\bf d}\bm r_k}\right).
\end{align}
Thus, the difference with respect to $\mu$ is 
\begin{equation}
    \Delta_\mu\left\{\mathbb{E}\left[|w_{\bm s,k^{(+1)}}|^2 |w_{\bm r,k^{(\mu)}}|^2\right] + \mathbb{E}\left[ w_{\bm s, k^{(+1)}}w^*_{\bm r, k^{(+1)}}w_{\bm r,k^{(\mu)}}w^*_{\bm s, k^{(\mu)}}\right]\right\} = \frac{4}{8^L}\prod_{\ell=1,\ell\neq k}^L(1+2\delta_{{\bf d}\bm s_\ell,{\bf d}\bm r_\ell})\delta_{{\bf d}\bm s_k,{\bf d}\bm r_k}.
    \label{eq:S2_wdiff}
\end{equation}
Note that the last delta function above is nonzero only if ${\bf d}\bm s_k = {\bf d}\bm r_k$, and as $N_T$ is the total number of elements that satisfy ${\bf d}\bm s_\ell = {\bf d}\bm r_\ell$ from Proposition~\ref{Nt_def}, there are $N_T-1$ elements that satisfy ${\bf d}\bm s_\ell = {\bf d}\bm r_\ell$ with $\ell\neq k$, making Eq.~\eqref{eq:S2_wdiff} equal to $4\cdot 3^{N_T-1}/8^L$ with probability $\binom{L-1}{N_T-1}/(2^{L-1}-1)$ from Proposition~\ref{Nt_def}.
The summation over all $\bm s\neq \bm r$ of the difference is 
\begin{align}
    &\sum_{\bm s \neq \bm r} \Delta_{\mu}\left\{\mathbb{E}\left[|w_{\bm s, k^{(+1)}}|^2 |w_{\bm r,k^{(\mu)}}|^2\right] + \mathbb{E}\left[ w_{\bm s,k^{(+1)}}w^*_{\bm r,k^{(+1)}}w_{\bm r, k^{(\mu)}}w^*_{\bm s,k^{(\mu)}}\right]\right\}\nonumber\\
    &= 2^{L-1}\left(2^{L-1}-1\right)\sum_{\substack{N_T = 0\\ \mathbb{P}(N_T)=\mathbb{P}(L)}}^{L-2} \frac{4\cdot 3^{N_T-1}}{8^L} \frac{\binom{L-1}{N_T-1}}{2^{L-1}-1}= \frac{1}{4}-\frac{2\cdot 3^{L-1}-2^{L-1}}{4^L}.
    \label{eq:S2_wsum}
\end{align}
For the average over displacement $\bm \beta$ in $S_2$, note that for any two $\bm s\cdot \bm \beta, \bm r \cdot \bm \beta$ with $\bm s\neq \bm r$, we can always write them as $\bm s \cdot \bm \beta = \alpha_z + \alpha_{1-z}$ and $\bm r \cdot \bm \beta = \alpha_z - \alpha_{1-z}$ where $\alpha_z, \alpha_{1-z}$ are complex Gaussian variables obeying distributions $\calN_{z E}^{\rm C},\calN_{(1-z) E}^{\rm C}$. Note that $z=\ell/L$ with $\ell$ being an integer in the range of $[1,L-1]$. The displacement average can therefore be simplified to
\begin{align}
    &\mathbb{E}_{\bm \beta}\left[|\braket{\psi|B_{\bm s}}|^2|\braket{\psi|B_{\bm r}}|^2\right] = \mathbb{E}_{\bm \beta}\left[|\braket{\psi|\bm s\cdot \bm \beta}|^2|\braket{\psi|\bm r \cdot \bm \beta}|^2\right] = \mathbb{E}_{\alpha_y\sim \calN_{yE}^{\rm C}}\left[\prod_{h=0}^1|\braket{\psi|\alpha_z+(-1)^h\alpha_{1-z}}|^2 \right]\equiv C_2(z),
    \label{eq:C2_def}
\end{align}
where in the second equation we take an average over independent variables $\alpha_z$ and $\alpha_{1-z}$ with $\alpha_z \sim \calN_{zE}^{\rm C}$ and $\alpha_{1-z}\sim \calN_{(1-z)E}^{\rm C}$, denoted as $\alpha_y\sim \calN_{yE}^{\rm C}$ for simplicity. With Eqs.~\eqref{eq:S2_wsum} and~\eqref{eq:C2_def}, we can have bounds for $S_2$ as
\begin{align}
    S_2 &\ge \sum_{\bm s\neq \bm r}\Delta_\mu\left\{\mathbb{E}\left[|w_{\bm s, k^{(+1)}}|^2 |w_{\bm r, k^{(\mu)}}|^2\right] + \mathbb{E}\left[ w_{\bm s, k^{(+1)}}w^*_{\bm r, k^{(+1)}}w_{\bm r, k^{(\mu)}}w^*_{\bm s, k^{(\mu)}}\right]\right\} \min\mathbb{E}\left[|\braket{\psi|B_{\bm s}}|^2|\braket{\psi|B_{\bm r}}|^2\right]\nonumber\\
    &=\left(\frac{1}{4}-\frac{2\cdot 3^{L-1}-2^{L-1}}{4^L}\right)\min_\ell C_2\left(\frac{\ell}{L}\right) \label{eq:S2_lb},\\
     S_2 &\le \sum_{\bm s\neq \bm r}\Delta_\mu\left\{\mathbb{E}\left[|w_{\bm s,k^{(+1)}}|^2 |w_{\bm r,k^{(\mu)}}|^2\right] + \mathbb{E}\left[ w_{\bm s,k^{(+1)}}w^*_{\bm r,k^{(+1)}}w_{\bm r,k^{(\mu)}}w^*_{\bm s,k^{(\mu)}}\right]\right\} \max \mathbb{E}\left[|\braket{\psi|B_{\bm s}}|^2|\braket{\psi|B_{\bm r}}|^2\right]\nonumber\\
    &=\left(\frac{1}{4}-\frac{2\cdot 3^{L-1}-2^{L-1}}{4^L}\right)\max_\ell C_2\left(\frac{\ell}{L}\right), \label{eq:S2_ub}
\end{align}
where the minimization and maximization are taken over all integers $\ell \in [1, L-1]$.

The summation $S_3$ involves four different sign vectors, although the total number of summation is about $16^{L-1}$, a large amount of them can be excluded by the averaging over phase. To have $\mathbb{E}\left[w_{\bm s,k^{(+1)}}w^*_{\bm s^\prime, k^{(+1)}}w_{\bm r,k^{(\mu)}}w^*_{\bm r^\prime, k^{(\mu)}}\right]$ nonzero, $\Phi_{\bm s}-\Phi_{\bm s^\prime}+\Phi_{\bm r}-\Phi_{\bm r^\prime}$ can only be a constant that is independent of any $\phi_\ell$. According to Eq.~\eqref{eq:Phi_def}, it requires the coefficient for each $\phi_\ell$ to be zero,
\begin{equation}
    \left({\bf d}\bm s_\ell-1\right)\bm s_\ell-\left({\bf d}\bm s_\ell^\prime-1\right)\bm s_\ell^\prime +\left({\bf d}\bm r_\ell-1\right)\bm r_\ell-\left({\bf d}\bm r^\prime_\ell-1\right)\bm r_\ell^\prime = 0.
\end{equation}
Note that ${\bf d}\bm s_\ell = |\bm s_\ell - \bm s_{\ell-1}|/2=(1-\bm s_\ell \bm s_{\ell-1})/2$, then the above constraint can reduce to
\begin{align}
    &\bm s_\ell(1-\bm s_\ell \bm s_{\ell-1}) - \bm s_\ell^\prime(1-\bm s^\prime_\ell \bm s^\prime_{\ell-1}) + \bm r_\ell(1-\bm r_\ell \bm r_{\ell-1}) - \bm r^\prime_\ell(1-\bm r^\prime_\ell \bm r^\prime_{\ell-1}) = 2(\bm s_\ell -\bm s^\prime_\ell + \bm r_\ell - \bm r_\ell^\prime) \nonumber\\
    &\Rightarrow \bm s_\ell - \bm s_\ell^\prime + \bm r_\ell - \bm r_\ell^\prime = -( \bm s_{\ell-1} - \bm s_{\ell-1}^\prime + \bm r_{\ell-1} - \bm r_{\ell-1}^\prime),
\end{align}
where we have used $\bm s_\ell^2=\bm s_\ell^{\prime 2}=\bm r_\ell^2=\bm r_\ell^{\prime 2}=1$ to get the last line.
As $\bm s_L - \bm s_L^\prime + \bm r_L - \bm r_L^\prime = 0$, the constraint above becomes
\begin{align}
    \bm s_\ell + \bm r_\ell - \bm s^\prime_\ell - \bm r^\prime_\ell = 0,\quad \forall \ell \in [1, L]\cap \mathbb{N}.
    \label{eq:S3_constraint}
\end{align}
In Table.~\ref{tab:S3_table}, we list all the combination of $\bm s_\ell, \bm s^\prime_\ell, \bm r_\ell, \bm r^\prime_\ell$ with $1\le \ell \le L-1$ up to a global reverse of signs and test if the constraint Eq.~\eqref{eq:S3_constraint} is satisfied. A global reverse of all signs of $\bm s_\ell, \bm s^\prime_\ell, \bm r_\ell, \bm r^\prime_\ell$ for instance only $\bm s_\ell = -1$ and only $\bm s_\ell = +1$ leads to same satisfiability result.
\begin{table}[t]
    \centering
    \begin{tabular}{|c|c|c|c|c|}
        \hline
        $\bm s_\ell$ & $\bm s_\ell^\prime$ & $\bm r_\ell$ & $\bm r_\ell^\prime$ & Is Eq.~\eqref{eq:S3_constraint} satisfied\\
        \hline
        $-1$ & $-1$ & $-1$ & $-1$ & Yes\\
        \hline
        $-1$ & $-1$ & $-1$ & $+1$ & No\\
        \hline
        $-1$ & $-1$ & $+1$ & $-1$ & No\\
        \hline
        $-1$ & $+1$ & $-1$ & $-1$ & No\\
        \hline
        $+1$ & $-1$ & $-1$ & $-1$ & No\\
        \hline
        $-1$ & $-1$ & $+1$ & $+1$ & Yes\\
        \hline
        $-1$ & $+1$ & $-1$ & $+1$ & No\\
        \hline
        $+1$ & $-1$ & $-1$ & $+1$ & Yes\\
        \hline
    \end{tabular}
    \caption{A satisfiability test of Eq.~\eqref{eq:S3_constraint} for all possible combination of $\bm s_\ell, \bm s_\ell^\prime, \bm r_\ell, \bm r_\ell^\prime$ with $1\le \ell \le L-1$ up to a global reverse of signs.}
    \label{tab:S3_table}
\end{table}
Summarized from Table.~\ref{tab:S3_table}, there are only three allowed combination of $\bm s_\ell, \bm s^\prime_\ell, \bm r_\ell, \bm r^\prime_\ell$, therefore the partition of $\bm s\cdot \bm \beta$, $\bm s^\prime\cdot\bm \beta$, $\bm r\cdot\bm \beta$, $\bm r^\prime\cdot\bm \beta$ is
\begin{align}
    \bm s\cdot \bm \beta &= \alpha_{z} + \alpha_{\tilde{z}} + \alpha_{1-z-\tilde{z}}\nonumber\\
    \bm s^\prime \cdot \bm \beta &= \alpha_{z} + \alpha_{\tilde{z}} - \alpha_{1-z-\tilde{z}}\nonumber\\
    \bm r\cdot \bm \beta &= \alpha_{z} - \alpha_{\tilde{z}} - \alpha_{1-z-\tilde{z}}\nonumber\\
    \bm r^\prime\cdot \bm \beta &= \alpha_{z} - \alpha_{\tilde{z}} + \alpha_{1-z-\tilde{z}},
    \label{eq:S3_Bstructure}
\end{align}
where $\alpha_{z}\sim \calN_{z E}^{\rm C}$ is Gaussian variable, and similar for $\alpha_{\tilde{z}}$ and $\alpha_{1-z-\tilde{z}}$. As $\bm s,\bm r,\bm s^\prime,\bm r^\prime$ are different, the $z,\tilde{z}$ are limited to $z = \ell_1/L, \tilde{z} = \ell_2/L$ with $\ell_1,\ell_2\in [1,L-2]\cap \mathbb{N}$ and $\ell_1+\ell_2\le L-1$.

The average over displacements in $S_3$ is
\begin{align}
    &\mathbb{E}\left[\braket{\psi|B_{\bm s}}\braket{B_{\bm s^\prime}|\psi}\braket{\psi|B_{\bm r}}\braket{B_{\bm r^\prime}|\psi}\right]\nonumber\\
    &= \mathbb{E}_{\alpha_y\sim \calN_{yE}^{\rm C}}\left[e^{i(\chi_{\bm s}-\chi_{\bm s^\prime}+\chi_{\bm r}-\chi_{\bm r^\prime})}\prod_{a=0}^1 \braket{\psi|\alpha_{z}+(-1)^a \alpha_{\tilde{z}}+(-1)^a \alpha_{1-z-\tilde{z}}}\braket{\alpha_{z}+(-1)^a\alpha_{\tilde{z}}-(-1)^a\alpha_{1-z-\tilde{z}}|\psi}\right]\\
    &\le \mathbb{E}_{\alpha_y\sim \calN_{yE}^{\rm C}}\left[\prod_{a=0}^1 |\braket{\psi|\alpha_{z}+(-1)^a \alpha_{\tilde{z}}+(-1)^a \alpha_{1-z-\tilde{z}}}||\braket{\alpha_{z}+(-1)^a\alpha_{\tilde{z}}-(-1)^a\alpha_{1-z-\tilde{z}}|\psi}|\right]\equiv C_3(z,\tilde{z}),
    \label{eq:C3_def}
\end{align}
where we upper bound it by $\mathbb{E}[x]\le \mathbb{E}[|x|]$ in the last line.
In each class of states under consideration, we will show that $C_3(z, \tilde{z})$ leads to higher order terms compared to Eq.~\eqref{eq:C1_def} and~\eqref{eq:C2_def} and thus can be neglected in the asymptotic region of $E$ in later discussion.

To conclude, we have the lower and upper bounds for variance of gradient in Eq.~\eqref{eq:variance_single_mode} as
\begin{align}
    {\rm Var}\left[\partial_{\theta_k}\calC(
    \bm x)\right] &= \frac{1}{2}\left[\frac{3^{L-1}}{4^L} C_1 + \left(\frac{1}{4}-\frac{2\cdot 3^{L-1}-2^{L-1}}{4^L}\right)\min_{\ell}C_2\left(\frac{\ell}{L}\right)\right] + \mathcal{O}\left(C_2\right)\nonumber\\
    &\ge \frac{1}{2}\left[\frac{3^{L-1}}{4^L} C_1 + \left(\frac{1}{4}-\frac{3^{L}}{4^L}\right)\min_{\ell}C_2\left(\frac{\ell}{L}\right)\right] + \mathcal{O}\left(C_2\right),
    \label{eq:var_lb}
\end{align}
and
\begin{align}
     {\rm Var}\left[\partial_{\theta_k}\calC(
    \bm x)\right] &= \frac{1}{2}\left[\frac{3^{L-1}}{4^L} C_1 + \left(\frac{1}{4}-\frac{2\cdot 3^{L-1}-2^{L-1}}{4^L}\right)\max_{\ell}C_2\left(\frac{\ell}{L}\right)\right] + \mathcal{O}\left(C_2\right)\nonumber\\
    &\le \frac{1}{2}\left[\frac{3^{L-1}}{4^L} C_1 + \left(\frac{1}{4}+\frac{2^{L-1}}{4^L}\right)\max_{\ell}C_2\left(\frac{\ell}{L}\right)\right] + \mathcal{O}\left(C_2\right),
    \label{eq:var_ub}
\end{align}
where the minimization and maximization are over integers $1\le \ell \le L-1$.
The exact expression of correlators $C_1, C_2$ in the above bounds depend on the specific target state, and in the following, we evaluate those correlators with different target state $\ket{\psi}_m$ to provide an insight of their asymptotic behavior, and thus the behavior of the gradient. In subsection~\ref{app:var_gaussian}, we consider single-mode Gaussian states; In subsection~\ref{app:var_fock}, we consider Fock number states.

\subsection{Single-mode Gaussian states}
\label{app:var_gaussian}

Suppose the target qumode state is an arbitrary Gaussian (pure) state, as the correlators $C_1$ and $C_2$ only depend on the fidelity between target state $\ket{\psi}_m$ and a coherent states, analytical evaluation is possible thanks to Refs.~\cite{marian2012uhlmann,spedalieri2012limit,banchi2015quantum}. For a brief introduction to Gaussian states, please refer to Appendix~\ref{app:Gaussian_intro}.

As one can see from Eqs.~\eqref{eq:C1_def} and~\eqref{eq:C2_def}, the correlators we need to evaluate only depends on the fidelity between target 
state $\ket{\psi}_m$ and coherent states, we first evaluate the fidelity between Gaussian state in Eq.~\eqref{eq:one_mode_gaussian} and coherent state $\ket{\alpha}$ from Eq.~\eqref{eq:Fidelity_gaussian}.
\begin{lemma}
    The fidelity between an arbitrary one-mode Gaussian state $\ket{\psi}_m = D(\gamma)R(\tau)S(\zeta)\ket{0}_m$ and a coherent state $\ket{\alpha}_m = D(\alpha)\ket{0}_m$ is
    \begin{equation}
        F(\psi, \ket{\alpha}) = \sech(\zeta)e^{-(1+\kappa_1)(\Re{\gamma}-\Re{\alpha})^2-(1-\kappa_1)(\Im{\gamma}-\Im{\alpha})^2+2\kappa_2(\Re{\gamma}-\Re{\alpha})(\Im{\gamma}-\Im{\alpha})},
        \label{eq:one_mode_gaussian_fidelity}
    \end{equation}
    where $\kappa_1\equiv \cos(2\tau)\tanh(\zeta)$ and $\kappa_2\equiv \sin(2\tau)\tanh(\zeta)$.
    \label{lemma:one_mode_gaussian_fidelity}
\end{lemma}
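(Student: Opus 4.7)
The plan is to apply the general Gaussian fidelity formula (Eq.~\eqref{eq:Fidelity_gaussian}) directly, since both states in question are pure Gaussian and their means/covariance matrices are given explicitly in Appendix~\ref{app:Gaussian_intro}. First I would read off the first and second moments: from Eqs.~\eqref{eq:mean_one_mode},~\eqref{eq:CM_one_mode} the state $\ket{\psi}_m$ has mean $\overline{\mathcal{X}}_\psi = (2\Re\gamma,2\Im\gamma)^T$ and covariance matrix $V_\psi$ with entries depending on $\zeta,\tau$; for the coherent state $\ket{\alpha}_m$ one has $\overline{\mathcal{X}}_\alpha = (2\Re\alpha,2\Im\alpha)^T$ and $V_\alpha = \bI_2$. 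Thus $\bm d = \overline{\mathcal{X}}_\psi - \overline{\mathcal{X}}_\alpha = 2(\Re\gamma-\Re\alpha,\Im\gamma-\Im\alpha)^T$, and only $V_\psi+\bI_2$ needs to be inverted.

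Next I would compute the prefactor $2^M/\sqrt{\det(V_\psi+\bI_2)}$. Writing $V_\psi+\bI_2=\begin{pmatrix}a & c\\ c & b\end{pmatrix}$ with $a=1+e^{2\zeta}\sin^2\tau+e^{-2\zeta}\cos^2\tau$, $b=1+e^{2\zeta}\cos^2\tau+e^{-2\zeta}\sin^2\tau$, and $c=\sin(2\tau)\sinh(2\zeta)$, the key observation is that $a+b=2+2\cosh(2\zeta)=4\cosh^2\zeta$ and the product $ab$ can be expanded using $\cosh(4\zeta)-1=2\sinh^2(2\zeta)$ so that the off-diagonal term $c^2=\sin^2(2\tau)\sinh^2(2\zeta)$ cancels cleanly in $ab-c^2$. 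This yields $\det(V_\psi+\bI_2)=4\cosh^2\zeta$ and hence the prefactor $\sech\zeta$, matching the claim.

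For the exponent I would compute $(V_\psi+\bI_2)^{-1}=\frac{1}{4\cosh^2\zeta}\begin{pmatrix} b & -c\\ -c & a\end{pmatrix}$ and form the quadratic $\tfrac12\bm d^T(V_\psi+\bI_2)^{-1}\bm d$. This gives
\[
\frac{1}{2\cosh^2\zeta}\Big[b(\Re\gamma-\Re\alpha)^2-2c(\Re\gamma-\Re\alpha)(\Im\gamma-\Im\alpha)+a(\Im\gamma-\Im\alpha)^2\Big].
\]
The final step is to identify the trigonometric coefficients with $1\pm\kappa_1$ and $\kappa_2$. Using $e^{\pm 2\zeta}=\cosh(2\zeta)\pm\sinh(2\zeta)$ one rewrites $a=2\cosh^2\zeta-\sinh(2\zeta)\cos(2\tau)$ and $b=2\cosh^2\zeta+\sinh(2\zeta)\cos(2\tau)$, and then $\sinh(2\zeta)/(2\cosh^2\zeta)=\tanh\zeta$ immediately produces $b/(2\cosh^2\zeta)=1+\cos(2\tau)\tanh\zeta=1+\kappa_1$, $a/(2\cosh^2\zeta)=1-\kappa_1$, and $c/(2\cosh^2\zeta)=\sin(2\tau)\tanh\zeta=\kappa_2$. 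Substituting reproduces Eq.~\eqref{eq:one_mode_gaussian_fidelity}.

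The proof is essentially mechanical; the only mildly non-obvious step is the cancellation in $\det(V_\psi+\bI_2)$, where the squeezing-and-rotation cross term $c^2=\sin^2(2\tau)\sinh^2(2\zeta)$ exactly matches the $\sin^2(2\tau)$-dependent piece generated by the $\cosh(4\zeta)$ term in $ab$. Once this is observed, the remaining algebra is pure trigonometric/hyperbolic identity manipulation.
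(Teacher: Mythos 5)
Your proposal is correct and follows exactly the route the paper intends: the lemma is an immediate application of the pure-state Gaussian fidelity formula in Eq.~\eqref{eq:Fidelity_gaussian} with the moments of Eqs.~\eqref{eq:mean_one_mode},~\eqref{eq:CM_one_mode}, and your algebra (in particular $\det(V_\psi+\bI_2)=4\cosh^2\zeta$ via the cancellation of the $\sin^2(2\tau)\sinh^2(2\zeta)$ terms, and the identifications $b/(2\cosh^2\zeta)=1+\kappa_1$, $a/(2\cosh^2\zeta)=1-\kappa_1$, $c/(2\cosh^2\zeta)=\kappa_2$) checks out. No gaps.
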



The correlator $C_1$ is simply the square of fidelity as
\begin{align}
    C_1^{\rm Gauss} &\equiv \mathbb{E}_{\alpha\sim \calN_E^{\rm C}}\left[|\braket{\psi|\alpha}|^4\right] = \mathbb{E}_{\alpha\sim \calN_E^{\rm C}}\left[F(\psi,\ket{\alpha})^2\right]= \frac{\sech^2(\zeta)e^{-R(E)/G_1(E)}}{\sqrt{G_1(E)}},
    \label{eq:C1_gaussian}
\end{align}
where the last line is obtained from the average over real and imaginary parts of $\alpha$ separately. Here we define
\begin{align}
    G_1(x) &= 1+4x+4\sech^2(\zeta)x^2 \label{eq:G1_def}\\
    R(x) &= 2|\gamma|^2+4\sech^2(\zeta)|\gamma|^2x + 2\tanh(\zeta)|\gamma|^2\cos(2(\varphi+\tau)),
\end{align}
where $\varphi = \arctan(\Im{\gamma}/\Re{\gamma})$ is the angle of complex number $\gamma$. 
In the asymptotic region of $E$, one can see that $C_1^{\rm Gauss} \sim 1/2E$. The above $C_1^{\rm Gauss}$ for coherent state with $\zeta=0,\tau=0$ and single-mode squeezed vacuum (SMSV) state with $\gamma=0, \tau=0$ is reduced to
\begin{align}
    C_1^{\rm Coh} &= \frac{e^{-2|\gamma|^2/(1+2E)}}{1+2E}, \label{eq:C1_coh}\\
    C_1^{\rm SMSV} &= \frac{\sech^2(\zeta)}{\sqrt{1+4E+4\sech^2(\zeta) E^2}}. \label{eq:C1_smsv}
\end{align}

The correlator $C_2$ is the product of fidelity between $\psi$ and coherent states $\ket{\alpha_z\pm \alpha_{1-z}}$ as
\begin{align}
     C_2^{\rm Gauss}(z) &\equiv \mathbb{E}_{\alpha_y\sim \calN_{yE}^{\rm C}}\left[\prod_{h=0}^1|\braket{\psi|\alpha_z+(-1)^h\alpha_{1-z}}|^2 \right] 
     \nonumber
     \\
     &= \mathbb{E}_{\alpha_y\sim \calN_{yE}^{\rm C}}\left[\prod_{h=0}^1F(\psi,\ket{\alpha_z+(-1)^h\alpha_{1-z}}) \right]=\frac{\sech^2(\zeta)e^{-R(zE)/G_1(zE)}}{\sqrt{G_1(E-zE)G_1(zE)}}. \label{eq:C2_gaussian}
\end{align}
In the asymptotic region of $E$, we also see that $C_2^{\rm Gauss}(z)\sim 1/4z(1-z)E^2$. For coherent and SMSV states, we also have
\begin{align}
    C_2^{\rm Coh}(z) &= \frac{-2|\gamma|^2/(1+2zE)}{[1+2(1-z)E](1+2zE)}, \label{eq:C2_coh}\\
    C_2^{\rm SMSV}(z) &= \frac{\sech^2(\zeta)}{\sqrt{G_1(E-zE)G_1(zE)}}. \label{eq:C2_smsv}
\end{align}

For $C_3$, we have
\begin{align}
    C_3^{\rm Gauss}(z, \tilde{z}) &= \mathbb{E}_{\alpha_y\sim \calN_{yE}^{\rm C}}\left[\prod_{h=0}^1 |\braket{\psi|\alpha_{z}+(-1)^h \alpha_{\tilde{z}}+(-1)^h \alpha_{1-z-\tilde{z}}}||\braket{\alpha_{z}+(-1)^h\alpha_{\tilde{z}}-(-1)^h\alpha_{1-z-\tilde{z}}|\psi}|\right]\nonumber\\
    &= \mathbb{E}_{\alpha_y\sim \calN_{yE}^{\rm C}}\left[\prod_{h=0}^1 \sqrt{F(\psi,\ket{\alpha_{z}+(-1)^h \alpha_{\tilde{z}}+(-1)^h \alpha_{1-z-\tilde{z}}})}\sqrt{F(\psi,\ket{\alpha_{z}+(-1)^h\alpha_{\tilde{z}}-(-1)^h\alpha_{1-z-\tilde{z}}})}\right]\nonumber\\
    &= \frac{\sech^2(\zeta)e^{-R(zE)/G_1(z E)}}{\sqrt{G_1(z E)G_1(\tilde{z} E)G_1[(1-z-\tilde{z})E]}},
    \label{eq:C3_gaussian}
\end{align}
which clearly approaches the scaling of $1/E^3$ in the asymptotic region of $E$, and thus can be omitted. We expect that the scaling of $C_3$ can also be generalized to other non-Gaussian states as well, though it may not be easy to solve.

The bounds for variance of gradient in preparation of an arbitrary Gaussian state are
\begin{align}
    {\rm Var}\left[\partial_{\theta_k}\calC(
    \bm x)\right]
    &\ge \frac{1}{2}\left[\frac{3^{L-1}}{4^L}  C_1^{\rm Gauss} + \left(\frac{1}{4}-\frac{3^{L}}{4^L}\right)\min_{\ell} C_2^{\rm Gauss}\left(\frac{\ell}{L}\right)\right] + \mathcal{O}\left(\frac{1}{E^3}\right) \label{eq:var_lb_gaussian},\\
    {\rm Var}\left[\partial_{\theta_k}\calC(
    \bm x)\right]
    &\le \frac{1}{2}\left[\frac{3^{L-1}}{4^L}  C_1^{\rm Gauss} + \left(\frac{1}{4}+\frac{2^{L-1}}{4^L}\right)\max_{\ell} C_2^{\rm Gauss}\left(\frac{\ell}{L}\right)\right] + \mathcal{O}\left(\frac{1}{E^3}\right) ,\label{eq:var_ub_gaussian}
\end{align}
where the minimization and maximization is over all integers $1\le \ell \le L-1$. In the asymptotic region of $E$, as $C_1^{\rm Gauss}$ and $C_2^{\rm Gauss}$ shows different scaling, we can find the variance of the gradient is dominated by $1/E$ when
\begin{align}
    &\frac{1/4E^2}{\left(3/4\right)^L/6E} \in \mathcal{O}(1)\Rightarrow E \in \Omega(1)\frac{3}{2}\left(\frac{4}{3}\right)^L\in \Omega(\exp L),\label{eq:Ec}
\end{align}
or equivalently, 
\begin{align}
    L\in \frac{1}{\log(4/3)}\log\left(\mathcal{O}(1)\frac{2E}{3}\right)\in \mathcal{O}(\log E).
\end{align}
Depending on the energy, we can classify the CV VQCs in asymptotic $E$ region as shallow and deep circuits.

When the circuit is as shallow as $L \in \mathcal{O}\left(\log E\right)$, the bounds for variance of gradient is dominated by the first $\sim1/E$ term from correlator $C_1^{\rm Gauss}$, which are identical and thus describe the variance of the gradient as
\begin{align}
    {\rm Var}\left[\partial_{\theta_k}\calC\right] = \frac{1}{6}\left(\frac{3}{4}\right)^L C_1^{\rm Gauss}  + \mathcal{O}\left(\frac{1}{E^2}\right).
    \label{eq:var_shallow_gaussian}
\end{align}
In the preparation of a nonzero mean Gaussian state, i.e. coherent state, the leading order above brings us a peak of variance at about $E\sim |\gamma|^2=E_t$ which is the target state energy, and the variance shows the scaling of $1/E$ when $E\gtrsim E_t$. While for zero-mean Gaussian state i.e. SMSV state, Eq.~\eqref{eq:C1_smsv} monotonically decreases with $E$, and it can be estimated that when $E\ge \cosh(\zeta)=\sqrt{1+E_t}$, the variance of the gradient approaches the scaling of $1/E$.

On the other hand, when the circuit is as deep as $L \in \Omega(\log E)$, then the bounds of variance is denominated by the second $\sim1/E^2$ term from correlator $C_2^{\rm Gauss}$, and bounded as
\begin{align}
    \frac{1}{8}\min_{\ell}C_2^{\rm Gauss}\left(\frac{\ell}{L}\right) \le {\rm Var} \left[\partial_{\theta_k}\calC\right] \le \frac{1}{8}\max_{\ell} C_2^{\rm Gauss}\left(\frac{\ell}{L}\right).
     \label{eq:var_deep_gaussian}
\end{align}
In the asymptotic region of $E$, both sides follow the scaling of $1/E^2$, and so as the variance of the gradient itself. For a nonzero mean Gaussian state like coherent state, there is also a peak of variance by solving the extremals of Eq.~\eqref{eq:C2_coh}, which stays in the range of $[E_t/2, E_t]$. Beyond it, the variance begins to approach $1/E^2$. However, for zero-mean Gaussian state like SMSV state, $C_2^{\rm SMSV}(z)$ in Eq.~\eqref{eq:C2_smsv} simply decreases with $E$, and through a comparison of terms involving $E^{3/2}$ and $E^2$ in the denominator,  the scaling of $1/E^2$ is estimated to start from $E\sim L\cosh{\zeta}=L\sqrt{1+E_t}$.

\subsection{Fock number states}
\label{app:var_fock}

For non-Gaussian states, the evaluation of fidelity is in general difficult. In this part, we consider the preparation of a Fock number state with a closed form of fidelity to provide a physical insight on the scaling of variance in preparation of non-Gaussian states. Fock number states form a complete orthonormal basis for the Hilbert space. In this section, we will denote a number state as $\ket{E_t}^{\rm F}$. 
We begin with lemmas about the distribution of the norm and complex argument angle of a Gaussian complex variable $\alpha \sim \calN_{\sigma^2}^{\rm C}$. As they are widely known, we simply state the results.
\begin{lemma}
    Given a complex Gaussian distributed random variable $\alpha\sim \calN_{\sigma^2}^{\rm C}$, the square of its norm follows the Gamma distribution $|\alpha|^2 \sim {\rm Gamma}(1,\sigma^2)$, with probability density function $p(|\alpha|^2) = e^{-|\alpha|^2/\sigma^2}/\sigma^2$. The argument $\arg\{\alpha\}\equiv \tan^{-1}\left(\Im{\alpha}/\Re{\alpha}\right)$ is uniform in $[-\pi,\pi]$.
\end{lemma}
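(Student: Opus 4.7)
The plan is to reduce this to a standard change-of-variables computation, since by Definition~\ref{definition:Dlayer_circuit} the convention $\calN_{\sigma^2}^{\rm C}$ means that $\Re\alpha$ and $\Im\alpha$ are independent real Gaussians, each with zero mean and variance $\sigma^2/2$. I would first write out the explicit joint density
\begin{equation}
p(\Re\alpha,\Im\alpha)=\frac{1}{\pi\sigma^2}\exp\left(-\frac{(\Re\alpha)^2+(\Im\alpha)^2}{\sigma^2}\right),
\end{equation}
which already exhibits rotational symmetry in the $(\Re\alpha,\Im\alpha)$ plane, and this symmetry is what drives both claims.

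Next I would pass to polar coordinates $r=|\alpha|$, $\theta=\arg\{\alpha\}\in[-\pi,\pi]$, using the Jacobian $d\Re\alpha\,d\Im\alpha = r\,dr\,d\theta$. The joint density in the new variables factorizes as
\begin{equation}
p(r,\theta)=\frac{r}{\pi\sigma^2}\,e^{-r^2/\sigma^2}=\underbrace{\left(\frac{2r}{\sigma^2}e^{-r^2/\sigma^2}\right)}_{p_r(r)}\cdot \underbrace{\frac{1}{2\pi}}_{p_\theta(\theta)},
\end{equation}
which immediately shows that $r$ and $\theta$ are independent, with $\theta$ uniform on $[-\pi,\pi]$.

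Finally, I would perform the substitution $u=r^2=|\alpha|^2$ with $du=2r\,dr$, obtaining $p_u(u)=\sigma^{-2}e^{-u/\sigma^2}$ for $u\ge 0$, which is precisely the Gamma distribution with shape $1$ and scale $\sigma^2$ (equivalently, an exponential distribution with mean $\sigma^2$). There really is no hard step here: this is a textbook consequence of the circular symmetry of the complex Gaussian, and the only thing to be careful about is matching the paper's convention that $\calN_{\sigma^2}^{\rm C}$ assigns variance $\sigma^2/2$ (not $\sigma^2$) to each quadrature, so that the resulting exponential rate for $|\alpha|^2$ is $1/\sigma^2$ rather than $2/\sigma^2$.
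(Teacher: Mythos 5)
Your proof is correct and complete: the joint density under the paper's convention ($\sigma^2/2$ per quadrature) is indeed $\frac{1}{\pi\sigma^2}e^{-|\alpha|^2/\sigma^2}$, the polar factorization gives independence and uniformity of the argument, and the substitution $u=r^2$ yields $p(u)=e^{-u/\sigma^2}/\sigma^2$, i.e.\ ${\rm Gamma}(1,\sigma^2)$ as claimed. The paper itself supplies no proof, explicitly stating the result as widely known, so your textbook change-of-variables argument is exactly the derivation the authors are implicitly invoking, and you correctly handled the one subtlety (the variance convention for $\calN_{\sigma^2}^{\rm C}$).
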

One can further find that the difference of the arguments of two complex Gaussian variables $\alpha_1$, $\alpha_2$ from the same ensemble satisfy the triangular distribution as the following.
\begin{corollary}
    For complex Gaussian variables $\alpha_i\in \calN_{\sigma_i^2}^{\rm C}$ with $i=1,2$, the difference of their argument $\delta \equiv \arg\{\alpha_1\}-\arg\{\alpha_2\}$ satisfies the triangular distribution $\delta \sim {\rm Tri}(-2\pi,2\pi,0)$ with distribution $p(\delta)=(2\pi-|\delta|)/4\pi^2$.
\end{corollary}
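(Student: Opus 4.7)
The plan is to reduce the statement to the elementary fact that the difference of two independent uniform random variables on $[-\pi,\pi]$ has a triangular distribution. First, I would invoke the preceding lemma to conclude that $\arg\{\alpha_1\}$ and $\arg\{\alpha_2\}$ are each uniform on $[-\pi,\pi]$. Independence of $\alpha_1$ and $\alpha_2$ (implicit in the statement, since they are drawn from separate ensembles $\calN_{\sigma_i^2}^{\rm C}$) immediately implies independence of their arguments. Note that the $\sigma_i^2$ play no role: only the uniformity of each argument matters, which is why the result is independent of the variances.

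Next, writing $X=\arg\{\alpha_1\}$ and $Y=\arg\{\alpha_2\}$ with densities $p_X(x)=p_Y(y)=\frac{1}{2\pi}\mathbf{1}_{[-\pi,\pi]}$, I would compute the density of $\delta = X-Y$ by convolution,
\begin{equation}
p(\delta) = \int_{-\infty}^{\infty} p_X(x)\, p_Y(x-\delta)\, dx = \frac{1}{4\pi^2}\,\bigl|[-\pi,\pi]\cap[\delta-\pi,\delta+\pi]\bigr|,
\end{equation}
where the last factor is the length of the overlap of the two intervals of integration. A direct case analysis on the sign of $\delta$ shows the overlap has length $2\pi-|\delta|$ for $|\delta|\le 2\pi$ and $0$ otherwise, yielding $p(\delta)=(2\pi-|\delta|)/(4\pi^2)$ on $[-2\pi,2\pi]$. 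This is precisely the triangular density $\mathrm{Tri}(-2\pi,2\pi,0)$.

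Finally, I would verify the normalization $\int_{-2\pi}^{2\pi} p(\delta)\,d\delta = 1$ as a consistency check, which is immediate from the area of the triangle with base $4\pi$ and height $1/(2\pi)$. There is no real obstacle here: the result is a textbook convolution, and the only subtlety is noticing that the $\sigma_i^2$ dependence drops out because only the (variance-independent) uniform distribution of the arguments enters.
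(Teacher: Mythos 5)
Your proof is correct; the convolution of two independent uniform densities on $[-\pi,\pi]$ is exactly the standard argument, and your observation that the variances $\sigma_i^2$ drop out (only uniformity of each argument matters) is the right reason the result holds for unequal ensembles. The paper itself states this corollary without proof as an immediate consequence of the preceding lemma, so your write-up simply supplies the omitted textbook computation and matches the intended reasoning.
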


With the lemmas in hand, we can obtain the correlator $C_1$ as
\begin{align}
    C_1^{\rm Fock} &\equiv \mathbb{E}_{\alpha\sim \calN_E^{\rm C}}\left[|{}^{\rm F}\langle E_t\ket{\alpha}|^4\right] = \mathbb{E}_{\alpha\sim \calN_E^{\rm C}} \left[\frac{e^{-2|\alpha|^2}}{(E_t!)^2}|\alpha|^{4E_t}\right]= \mathbb{E}_{|\alpha|^2 \sim {\rm Gamma}(1,E)}\left[\frac{e^{-2|\alpha|^2}}{(E_t!)^2}|\alpha|^{4E_t}\right]= \frac{(2E_t)!}{(2^{E_t}E_t!)^2}\frac{\left(1+1/2E\right)^{-2E_t}}{1+2E}.
    \label{eq:C1_fock}
\end{align}
Similarly, the correlator $C_2$ becomes
\begin{align}
    &C_2^{\rm Fock}(x) \equiv \mathbb{E}_{\alpha_y\sim \calN_{yE}^{\rm C}}\left[\prod_{h=0}^1|{}^{\rm F}\langle E_t\ket{\alpha_z+(-1)^h\alpha_{1-z}}|^2 \right] = \mathbb{E}_{\alpha_y\sim \calN_{yE}^{\rm C}}\left[\frac{e^{-|\alpha_z+\alpha_{1-z}|^2}}{E_t!}|\alpha_z+\alpha_{1-z}|^{2E_t} \frac{e^{-|\alpha_z-\alpha_{1-z}|^2}}{E_t!}|\alpha_z-\alpha_{1-z}|^{2E_t}\right]\nonumber\\
    &= \frac{1}{(E_t!)^2}\mathbb{E}_{|\alpha_y|^2\sim {\rm Gamma}(1,yE), \delta\sim {\rm Tri}(-2\pi,2\pi,0)} \left[e^{-2|\alpha_z|^2-2|\alpha_{1-z}|^2}\left(|\alpha_z|^4+|\alpha_{1-z}|^4+2|\alpha_z|^2|\alpha_{1-z}|^2 - 4|\alpha_z|^2|\alpha_{1-z}|^2\cos^2\delta\right)^{E_t} \right] \nonumber\\
    &= \frac{1}{(E_t!)^2}\mathbb{E}_{|\alpha_y|^2\sim {\rm Gamma}(1,yE)}\left[e^{-2(|\alpha_z|^2 + |\alpha_{1-z}|^2)}\mathbb{E}_{\delta\sim {\rm Tri}(-2\pi,2\pi,0)}\left[\left(|\alpha_z|^4+|\alpha_{1-z}|^4-2|\alpha_z|^2|\alpha_{1-z}|^2\cos(2\delta)\right)^{E_t}\right]\right]\nonumber\\
    &= \frac{1}{(E_t!)^2} \mathbb{E}_{|\alpha_y|^2\sim {\rm Gamma}(1,yE)}\left[e^{-2(|\alpha_z|^2 + |\alpha_{1-z}|^2)} \left(|\alpha_z|^2 + |\alpha_{1-z}|^2\right)^{2E_t} {}_2F_1\left(\frac{1}{2},-E_t,1,\frac{4|\alpha_z|^2 |\alpha_{1-z}|^2}{(|\alpha_z|^2 + |\alpha_{1-z}|^2)^2}\right)\right],
    \label{eq:C2_fock_reduce}
\end{align}
where ${}_2F_1(a,b,c,z)$ is the hypergeometric function. The integral over $|\alpha_z|^2, |\alpha_{1-z}|^2$ above is hard to evaluate, but noticing that $0\le 4|\alpha_z|^2 |\alpha_{1-z}|^2/(|\alpha_z|^2 + |\alpha_{1-z}|^2)^2 \le 1$,
\begin{align}
     {}_2F_1\left(\frac{1}{2},-E_t,1,1\right) \le {}_2F_1\left(\frac{1}{2},-E_t,1,\frac{4|\alpha_z|^2 |\alpha_{1-z}|^2}{(|\alpha_z|^2+|\alpha_{1-z}|^2)^2}\right) \le 1,
\end{align}
where the L.H.S. is a constant depending on $E_t$ only. The ensemble average in Eq.~\eqref{eq:C2_fock_reduce} without hypergeometric function is
\begin{align}
    &\frac{1}{(E_t!)^2} \mathbb{E}_{|\alpha_y|^2\sim {\rm Gamma}(1,yE)}\left[e^{-2(|\alpha_z|^2 + |\alpha_{1-z}|^2)} \left(|\alpha_z|^2 + |\alpha_{1-z}|^2\right)^{2E_t}\right]\nonumber\\
    &= \frac{(2E_t)!}{(2^{E_t} E_t!)^2}\frac{\left[ (1-z)(1+2zE)\left(1+\frac{1-2z}{z+2(1-z)zE}\right)^{2E_t}-2(1-z)zE-z\right]}{1-2z}\frac{\left(1+\frac{1}{2zE}\right)^{-2E_t}}{\left[1+2(1-z)E\right](1+2zE)}.
\end{align}
Therefore, we have the correlator $C_2^{\rm Fock}$ as
\begin{align}
    C_2^{\rm Fock}(z) = \eta \frac{(2E_t)!}{(2^{E_t} E_t!)^2}\frac{\left[ (1-z)(1+2zE)\left(1+\frac{1-2z}{z+2(1-z)zE}\right)^{2E_t}-2(1-z)zE-z\right]}{1-2z}\frac{\left(1+\frac{1}{2zE}\right)^{-2E_t}}{\left[1+2(1-z)E\right](1+2zE)} .\label{eq:C2_fock_full}
\end{align}
where $\eta$ equals ${}_2F_1(1/2,-E_t,1,1)$ in lower bound and $1$ in upper bound.
In the asymptotic region of $E$, the long fraction in the middle can be reduced to $1+2E_t$, and thus the correlator becomes
\begin{align}
    C_2^{\rm Fock}(x) = \eta \frac{(1+2E_t)(2E_t)!}{(2^{E_t} E_t!)^2}\frac{\left(1+1/2zE\right)^{-2E_t}}{\left[1+2(1-z)E\right](1+2zE)}.
    \label{eq:C2_fock_simplify}
\end{align}

The correlator $C_3$ for Fock state is
\begin{align}
    C_3^{\rm Fock}(z, \tilde{z}) &= \mathbb{E}_{\alpha_y\sim \calN_{yE}^{\rm C}}\left[\prod_{h=0}^1 |{}^{\rm F}\langle E_t\ket{\alpha_{z}+(-1)^h \alpha_{\tilde{z}}+(-1)^h \alpha_{1-z-\tilde{z}}}||\bra{\alpha_{z}+(-1)^h\alpha_{\tilde{z}}-(-1)^h \alpha_{1-z-\tilde{z}}}E_t\rangle^{\rm F}|\right]\nonumber\\
    &= \mathbb{E}_{\alpha_y \sim \calN_{yE}^{\rm C}}\left[\prod_{h=0}^1 \left(\frac{e^{-|\alpha_z+(-1)^h\alpha_{\tilde{z}}+(-1)^h\alpha_{1-z-\tilde{z}}|^2/2}}{\sqrt{E_t!}}|\alpha_{z}+(-1)^h \alpha_{\tilde{z}}+(-1)^h \alpha_{1-z-\tilde{z}}|^{E_t}\right.\right.\nonumber\\
    &\quad \quad \quad \quad \quad \quad \quad \left.\left. \times\frac{e^{-|\alpha_{z}+(-1)^h\alpha_{\tilde{z}}-(-1)^h\alpha_{1-z-\tilde{z}}|^2/2}}{\sqrt{E_t!}}|\alpha_{z}+(-1)^h\alpha_{\tilde{z}}-(-1)^h\alpha_{1-z-\tilde{z}}|^2\right)\right]
    \label{eq:C3_fock}
\end{align}
It is hard to solve it anlytically, instead, we perform monte-carlo calculation to show its asymptotic scaling in Fig.~\ref{fig:C3_fock}. Here we choose $z=\tilde{z}=1/3$. It clearly shows that in the asymptotic region of $E$, the scaling of $C_3^{\rm Fock}(z,\tilde{z})$ as $\sim 1/E^3$ (dashed-dot line), which is a higher order term compared to $C_1^{\rm Fock}$ and $C_2^{\rm Fock}(z)$, and can be omitted as well.

\begin{figure}
    \centering
    \includegraphics[width=0.4\textwidth]{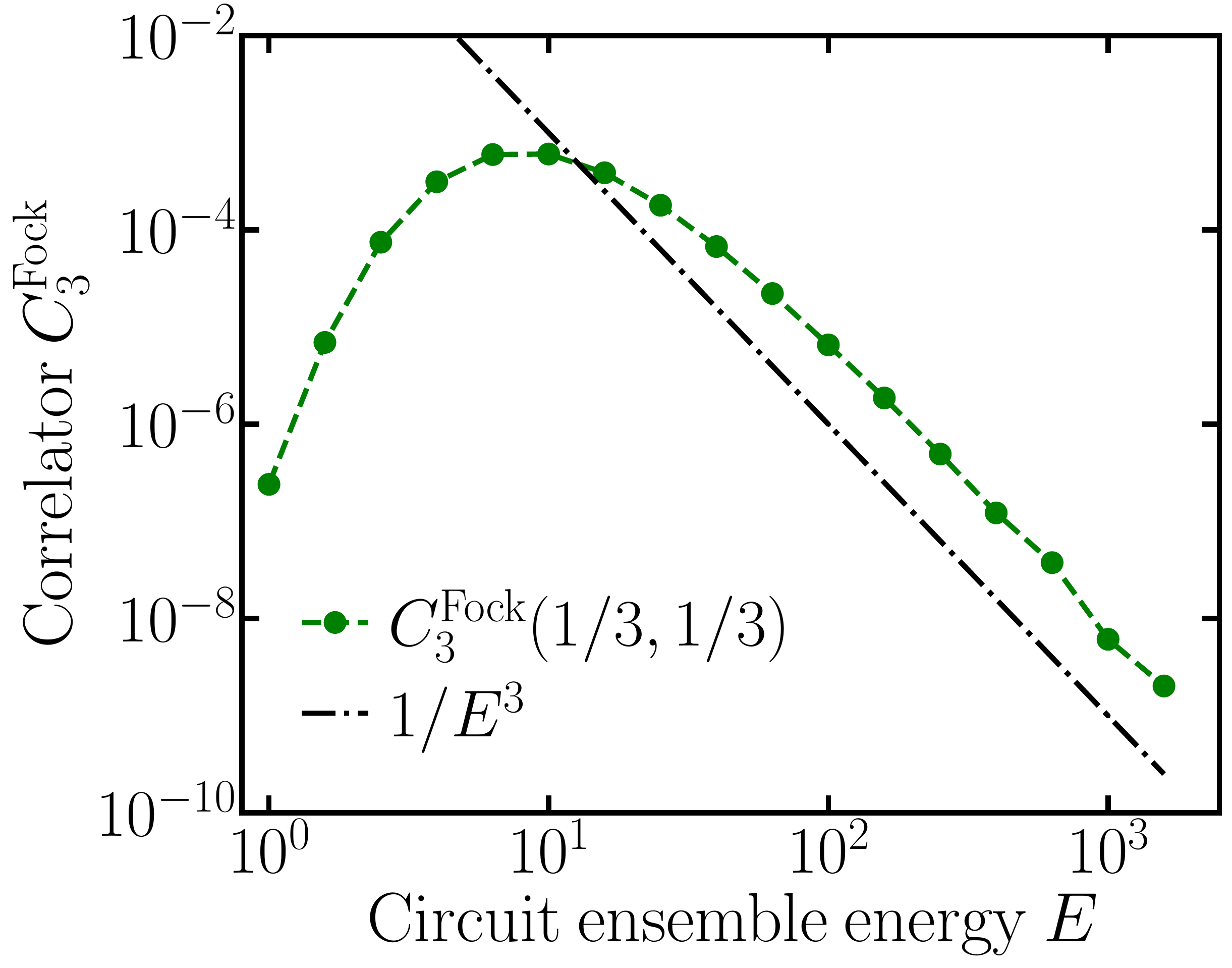}
    \caption{Numerical result of correlator $C_3^{\rm Fock}(z,\tilde{z})$ for Fock state in Eq.~\eqref{eq:C3_fock}. Here we choose $z=\tilde{z}=1/3$. The dashed-dot line is $1/E^3$ for reference.}
    \label{fig:C3_fock}
\end{figure}

Combining Eqs.~\eqref{eq:C1_fock} and~\eqref{eq:C2_fock_simplify}, we have the bounds for variance of gradient in preparation of a Fock state as
\begin{align}
    {\rm Var}\left[\partial_{\theta_k}\calC(
    \bm x)\right]
    &\ge \frac{1}{2}\left[\frac{3^{L-1}}{4^L}  C_1^{\rm Fock} + \left(\frac{1}{4}-\frac{3^{L}}{4^L}\right)\min_{\ell} C_2^{\rm Fock}\left(\frac{\ell}{L}\right)\right] + \mathcal{O}\left(\frac{1}{E^3}\right) \label{eq:var_lb_fock},\\
    {\rm Var}\left[\partial_{\theta_k}\calC(
    \bm x)\right]
    &\le \frac{1}{2}\left[\frac{3^{L-1}}{4^L}  C_1^{\rm Fock} + \left(\frac{1}{4}+\frac{2^{L-1}}{4^L}\right)C_2^{\rm Fock}\left(1-\frac{1}{L}\right)\right] + \mathcal{O}\left(\frac{1}{E^3}\right). \label{eq:var_ub_fock}
\end{align}

Similar to the discussion about Gaussian state preparation, we can also identify the critical $E$ for scaling transition from $1/E^2$ to $1/E$ with fixed $L$ at
\begin{align}
    \frac{\eta(1+2E_t)/4E^2}{\left(3/4\right)^L/6E} \in \mathcal{O}(1) \Rightarrow 
    E\in \Omega(1)\frac{3\eta(1+2E_t)}{2}\left(\frac{4}{3}\right)^L \in \Omega\left(\exp L\right).
    \label{eq:E_critical}
\end{align}
Or equivalently, we have 
\begin{align}
    L\in \frac{1}{\log(4/3)}\log\left(\mathcal{O}(1)\frac{2E}{3\eta(1+2E_t)}\right)\in\mathcal{O}(\log E).
\end{align}

When the circuit depth is as shallow as $L \in \mathcal{O}\left(\log E\right)$, the bounds for variance of gradient is dominated by the first $\sim 1/E$ term from correlator $C_1^{\rm Fock}$, which are identical and thus describe the variance of the gradient as
\begin{align}
    {\rm Var}\left[\partial_{\theta_k}\calC\right] = \frac{1}{6}\left(\frac{3}{4}\right)^L C_1^{\rm Fock}  + \mathcal{O}\left(\frac{1}{E^2}\right).
    \label{eq:var_shallow_coh}
\end{align}
The peak of variance can also be found at $E\sim E_t$. On the other hand, when the circuit depth is as deep as $L \in \Omega(\log E)$, then the bounds of variance is denominated by the second $\sim 1/E^2$ term from correlator $C_2^{\rm Fock}$ as
\begin{align}
    \frac{1}{8}\min_{\ell}C_2^{\rm Fock}\left(\frac{\ell}{L}\right) \le {\rm Var} \left[\partial_{\theta_k}\calC\right] \le \frac{1}{8}C_2^{\rm Fock}\left(1-\frac{1}{L}\right),
     \label{eq:var_deep_coh}
\end{align}
where the coefficient $\eta$ in $C_2^{\rm Fock}$ on L.H.S. and R.H.S of inequality is chosen to be ${}_2F_1(1/2,-E_t, 1,1)$ and $1$ separately. In asymptotic region, the variance also follows the scaling $1/E^2$ and the peak is in the region $[E_t/2, E_t]$.

\section{Variance of gradient in preparation of multi-mode qumode CV states}
\label{app:multi-mode}

In this section, we show that the results in the single-mode case in Appendix~\ref{app:variance} generalize to the variance of the gradient in preparation of an arbitrary multi-mode CV state $\ket{\psi}_{\bm m}$. Lemma~\ref{lemma_universality} (in Appendix~\ref{app:universal_control}) states that one can achieve universal control on multiple modes and one qubit by applying the set of ECD gates and single qubit rotations. Therefore, we consider a ladder setup of gates in circuits, as shown in Fig.~\ref{fig:circuit_modes_scheme}. In the following, we use superscript in $A^{(j)}$ to denote the operator $A$ that applies to all qumode trivially other than $j$th mode.

\begin{figure}
    \centering
    \includegraphics[width=0.9\textwidth]{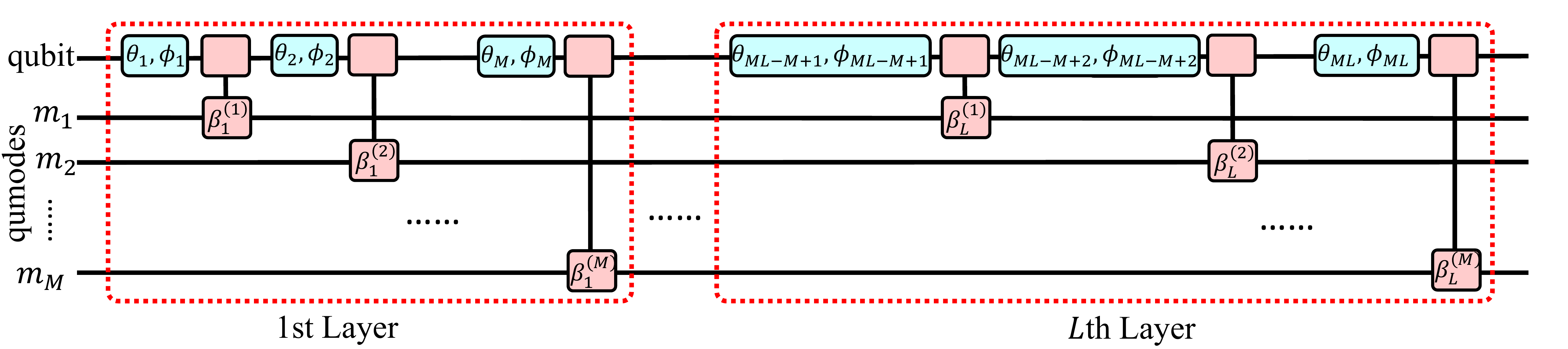}
    \caption{Scheme of $M$-mode $L$-layer CV VQC. Cyan boxes with $\theta_\ell,\phi_\ell$ ranging from $1\le \ell \le ML$ represents the qubit rotation $U_{\rm R}(\theta_\ell, \phi_\ell)$; Pink boxes with $\beta_\ell^{(j)}$ denotes the ECD gate $U_{\rm ECD}^{(j)}(\beta_\ell^{(j)})$ applying on the qubit and $j$th mode.}
    \label{fig:circuit_modes_scheme}
\end{figure}
We begin the analyses by generalizing the state representation in Appendix~\ref{app:state_repre} to the multi-mode case. To simplify the notation, we define ${\bm \beta}^{(j)} = (\beta_1^{(j)}, \dots, \beta_L^{(j)})^T$, ${\bm \theta}= (\theta_1, \dots, \theta_{ML})^T$, ${\bm \phi}= (\phi_1, \dots, \phi_{ML})^T$ and the overall parameters $\bm x=(\{{\bm \beta}^{(j)}\}_{j=1}^M,{\bm \theta}, {\bm \phi})$. We also denote $\bm m=(m_1,\cdots, m_M)$ as all modes.
For an $M$-mode system, each of the $L$ layers involves $M$ single qubit rotations and ECD gates applied between the control qubit and the modes $m_j$ from $j=1$ to $M$ (see the set of gates surrounded by the red dashed box in Fig.~\ref{fig:circuit_modes_scheme}). The corresponding variational parameters in an $L$-depth circuit are $\{\theta_\ell\}_{\ell=1}^{ML},\{\phi_\ell\}_{\ell=1}^{ML}$ and $\bigcup_{j=1}^M \{\beta_\ell^{(j)}\}_{\ell=1}^L$ with superscript $(j)$ denoting the $j$th mode as explained above. 
The unitary of the $M$-mode $L$-depth circuit in Fig.~\ref{fig:circuit_modes_scheme} is
\begin{align}
    U &= \prod_{\ell=1}^L\prod_{j=1}^M U_{\rm ECD}^{(j)}\left(\beta_\ell^{(j)}\right)U_{\rm R}(\theta_{M(\ell-1)+j},\phi_{M(\ell-1)+j})\nonumber\\
    &= \prod_{\ell=1}^L \prod_{j=1}^M \begin{pmatrix}
    e^{i\phi_{M(\ell-1)+j}}\sin\frac{\theta_{M(\ell-1)+j}}{2}D^{(j)}(-\beta_\ell^{(j)}) & \cos\frac{\theta_{M(\ell-1)+j}}{2}D^{(j)}(-\beta_\ell^{(j)})\\
    \cos\frac{\theta_{M(\ell-1)+j}}{2}D^{(j)}(\beta_\ell^{(j)}) & e^{i(\pi-\phi_{M(\ell-1)+j})}\sin\frac{\theta_{M(\ell-1)+j}}{2}D^{(j)}(\beta_\ell^{(j)})
    \end{pmatrix},
\end{align}
where the block matrix representation above is adopted from  Eq.~\eqref{eq:block_matrix}.
Note that the displacement operator $D^{(j)}(\cdot)$ acts on all $M$ modes, where the $j$th mode has a displacement while the rest are trivial identity.
The output state of unitary $U_L^{(M)}$ on initial state $\ket{0}_q\otimes_{j=1}^M \ket{0}_{m_j}$ is
\begin{align}
    &\ket{\psi(\bm \theta, \bm \phi, \{\bm \beta^{(j)}\}_{j=1}^M)}_{q,\bm m} \equiv U \left(\ket{0}_q\otimes_{j=1}^M \ket{0}_{m_j}\right)\nonumber\\
    &= \sum_{a=0}^1 \sum_{\bm s} w_{\bm s,a}(\bm \theta, \bm \phi) \ket{a}_q e^{i\sum_{j=1}^M\chi_{\bm s^{(j)}}(\bm \beta^{(j)})}\bigotimes_{j=1}^M \ket{(-1)^a \bm s^{(j)}\cdot \bm \beta^{(j)}}_{m_j}
    =\sum_{a=0}^1 \sum_{\bm s} w_{\bm s,a}(\bm \theta, \bm \phi) \ket{a}_q \ket{\bm B_{\bm s,a}}_{\bm m},
    \label{eq:psi_multi}
\end{align}
where $\bm s$ is a length-$ML$ sign vector as $\bm s = (\bm s_{1:ML-1},-1)$ with $\bm s_{1:ML-1}\in\{-1,1\}^{ML-1}$, defined in the same way as $\bm s$ in Eq.~\eqref{eq:psi}. The weight $w_{\bm s,a}$ is defined in terms of $\bm s$ and $a$ in the same way as in Eq.~\eqref{eq:psi} via replacing $L\to ML$. 
Here $\bm s^{(j)}$ is the corresponding sign vector for $j$th mode, which is easily generated by collecting all $(M(\ell-1)+j)$th with $\ell\in[1, L-1]$ elements of $\bm s$ in order as
\begin{align}
    \bm s^{(j)} = \left(\bm s_{j}, \bm s_{M+j},\dots, \bm s_{(L-1)M+j}\right),
    \label{eq:s_mode}
\end{align}
where $\bm s_j$ denotes the $j$th element of the whole length-$ML$ sign vector $\bm s$. Note that the sign vectors for all modes $\{\bm s^{(j)}\}_{j=1}^M$ together form a partition of $\bm s$. Inversely, another explicit way to generate all $\{\bm s^{(j)}\}_{j=1}^M$ is
\begin{align}
    \bm s^{(j)} &\in \{-1,1\}^{L}, \quad \quad \text{if $1\le j\le M-1$},\label{eq:s_mode_1}\\
    \bm s^{(M)} &= (\bm s^{(M)}_{1:L-1},-1),  \quad \text{where $\bm s^{(M)}_{1:L-1} \in \{-1,1\}^{L-1}$.}\label{eq:s_mode_2}
\end{align}
and join them together in the inverse way of partition to construct the whole sign vector $\bm s$.
The displacement $B_{\bm s^{(j)},a}$ for each mode is defined the same as in Eq.~\eqref{eq:psi}, and the state on all qumodes in Eq.~\eqref{eq:psi_multi} is defined as
\begin{align}
    \ket{\bm B_{\bm s,a}}_{\bm m} \equiv e^{i\sum_{j=1}^M\chi_{\bm s^{(j)}}}\otimes_{j=1}^M \ket{(-1)^a \bm s^{(j)}\cdot \bm \beta}_{m_j}
\end{align}
for convenience.  If we define $v_{\bm s,a}(\bm \theta,\bm \phi, \bm \beta)\equiv e^{i\sum_{j=1}^M\chi_{\bm s^{(j)}}(\bm \beta^{(j)})} w_{\bm s,a}(\bm \theta, \bm \phi)$, we have
\begin{align}
    \ket{\psi(\bm x)}_{q,\bm m} = \sum_{a=0}^1 \sum_{\bm s}v_{\bm s,a}(\bm \theta, \bm \phi, \bm \beta)\ket{a}_q\bigotimes_{j=1}^M \ket{(-1)^a \bm s^{(j)}\cdot \bm \beta^{(j)}}_{m_j},
    \label{eq:state_ensemble_modes}
\end{align}
which generalizes Eq.~\eqref{eq:state_ensemble}.

To conclude, the correspondance between Eq.~\eqref{eq:psi} and~\eqref{eq:psi_multi} indicates a map from the $M$-mode state generated by $U_{L}$ to a single mode state generated by $U_{ML}$
\begin{align}
    \mathbb{M}: \psi_{L,M}(\bm \theta, \bm \phi, \{\bm \beta^{(j)}\}_{j=1}^M) \rightarrow \psi_{ML,1}(\bm \theta, \bm \phi, \bm \beta).
    \label{eq:map_modes}
\end{align}
The proof is easy to see from an explicit example of $L=1$ and $M=2$ and then generalize by mathematical induction, which is same as in Appendix~\ref{app:state_repre}. The output state of $M=2$ modes and $L=1$ circuit as
\small
\begin{align}
    \ket{\psi}
    &= \begin{pmatrix}
    e^{i\phi_2}\sin\frac{\theta_2}{2}D^{(2)}(-\beta^{(2)}) & \cos\frac{\theta_2}{2}D^{(2)}(-\beta^{(2)})\\
    \cos\frac{\theta_2}{2}D^{(2)}(\beta^{(2)}) & e^{i(\pi-\phi_2)}\sin\frac{\theta_2}{2}D^{(2)}(\beta^{(2)})
    \end{pmatrix}\begin{pmatrix}
        e^{i\phi_1}\sin\frac{\theta_1}{2}D^{(1)}(-\beta^{(1)}) & \cos\frac{\theta_1}{1}D^{(1)}(-\beta^{(2)})\\
    \cos\frac{\theta_1}{2} D^{(1)}(\beta^{(1)}) & e^{i(\pi-\phi_1)}\sin\frac{\theta_1}{2} D^{(1)}(\beta^{(1)})
    \end{pmatrix}
    \begin{pmatrix}
        \ket{0}_{m_1}\ket{0}_{m_2}\\
        0
    \end{pmatrix}\\
    &= \ket{0}_q\otimes\left(
        e^{i(\phi_1+\phi_2)}\sin\frac{\theta_1}{2}\sin\frac{\theta_2}{2}\ket{-\beta^{(1)}}_{m_1}\ket{-\beta^{(2)}}_{m_2}+\cos\frac{\theta_1}{2}\cos\frac{\theta_2}{2}\ket{\beta^{(1)}}_{m_1}\ket{-\beta^{(2)}}_{m_2}\right)\nonumber\\
    &\quad + \ket{1}_q\otimes\left(e^{i\phi_1}\sin\frac{\theta_1}{2}\cos\frac{\theta_2}{2}\ket{-\beta^{(1)}}_{m_1}\ket{\beta^{(2)}}_{m_2} + e^{i(\pi-\phi_2)}\cos\frac{\theta_1}{2}\sin\frac{\theta_2}{2}\ket{\beta^{(1)}}_{m_1}\ket{\beta^{(2)}}_{m_2}\right),
\end{align}
\normalsize
which indicates a clear mapping to the state $\ket{\psi_{L=2,M=1}(\bm \theta, \bm \phi, \bm \beta)}$ shown in Eq.~\eqref{eq:psi_d2}.

For energy regularization, we still have the displacement in every ECD gate Gaussian distributed, $\beta_\ell^{(j)}\sim \calN_{E/L}^{\rm C}$, and thus the ensemble-averaged energy per mode is also $\mathbb{E}\braket{m_j^\dagger m_j}=E$.

We still consider the gradient with respect to the $k$th qubit rotation angle $\theta_k$. For a general $M$-mode operator, it is easy to check that the ensemble average of gradient is still zero, and the variance can be written in the same form as in Eq.~\eqref{eq:variance}. Aligned with the study in Appendix~\ref{app:variance}, the target state of control qubit is also $\ket{0}_q$ and Eq.~\eqref{eq:Opm} becomes
\begin{align}
    \mathbb{E}\left[\braket{O}_{k^{(+1)}} \braket{O}_{k^{(\mu)}} \right] &= \sum_{\bm s,\bm s^\prime,\bm r,\bm r^\prime} \mathbb{E}\left[w_{\bm s,k^{(+1)}}w^*_{\bm s^\prime, k^{(+1)}}w_{\bm r,k^{(\mu)}}w^*_{\bm r^\prime, k^{(\mu)}}\right] \mathbb{E}\left[\braket{\psi|\bm B_{\bm s}}\braket{\bm B_{\bm s^\prime}|\psi}\braket{\psi|\bm B_{\bm r}}\braket{\bm B_{\bm r^\prime}|\psi}\right].
\end{align}
Via the mapping from $\psi_{E,L,M}$ to $\psi_{E,ML,1}$, the variance in the multi-mode scenario has the same form with Eq.~\eqref{eq:variance_single_mode} when one replaces the single-mode correlators with the multi-mode correlators. We discuss them in the following.
For $C_1$, each $\bm s^{(j)}\cdot \bm \beta^{(j)}\sim \calN^{\rm C}_{E}$ is in Gaussian distribution, and we have
\begin{align}
     C_1 = \mathbb{E}\left[|\braket{\psi|\bm B_{\bm s}}|^4\right] = \mathbb{E}_{\alpha\sim \calN_E^{\rm C}}\left[|\bra{\psi}\left(\otimes_{j=1}^M \ket{\alpha_j}\right)|^4\right].
     \label{eq:C1_modes}
\end{align}
Note that the ensemble average is performed over every $\{\alpha_j\}_{j=1}^M$ independently. 

For correlator $C_2$, we can still have for each $\bm s^{(j)}\cdot \bm \beta=\alpha_{z_j}+\alpha_{1-z_j}$ and $\bm r^{(j)} \cdot \bm \beta =\alpha_{z_j}-\alpha_{1-z_j}$, thus $C_2$ can be written as
\begin{align}
     C_2(\bm z) = \mathbb{E}\left[|\braket{\psi|\bm B_{\bm s}}|^2|\braket{\psi|\bm B_{\bm r}}|^2\right] = \mathbb{E}_{\alpha_y\sim \calN_{yE}^{\rm C}}\left[\prod_{a=0}^1 |\bra{\psi}\left(\otimes_{j=1}^M\ket{\alpha_{z_j}+(-1)^a\alpha_{1-z_j}}\right)|^2\right],
     \label{eq:C2_modes}
\end{align}
where we define $\bm z = (z_1, \dots, z_M)$.
However, unlike the one-mode case, in general it is possible that some of the elements in $\bm z$ is zero as long as at least one element of $\bm 1-\bm z$ is nonzero ($\bm 1=(1,\dots,1)$ is a length-$M$ vector), such that $\bm s\neq \bm r$.  Suppose the number of elements in $\bm z$ within $(0,1)$ is $N_{\bm z}$, then the probability of $N_{\bm z} = M$ is
\begin{align}
    p\left(N_{\bm z} = M\right) = \frac{(2^{ML-1})(2^L-2)^{M-1}(2^{L-1}-1)}{(2^{ML-1})(2^{ML-1}-1)} = \frac{(2^L-2)^M}{2^{ML}-2}.
    \label{eq:pr_multi}
\end{align}
The probability is exponentially approaching unity as $L$ increases, at a fixed value of $M$. We will discuss the consequence of the exponential scaling after we show the correlator's scaling of some typical states. 

The last correlator $C_3$ is
\begin{align}
    &\mathbb{E}\left[|\braket{\psi|\bm B_{\bm s}}||\braket{\bm B_{\bm s^\prime}|\psi}|\braket{\psi|\bm B_{\bm r}}|\braket{\bm B_{\bm r^\prime}|\psi}|\right]\nonumber\\
    &= \mathbb{E}\left[\left\lvert\bra{\psi}\left(\otimes_{j=1}^M\ket{\bm s^{(j)}\cdot \bm \beta^{(j)}}\right)\right\rvert \left\lvert\left(\otimes_{j=1}^M \bra{\bm s^{\prime (j)}\cdot \bm \beta^{(j)}}\right) \ket{\psi}\right\rvert \left\lvert\left(\otimes_{j=1}^M\ket{\bm r^{(j)}\cdot \bm \beta^{(j)}}\right)\right\rvert 
    \left\lvert\left(\otimes_{j=1}^M \bra{\bm r^{\prime (j)}\cdot \bm \beta^{(j)}}\right)\ket{\psi}\right\rvert\right]\nonumber\\
    &= \mathbb{E}_{\alpha_y\sim \calN_{yE}^{\rm C}}\left[\prod_{a=0}^1\left\lvert\bra{\psi}\left(\otimes_{j=1}^M\ket{\alpha_{z_j}+(-1)^a\alpha_{\tilde{z}_j}+(-1)^a\alpha_{1-z_j-\tilde{z}_j}}\right)\right\rvert \left\lvert\left(\otimes_{j=1}^M \bra{\alpha_{z_j}+(-1)^a\alpha_{\tilde{z}_j}-(-1)^a\alpha_{1-z_j-\tilde{z}_j}}\right) \ket{\psi}\right\rvert \right]
    \nonumber
    \\
    &\equiv C_3(\bm z, \tilde{\bm z}),\label{eq:C3_modes} 
\end{align}
where we utilize the same method as in Eq.~\eqref{eq:C3_def}. Similar to the discussion for $C_2$ above, it is also possible that some elements of $\bm z, \tilde{\bm z}$ are zeros, as long as there are at least one nonzero element in $\tilde{\bm z},\bm 1-\bm z-\tilde{\bm z}$ so that $\bm s,\bm s^\prime,\bm r,\bm r^\prime$ different from each other. Its scaling is also left to later discussion.

We then have the bound for variance of gradient as
\begin{align}
    {\rm Var}\left[\partial_{\theta_k}\calC(
    \bm x)\right] &= \frac{1}{2}\left[\frac{3^{ML-1}}{4^{ML}} C_1 + \left(\frac{1}{4}-\frac{2\cdot 3^{ML-1}-2^{ML-1}}{4^{ML}}\right)\min_{\bm z} C_2(\bm z)\right] + \mathcal{O}\left(C_2\right)\nonumber\\
    &\ge \frac{1}{2}\left[\frac{3^{ML-1}}{4^{ML}} C_1 + \left(\frac{1}{4}-\frac{3^{ML}}{4^{ML}}\right)\min_{\bm z} C_2(\bm z)\right] + \mathcal{O}\left(C_2\right),
    \label{eq:var_lb_modes}
\end{align}
and
\begin{align}
     {\rm Var}\left[\partial_{\theta_k}\calC(
    \bm x)\right] &= \frac{1}{2}\left[\frac{3^{ML-1}}{4^{ML}} C_1 + \left(\frac{1}{4}-\frac{2\cdot 3^{ML-1}-2^{ML-1}}{4^{ML}}\right)\max_{\{x^{(j)}\}} C_2(\bm z) \right] + \mathcal{O}\left(C_2\right)\nonumber\\
    &\le \frac{1}{2}\left[\frac{3^{ML-1}}{4^{ML}} C_1 + \left(\frac{1}{4}+\frac{2^{ML-1}}{4^{ML}}\right)\max_{\bm z} C_2(\bm z)\right] + \mathcal{O}\left(C_2\right),
    \label{eq:var_ub_modes}
\end{align}
where we have omitted the contribution of $C_3$ in the asymptotic region of $E\gg1$. Note that the coefficient ahead of each correlator is exactly the same as in Eqs.~\eqref{eq:var_lb} and~\eqref{eq:var_ub} by replacing $L\rightarrow ML$ suggested by the map in Eq.~\eqref{eq:map_modes}. We consider the asymptotic region where the circuit ensemble energy per mode is larger than the target state energy at any mode, $E\ge \max_j \braket{\psi|\left(m_j^\dagger m_j\right)|\psi}$.
In general, the above correlators are hard to evaluate, we obtain insights to their properties in two examples.

\subsection{Product states}
\label{app:var_prod}
A simple example to begin with is the product state with no correlation between any modes, $\ket{\psi}_{\bm m} = \otimes_{j=1}^M \ket{\psi_j}_{m_j}$, where $\ket{\psi_j}_{m_j}$ is the state of $j$th mode. We show that it is directly related to the one-mode correlators.

In this case, $C_1$ reduces to a product form,
\begin{align}
    C_1^{\rm Prod} &= \mathbb{E}_{\alpha_j\sim \calN_E^{\rm C}}\left[|\left(\otimes_{j=1}^M{}_{m_j}\langle\psi_j|\right)\left(\otimes_{j=1}^M \ket{\alpha_j}\right)|^4\right] 
    = 
    \prod_{j=1}^M \left(\mathbb{E}_{\alpha_j\sim \calN_E^{\rm C}}\left[|{}_{m_j}\langle\psi_j\ket{\alpha_j}|^4\right]\right).
    \label{eq:C1_prod_state}
\end{align}
Note that the ensemble average of the term inside parentheses is the correlator $C_1$ for one mode CV state in Eq.~\eqref{eq:C1_def}, which has been shown to have the scaling of $1/E$ in the asymptotic region. Therefore, we have $C_1\sim 1/E^M$ for an $M$-mode product state.

$C_2(\bm z)$ reduces to
\begin{align}
    C_2^{\rm Prod}(\bm z) &= \mathbb{E}_{\alpha_y\sim \calN_{yE}^{\rm C}}\left[\prod_{a=0}^1 |\left(\otimes_{j=1}^M\bra{\psi_j}\right)\left(\otimes_{j=1}^M\ket{\alpha_{z_j}+(-1)^a\alpha_{1-z_j}}\right)|^2\right]\nonumber\\
    &= \prod_{j=1}^M \left(\mathbb{E}_{\alpha_y\sim \calN_{yE}^{\rm C}}\left[\prod_{a=0}^1 |\braket{\psi_j|\alpha_{z_j}+(-1)^a\alpha_{1-z_j}}|^2\right]\right).
    \label{eq:C2_prod}
\end{align}
Note that for a specific $z_j$, if $z_j\in (0,1)$, the corresponding term inside the parenthesis of Eq.~\eqref{eq:C2_prod} is single-mode correlator $C_2$ whereas if $z_j=0,1$ the correspnonding one becomes $C_1$. Suppose the number of elements in $\bm z$ within the range of $(0,1)$ is $N_{\bm z}$, then the scaling of $C_2^{\rm Prod}$ is $\sim 1/E^{M+N_{\bm z}}$. According to Eq.~\eqref{eq:pr_multi}, the probability that $N_{\bm z} = M$ is $p=(2^L-2)^M/(2^{ML}-2)\sim 1-1/2^L$.

Similarly, $C_3$ becomes
\begin{align}
    &C_3^{\rm Prod}(\bm z,\tilde{\bm z}) 
    \nonumber
    \\
    &=\mathbb{E}_{\alpha_y\sim \calN_{yE}^{\rm C}}\left[\prod_{a=0}^1\left\lvert\bra{\psi}\left(\otimes_{j=1}^M\ket{\alpha_{z_j}+(-1)^a\alpha_{\tilde{z}_j}+(-1)^a\alpha_{1-z_j-\tilde{z}_j}}\right)\right\rvert \left\lvert\left(\otimes_{j=1}^M \bra{\alpha_{z_j}+(-1)^a\alpha_{\tilde{z}_j}-(-1)^a\alpha_{1-z_j-\tilde{z}_j}}\right) \ket{\psi}\right\rvert \right]\nonumber\\
    &= \prod_{j=1}^M \left(\mathbb{E}_{\alpha_y\sim \calN_{yE}^{\rm C}}\left[\prod_{a=0}^1\left\lvert\braket{\psi_j|\alpha_{z_j}+(-1)^a\alpha_{\tilde{z}_j}+(-1)^a\alpha_{1-z_j-\tilde{z}_j}}\right\rvert \left\lvert\braket{\alpha_{z_j}+(-1)^a\alpha_{\tilde{z}_j}-(-1)^a\alpha_{1-z_j-\tilde{z}_j}|\psi_j}\right\rvert \right]\right)\label{eq:C3_prod},
\end{align}
For convenience, we denote the term inside the parenthesis of Eq.~\eqref{eq:C3_prod} to be $C^{(j)}$. For a specific combination of $z_j,\tilde{z}_j,1-z_j-\tilde{z}_j$, if all of them are within $(0,1)$, $C^{(j)}$ is the single-mode correlator $C_3\sim 1/E^3$ in Eq.~\eqref{eq:C3_def}, on the other hand, if only two of them are in $(0,1)$ while the other is zero, $C^{(j)}$ becomes $C_2\sim 1/E^2$, furthermore if only one of the three is nonzero $C^{(j)}$ is just $C_1\sim 1/E$. Therefore, given different $\bm z, \tilde{\bm z}$, Eq.~\eqref{eq:C3_prod} can have the scaling of $1/E^\nu$ with integer $\nu\in [M,3M]$. As shown in Table.~\ref{tab:S3_table}, there are only three allowed assignments for each element of $\bm s, \bm s^\prime, \bm r, \bm r^\prime$, thus for $j$th mode, the probability of $C^{(j)}\sim 1/E^{\nu_j}$ where $\nu_j\in \{1,2,3\}$ is
\begin{align}
    p(C^{(j)}\sim 1/E^{\nu_j}) = \begin{cases}
    1/{3^{L-1}}, &\textit{if $\nu_j = 1$}\\
    2^L/{3^{L-1}}, &\textit{if $\nu_j = 2$}\\
    1-(1+2^L)/3^{L-1}, &\textit{otherwise}.
    \end{cases}
\end{align}
Suppose the number of $\nu_j$ in $\{\nu_j\}_{j=1}^M$ being $1, 2$ is $n_1, n_2$, then the probability of $\sum_{j=1}^M \nu_j\le \nu_c$ is
\begin{align}
    p\left(\sum_{j=1}^M \nu_j \le \nu_c\right) = \sum_{\substack{n_1,n_2\ge 0,\\ n_1+n_2\le M,\\ n_1+2n_2+3(M-n_1-n_2)\le\nu_c}}^M \frac{M!}{n_1!n_2!(M-n_1-n_2)!} \left(\frac{1}{3^{L-1}}\right)^{n_1}\left(\frac{2^L}{3^{L-1}}\right)^{n_2} \left(\frac{3^{L-1}-2^L-1}{3^{L-1}}\right)^{M-n_1-n_2}
    \label{eq:C3_nonhigher_pr}
\end{align}
Setting $\nu_c = 2M$ allows us to determine the probability that $C_3^{\rm Prod}$ is {\it not} of a higher order than $C_2^{\rm Prod}$. Analytical calculation is hard, and we show a numerical calculation result in Fig.~\ref{fig:C3_modes_pr}. The exponential suppression of probability for non-higher order indicates that $C_3^{\rm Prod}$ can be thought as higher order terms compared to $C_1^{\rm Prod}$ and $C_2^{\rm Prod}$, as we did in single mode case.

To the end of product state section, we discuss the criterion for shallow and deep circuits. Recall the probability that $C_2^{\rm Prod}\sim 1/E^{2M}$ is $p\sim 1-1/2^L$. For shallow circuits, the leading order is $1/E^M$ under the constraint $\frac{1/4E^{2M}}{(3/4)^{ML}/3E^M} \in \mathcal{O}(1)$ and $\frac{(1-p)/4E^{M+1}}{(3/4)^{ML}/3E^M}\in \mathcal{O}(1)$, resulting in the condition $L \in \mathcal{O}(\log E)$.
On the other hand, for deep circuits the variance of the gradient is in $1/E^{2M}$ under the condition $\frac{1/4E^{2M}}{(3/4)^{ML}/3E^M} \in \Omega(1)$ and $\frac{(1-p)/4E^{M+1}}{1/4E^{2M}}\in \mathcal{O}(1)$ leading to $L \in \Omega(\log E)$.
\begin{figure}
    \centering
    \includegraphics[width=0.4\textwidth]{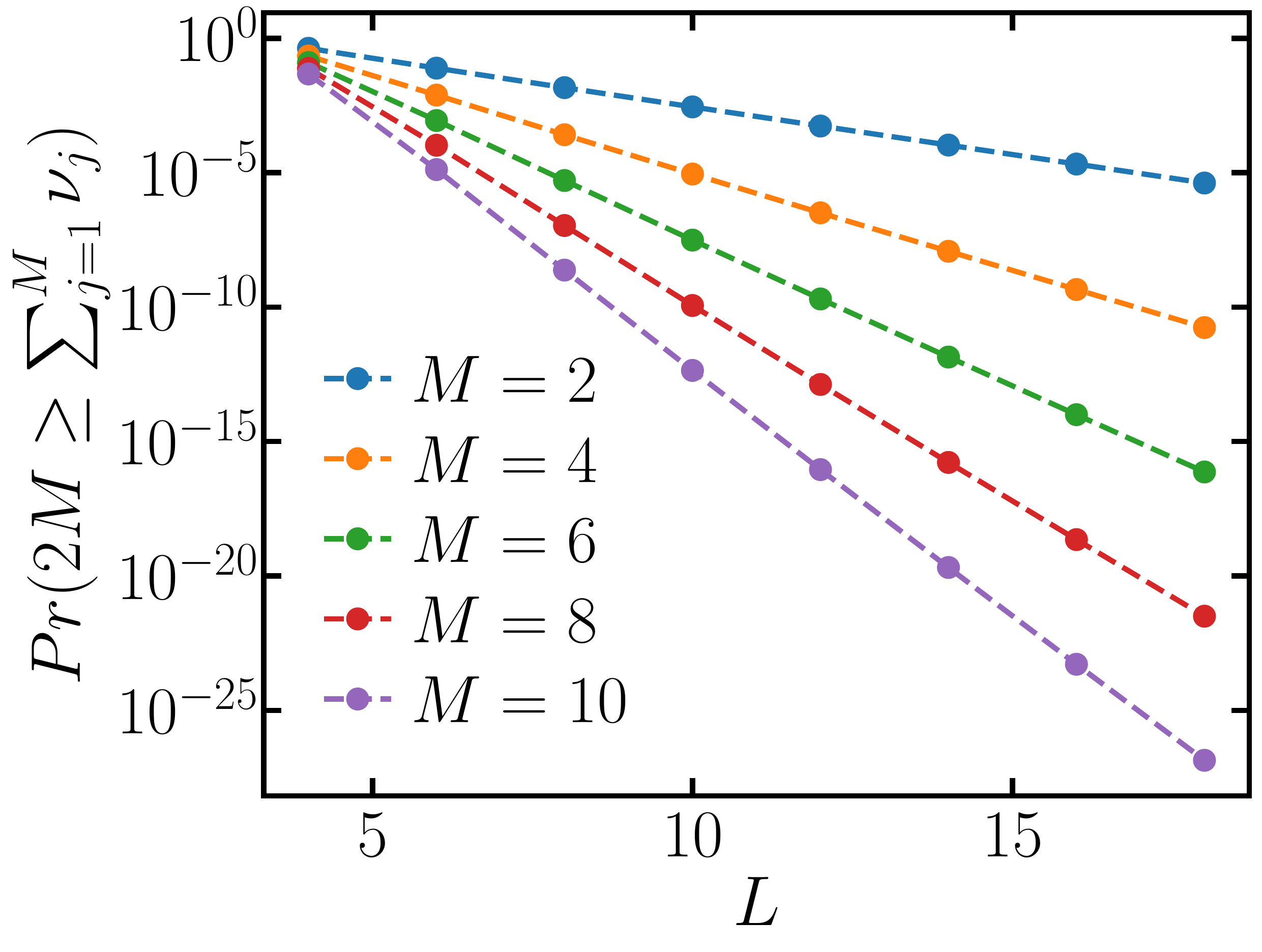}
    \caption{The probability of $\sum_{j=1}^M \nu_j\le 2M$ in Eq.~\eqref{eq:C3_nonhigher_pr} versus different circuit depth $L$ and modes number $M$.}
    \label{fig:C3_modes_pr}
\end{figure}


\subsection{Multi-mode Gaussian states}
\label{app:var_Gaussian_modes}
In this part, we present the results for the preparation of an arbitrary multi-mode Gaussian state, either entangled or not. We consider the target state to be an $M$-mode Gaussian state described by mean quadrature $\overline{\mathcal{X}}_{\bm m}$ and CM $\bV_{\bm m}$ for simplicity. As all correlators are functions of fidelity between target state and a product of coherent states, we begin with the phase space representation of a product of coherent state $\otimes_{j=1}^M \ket{\alpha_j}_{m_j}$, with quadrature mean and covariance matrix given as
\begin{align}
    \overline{\mathcal{X}} &= 2(\Re{\alpha_1},\Im{\alpha_1},\dots,\Re{\alpha_M},\Im{\alpha_M})^T \equiv 2\left(\xi_1, \xi_2, \dots, \xi_{2M-1},\xi_{2M}\right)^T = 2{\bm \xi},\\
    \bV_{\rm coh} &= \bI,
\end{align}
where we denote real and imaginary parts of all $\alpha_j$ in a unified vector $\bm \xi$ with each element in the distribution $\calN_{E/2}$. The CM $\bV_{\rm coh}$ is a $2M\times 2M$ identity matrix. Similarly, we can define $\overline{\mathcal{X}}_{\bm m}=2\bm \xi_{\bm m}$. Via applying the general fidelity formula in Eq.~\eqref{eq:Fidelity_gaussian}, we can find  that $C_1$ becomes
\begin{align}
    &C_1^{\rm Gauss} = \mathbb{E}_{\alpha\sim \calN_E^{\rm C}}\left[|\bra{\psi}\left(\otimes_{j=1}^M \ket{\alpha_j}\right)|^4\right] = \mathbb{E}_{\alpha\sim \calN_E^{\rm C}}\left[F(\ket{\psi}_{\bm m}, \otimes_{j=1}^M\ket{\alpha_j})^2\right]\nonumber\\
    &= \int d{\bm \xi} \frac{4^{M}}{\det(\bV_{\bm m}+\bI)}e^{-4({\bm \xi}-{\bm \xi_{\bm m}})^T(\bV_{\bm m}+\bI)^{-1}({\bm \xi}-{\bm \xi_{\bm m}})}\frac{1}{(\pi E)^M}e^{-\bm \xi^T \bm \xi/E}\nonumber\\
    &= \frac{4^{M}}{(\pi E)^M\det(\bV_{\bm m}+\bI)}\int d{\bm \xi}\exp\left[-{\bm \xi}^T\left(4(\bV_{\bm m}+\bI)^{-1}+\frac{\bI}{E}\right){\bm \xi}+2{\bm \xi_{\bm m}}^T4(\bV_{\bm m}+\bI)^{-1}{\bm \xi}-4{\bm \xi_{\bm m}}^T(\bV_{\bm m}+\bI)^{-1}{\bm \xi_{\bm m}}\right]\nonumber\\
    &= \frac{4^{M}\det(\bK)e^{-4{\bm \xi_{\bm m}}^T \bK{\bm \xi_{\bm m}}}}{(\pi E)^M}\int d{\bm \xi}\exp\left[-{\bm \xi}^T(4\bK+\bI/E){\bm \xi}+2{\bm \xi}_{\bm m}^{\prime T}{\bm \xi}\right]\\
    &= \frac{4^{M}e^{-4{\bm \xi_{\bm m}}^T \bK{\bm \xi_{\bm m}}}}{(\pi E)^M \det(\bK)^{-1}} \int d{\bm \xi}\exp\left[-({\bm \xi}-(4\bK+\bI/E)^{-1}{\bm \xi}_{\bm m}^\prime)^T (4\bK+\bI/E)({\bm \xi}-(4\bK+\bI/E)^{-1}{\bm \xi}_{\bm m}^\prime) + {\bm \xi}_{\bm m}^{\prime T} (4\bK+\bI/E)^{-1} {\bm \xi}_{\bm m}^\prime\right],\label{eq:C1_gauss_modes_line1}
\end{align}
where in the second to last line we denote $\bK=(\bV_{\bm m}+\bI)^{-1}$ and ${\bm \xi}_{\bm m}^\prime = 4\bK{\bm \xi}_{\bm m}$ for convenience. In the last line, we write the exponent to complete the square.
As $4\bK+\bI/E$ is a symmetric real matrix, we can diagonalize it via an orthogonal matrix $Q$ as $4\bK+\bI/E=Q^T \Lambda Q$, with $\Lambda = {\rm diag}(\lambda_1, \dots, \lambda_{2M})$ to be a diagonal matrix. We can do a variable transformation $\tilde{\bm \xi} = Q({\bm \xi}-\bK^{-1}{\bm \xi}_{\bm m}^\prime)$, then the integrand is reduced to
\begin{align}
\int d\tilde{\bm \xi} \left\lvert\frac{\partial{\bm \xi}}{\partial \tilde{\bm \xi}}\right\rvert\exp\left[-\tilde{\bm \xi}^T\Lambda \tilde{\bm \xi}\right] = \int d\tilde{\bm \xi} \exp\left[-\sum_{i=1}^{2M} \lambda_i \tilde{\xi}_i^2\right] = \frac{\pi^{M}}{\sqrt{\det(4\bK+\bI/E)}},
\end{align}
and thus the correlator $C_1^{\rm Gauss}$ becomes
\begin{align}
    C_1^{\rm Gauss}
    = \frac{4^{M}\det(\bK)\exp\left\{-4{\bm \xi}_{\bm m}^T \left[\bK-4\bK(4\bK+\bI/E)^{-1}\bK\right]{\bm \xi}_{\bm m}\right\}}{\sqrt{\det(4\bK+\bI/E)}E^M}.
    \label{eq:C1_gaussian_modes}
\end{align}

In the evaluation of $C_2$, there are both $\otimes_{j=1}^M \ket{\alpha_{z_j}\pm \alpha_{1-z_j}}$, which can also be characterized by $\overline{\mathcal{X}}_\pm = 2(\bm \xi_{\bm z}\pm {\bm \xi}_{\bm 1-\bm z})$ and $V_{\pm} = \bI$.
The distribution of $\bm \xi_{\bm z}$ is
$p(\bm \xi_{\bm z}) = \exp[-\bm \xi_{\bm z}^T \bS_{\bm z}\bm \xi_{\bm z}]\sqrt{\det \bS_{\bm z}}/\pi^M$ with $\bS_{\bm z} = \oplus_{j=1}^M \bI_2/(z_jE)$, where $\bI_2$ is a $2\times 2$ identity matrix.
The correlator $C_2$ becomes
\begin{align}
    &C_2^{\rm Gauss}(\bm z) = \mathbb{E}_{\alpha_y\sim \calN_{yE}^{\rm C}}\left[\prod_{h=0}^1 |\bra{\psi}\left(\otimes_{j=1}^M\ket{\alpha_{z_j}+(-1)^h\alpha_{1-z_j}}\right)|^2\right] = \mathbb{E}_{\alpha_y\sim \calN_{yE}^{\rm C}}\left[\prod_{h=0}^1 F(\ket{\psi}_{\bm m}\left(\otimes_{j=1}^M\ket{\alpha_{z_j}+(-1)^h\alpha_{1-z_j}}\right)\right]\nonumber\\
    &= \int d{\bm \xi_{\bm z}}d{\bm \xi_{\bm 1-\bm z}} \left(\frac{2^{M}}{\sqrt{\det(\bV_{\bm m}+\bI)}}e^{-2({\bm \xi_{\bm z}}+{\bm \xi_{\bm 1-\bm z}}-{\bm \xi}_{\bm m})^T(\bV_{\bm m}+\bI)^{-1}({\bm \xi_{\bm z}}+{\bm \xi_{\bm 1-\bm z}}-{\bm \xi}_{\bm m})}
    \right.
    \nonumber
    \\
    &\quad \quad \quad \quad \quad \quad \quad\times \frac{2^{M}}{\sqrt{\det(\bV_{\bm m}+\bI)}}e^{-2({\bm \xi_{\bm z}}-{\bm \xi_{\bm 1-\bm z}}-{\bm \xi}_{\bm m})^T(\bV_{\bm m}+\bI)^{-1}({\bm \xi_{\bm z}}-{\bm \xi_{\bm 1-\bm z}}-{\bm \xi}_{\bm m})}\nonumber\\
    &\quad \quad \quad \quad \quad \quad \quad \left.\times\frac{\sqrt{\det \bS_{\bm z}}}{\pi^M }e^{-{\bm \xi_{\bm z}}^T\bS_{\bm z}\bm \xi_{\bm z}}\frac{\sqrt{\det \bS_{\bm 1-\bm z}}}{\pi^M }e^{-{\bm \xi_{\bm 1-\bm z}}^T\bS_{\bm 1-\bm z}\bm \xi_{\bm 1-\bm z}}\right)\label{eq:C2_gaussian_modes_line0}\\
    &= \frac{4^M\det(\bK)\sqrt{\det \bS_{\bm z}\det \bS_{\bm 1-\bm z}}e^{-4{\bm \xi}_{\bm m}^T \bK{\bm \xi}_{\bm m}}}{\pi^{2M}}\int d{\bm \xi_{\bm z}}d{\bm \xi_{\bm 1-\bm z}}e^{-{\bm \xi}_{\bm z}^T(4\bK+\bS_{\bm z}){\bm \xi}_{\bm z}+2{\bm \xi}_{\bm m}^{\prime T}{\bm \xi}_{\bm z}-{\bm \xi}_{\bm 1-\bm z}^T(4\bK+\bS_{\bm 1-\bm z}){\bm \xi}_{\bm 1-\bm z}} \label{eq:C2_gaussian_modes_line1}\\
    &= \frac{4^M\det(\bK)\sqrt{\det \bS_{\bm z}\det \bS_{\bm 1-\bm z}}e^{-4{\bm \xi}_{\bm m}^T \bK{\bm \xi}_{\bm m}+{\bm \xi}_{\bm m}^{\prime T}(4\bK+\bS_{\bm z})^{-1}{\bm \xi}_{\bm m}^{\prime T}}}{\pi^{2M}\sqrt{\det \bS_{\bm z}\det \bS_{\bm 1-\bm z}}}\int d\tilde{\bm \xi_{\bm z}}d \tilde{\bm \xi}_{\bm 1-\bm z}\exp\left[-\tilde{\bm \xi}_{\bm z}^T\Lambda_{\bm z}\bm \tilde{\bm \xi}_{\bm z}\right] \exp\left[-\tilde{\bm \xi}_{\bm 1-\bm z}^T\Lambda_{\bm 1-\bm z}\bm \tilde{\bm \xi}_{\bm 1-\bm z}\right] \label{eq:C2_gaussian_modes_line2}\\
    &= \frac{4^M\det(\bK)\sqrt{\det \bS_{\bm z}\det \bS_{\bm 1-\bm z}}e^{-4{\bm \xi}_{\bm m}^T \bK{\bm \xi}_{\bm m}+{\bm \xi}_{\bm m}^{\prime T}(4\bK+\bS_{\bm z})^{-1}{\bm \xi}_{\bm m}^{\prime T}}}{\pi^{2M}\sqrt{\det \bS_{\bm z}\det \bS_{\bm 1-\bm z}}}\frac{\pi^M}{\sqrt{\det(4\bK+\bS_{\bm z})}}\frac{\pi^M}{\sqrt{\det(4\bK+\bS_{\bm 1-\bm z})}}\nonumber \\
    &= \frac{4^M \det(\bK)\exp\left\{-4{\bm \xi}_{\bm m}^T\left[\bK-4\bK(4\bK+\bS_{\bm z})^{-1}\bK\right]{\bm \xi}_{\bm m}\right\}}{\sqrt{\det(4\bK+\bS_{\bm z})\det (4\bK+\bS_{\bm 1-\bm z})}}\frac{1}{\left[\prod_{j=1}^M z_j(1-z_j)\right]E^{2M}}
    \label{eq:C2_gaussian_modes}
\end{align}
where in Eq.~\eqref{eq:C2_gaussian_modes_line1} we denote $\bK$ and $\bm \xi_{\bm m}^\prime$ in the same way as Eq.~\eqref{eq:C1_gauss_modes_line1}. In Eq.~\eqref{eq:C2_gaussian_modes_line2}, we apply the same diagonalization method we use in the derivation of Eq.~\eqref{eq:C1_gaussian_modes}, where $\Lambda_{\bm z} = Q_{\bm z}(4\bK+\bS_{\bm z})Q_{\bm z}^T$ and so as $\Lambda_{\bm 1-\bm z}$, 
$\tilde{\bm \xi}_{\bm z} = Q_{\bm z}(\bm \xi_{\bm z} - (4\bK+\bS_{\bm z})^{-1}\bm \xi_{\bm m}^\prime)$ and $\tilde{\bm \xi}_{\bm 1-\bm z} = Q_{\bm 1-\bm z}\bm \xi_{\bm 1-\bm z}$. Similar to the discussion of $C_2^{\rm Prod}$, if there are $M-N_{\bm z}$ elements in $\bm z$ that are either zero or one, then $C_2^{\rm Gauss}(\bm z)\sim 1/E^{M+N_{\bm z}}$ due to the fact that a Gaussian distribution with a zero variance in Eq.~\eqref{eq:C2_gaussian_modes_line0} becomes a Dirac-delta function.

The last correlator left is $C_3^{\rm Gauss}(\bm z_1, \bm z_2)$ 
\begin{align}
    &C_3^{\rm Gauss}(\bm z,\tilde{\bm z}) = \mathbb{E}_{\alpha_y\sim \calN_{yE}^{\rm C}}\left[\prod_{h=0}^1\left\lvert\bra{\psi}\left(\otimes_{j=1}^M\ket{\alpha_{z_j}+(-1)^h\alpha_{\tilde{z}_j}+(-1)^h\alpha_{1-z_j-\tilde{z}_j}}\right)\right\rvert \left\lvert\left(\otimes_{j=1}^M \bra{\alpha_{z_j}+(-1)^h\alpha_{\tilde{z}_j}-(-1)^h\alpha_{1-z_j-\tilde{z}_j}}\right) \ket{\psi}\right\rvert \right]\nonumber\\
    &= 4^M\det(\bK) \int d\bm \xi_{\bm z}d\bm \xi_{\tilde{\bm z}}d\bm \xi_{\bm 1-\bm z-\tilde{\bm z}} \left( e^{-(\bm \xi_{\bm z}+\bm \xi_{\tilde{\bm z}}+\bm \xi_{\bm 1-\bm z-\tilde{\bm z}}-\bm \xi_{\bm m})^T \bK(\bm \xi_{\bm z}+\bm \xi_{\tilde{\bm z}}+\bm \xi_{\bm 1-\bm z-\tilde{\bm z}}-\bm \xi_{\bm m})} e^{-(\bm \xi_{\bm z}+\bm \xi_{\tilde{\bm z}}-\bm \xi_{\bm 1-\bm z-\tilde{\bm z}}-\bm \xi_{\bm m})^T \bK(\bm \xi_{\bm z}+\bm \xi_{\tilde{\bm z}}-\bm \xi_{\bm 1-\bm z-\tilde{\bm z}}-\bm \xi_{\bm m})}\right.\nonumber\\
    &\quad \quad \quad \quad \quad \quad \quad  \quad \quad \quad \quad  \quad \times e^{-(\bm \xi_{\bm z}-\bm \xi_{\tilde{\bm z}}-\bm \xi_{\bm 1-\bm z-\tilde{\bm z}}-\bm \xi_{\bm m})^T \bK(\bm \xi_{\bm z}-\bm \xi_{\tilde{\bm z}}-\bm \xi_{\bm 1-\bm z-\tilde{\bm z}})} e^{-(\bm \xi_{\bm z}-\bm \xi_{\tilde{\bm z}}+\bm \xi_{\bm 1-\bm z-\tilde{\bm z}}-\bm \xi_{\bm m})^T \bK(\bm \xi_{\bm z}-\bm \xi_{\tilde{\bm z}}+\bm \xi_{\bm 1-\bm z-\tilde{\bm z}}-\bm \xi_{\bm m})}\nonumber\\
    &\quad \quad \quad \quad \quad \quad \quad \quad \quad  \quad \quad  \quad \left.\times\frac{\sqrt{\det \bS_{\bm z}}}{\pi^M}e^{-{\bm \xi_{\bm z}}^T\bS_{\bm z}\bm \xi_{\bm z}} \frac{\sqrt{\det \bS_{\tilde{\bm z}}}}{\pi^M}e^{-{\bm \xi_{\tilde{\bm z}}}^T\bS_{\tilde{\bm z}}\bm \xi_{\tilde{\bm z}}} \frac{\sqrt{\det \bS_{\bm 1-\bm z-\tilde{\bm z}}}}{\pi^M}e^{-{\bm \xi_{\bm 1-\bm z-\tilde{\bm z}}}^T\bS_{\bm 1-\bm z-\tilde{\bm z}}\bm \xi_{\bm 1-\bm z-\tilde{\bm z}}}\right)\nonumber\\
    &= \frac{4^M\sqrt{\det \bS_{\bm z}\det \bS_{\tilde{\bm z}}\det \bS_{\bm 1-\bm z-\tilde{\bm z}}}e^{-4\bm \xi_{\bm m}^T \bK \bm \xi_{\bm m}}}{\pi^{3M}\det(\bK)^{-1}}
    \int d\bm \xi_{\bm z}d\bm \xi_{\tilde{\bm z}}d\bm \xi_{\bm 1-\bm z-\tilde{\bm z}} e^{-\bm \xi_{\bm z}^T (4\bK+\bS_{\bm z})\bm \xi_{\bm z}+2{\bm \xi}_{\bm m}^{\prime T}\bm \xi_{\bm z}-\bm \xi_{\tilde{\bm z}}^T (4\bK+\bS_{\tilde{\bm z}})\bm \xi_{\tilde{\bm z}}-\bm \xi_{\bm 1-\bm z-\tilde{\bm z}}^T (4\bK+\bS_{\bm 1-\bm z-\tilde{\bm z}})\bm \xi_{\bm 1-\bm z-\tilde{\bm z}}}\nonumber\\
    &= \frac{4^M
   \sqrt{\det \bS_{\bm z}\det \bS_{\tilde{\bm z}}\det \bS_{\bm 1-\bm z-\tilde{\bm z}}}e^{-4\bm \xi_{\bm m}^T \bK \bm \xi_{\bm m}+{\bm \xi}_{\bm m}^{\prime T}(4\bK+\bS_{\bm z})^{-1}{\bm \xi}_{\bm m}^{\prime T}} }{\pi^{3M} \det(\bK)^{-1}}
    \int d \tilde{\bm \xi}_{\bm z} d \tilde{\bm \xi}_{\tilde{\bm z}}d \tilde{\bm \xi}_{\bm 1-\bm z-\tilde{\bm z}} e^{-\tilde{\bm \xi}_{\bm z}\Lambda_{\bm z}\tilde{\bm \xi}_{\bm z}}e^{-\tilde{\bm \xi}_{\tilde{\bm z}}\Lambda_{\tilde{\bm z}}\tilde{\bm \xi}_{\tilde{\bm z}}}e^{-\tilde{\bm \xi}_{\bm 1-\bm z-\tilde{\bm z}}\Lambda_{\bm 1-\bm z-\tilde{\bm z}}\tilde{\bm \xi}_{\bm 1-\bm z-\tilde{\bm z}}}\label{eq:C3_gaussian_modes_line1}\\
    &= \frac{4^M\sqrt{\det \bS_{\bm z}\det \bS_{\tilde{\bm z}}\det \bS_{\bm 1-\bm z-\tilde{\bm z}}}e^{-4\bm \xi_{\bm m}^T \bK \bm \xi_{\bm m}+{\bm \xi}_{\bm m}^{\prime T}(4\bK+\bS_{\bm z})^{-1}{\bm \xi}_{\bm m}^{\prime T}} }{\pi^{3M}\det(\bK)^{-1}} \frac{\pi^M}{\sqrt{\det(4\bK+\bS_{\bm z})}} \frac{\pi^M}{\sqrt{\det(4\bK+\bS_{\tilde{\bm z}})}} \frac{\pi^M}{\sqrt{\det(4\bK+\bS_{\bm 1-\bm z-\tilde{\bm z}})}}\nonumber\\
    &= \frac{4^M \det(\bK)\exp\left\{-4{\bm \xi}_{\bm m}^T\left[\bK-4\bK(4\bK+\bS_{\bm z})^{-1}\bK\right]{\bm \xi}_{\bm m}\right\} }{\sqrt{\det(4\bK+\bS_{\bm z})}\sqrt{\det(4\bK+\bS_{\tilde{\bm z}})}\sqrt{\det(4\bK+\bS_{\bm 1-\bm z-\tilde{\bm z}})}}\frac{1}{\left[\prod_{j=1}^M z_j\tilde{z}_j(1-z_j-\tilde{z}_j)\right]E^{3M}},
\end{align}
where in Eq.~\eqref{eq:C3_gaussian_modes_line1} we do the same diagonalization method as in Eq.~\eqref{eq:C2_gaussian_modes_line1}, with $\Lambda_{\tilde{\bm z}} = Q_{\tilde{\bm z}}(4\bK+\bS_{\tilde{\bm z}})Q_{\tilde{\bm z}}^T$ and $\tilde{\bm \xi}_{\tilde{\bm z}} = Q_{\tilde{\bm z}}\bm \xi_{\tilde{\bm \xi}}$, and so as $\Lambda_{\bm 1-\bm z-\tilde{\bm z}},\tilde{\bm \xi}_{\bm 1-\bm z-\tilde{\bm z}}$. Following the same analysis from Eq.~\eqref{eq:C3_nonhigher_pr}, the bulk contribution of $C_3^{\rm Gauss}(\bm z,\tilde{\bm z})$ behaves as $1/E^\nu$ with $\nu > 2M$, and thus can be omitted as they are higher order in $E$ when $E$ is large.

In the asymptotic limit of $E$, we can omit the contribution of $\bI/E$ in $C_1^{\rm Gauss}$ and $\bS_{\bm z},\bS_{\bm 1-\bm z}$ in $C_2^{\rm Gauss}$ compared to $4\bK$, and thus the determinants in those correlators reduce to constants independent of $E$, which directly leads to the scaling of $1/E^{M}$ and $1/E^{2M}$ separately. Therefore, the critical depth between shallow and deep circuits in preparation of a general $M$-mode Gaussian state is also determined by the logarithm of circuit ensemble energy. 

In the following, we present an explicit example of two-mode squeezed vacuum (TMSV) state. The CM has been introduced in Appendix~\ref{app:Gaussian_intro}. Through the calculation of Eqs.~\eqref{eq:C1_gaussian_modes} and~\eqref{eq:C2_gaussian_modes}, one can directly find the $C_1$ and $C_2$ for TMSV state as
\begin{align}
    C_1^{\rm TMSV} &= \frac{\sech^4(\zeta)}{G_1(E)},\label{eq:C1_tmsv}\\
    C_2^{\rm TMSV}(\bm z) &= \frac{\sech^4(\zeta)}{G_2(\bm zE)G_2[(\bm 1-\bm z)E]},\label{eq:C2_tmsv}
\end{align}
where we have defined
\begin{align}
    G_2(\bm z) = 1+2\|\bm z\|_1 + 4\sech^2(\zeta)\prod_j z_j. \label{eq:G2_def} 
\end{align}
Here $\bm x$ is a vector and $\bm 1$ above is an identity vector with same length as $\bm x$.  In the asymptotic region of large $E$, we also have $C_2^{\rm TMSV}(\bm z) \sim 1/64z_1z_2(1-z_1)(1-z_2)E^4$. Note that both correlators monotonically decrease with $E$.


To summarize, for shallow circuits $L\in \mathcal{O}(\log E)$, the variance of the gradient is ${\rm Var}\sim 1/E^M$; while for deep ones $L\in \Omega(\log E)$, it becomes $\sim 1/E^{2M}$ in the asymptotic limit of $E$.

\section{Results on coherent and single-mode squeezed vacuum states}
\label{app:one_mode_gaussian}
In this section, we provide more results on the state preparation of a single-mode Gaussian state. In Fig.~\ref{fig:var_gaussian_app}(a),(b), we show the variance of the gradient ${\rm Var}\left[\partial_{\theta_k}\calC\right]$ versus ensemble energy $E$ in the preparation of a coherent $\ket{\gamma}$ and an SMSV state $\ket{\zeta}_{\rm SMSV}$ with $E_t = |\gamma|^2=\sinh^2(r)=8$, which are presented in the same way as Fig.~\ref{fig:var_states}. For shallow circuits, the numerical results (orange dots) agree with Theorem~\ref{res:shallow}, where $C_1$ is chosen to be Eqs.~\eqref{eq:C1_coh},~\eqref{eq:C1_smsv} separately. For deep circuits, the bounds are evaluated with Eq.~\eqref{eq:var_deep_gaussian}, where $C_2$ is shown in Eqs.~\eqref{eq:C2_coh},~\eqref{eq:C2_smsv}. We see similar behavior to the main text examples and same conclusions also hold here.

In Fig.~\ref{fig:training_SMSV}, we show another training example in preparation of an SMSV state with different values of the initial ensemble energy. With the increase of ensemble energy, the training performance becomes worse, as we already see in Eq.~\eqref{eq:var_deep_gaussian}: $C_2^{\rm SMSV}(z)$ in Eq.~\eqref{eq:C2_smsv} monotonically decreases, a larger ensemble energy will prevent the efficient training of circuits.

\begin{figure}
    \centering
    \includegraphics[width=1\textwidth]{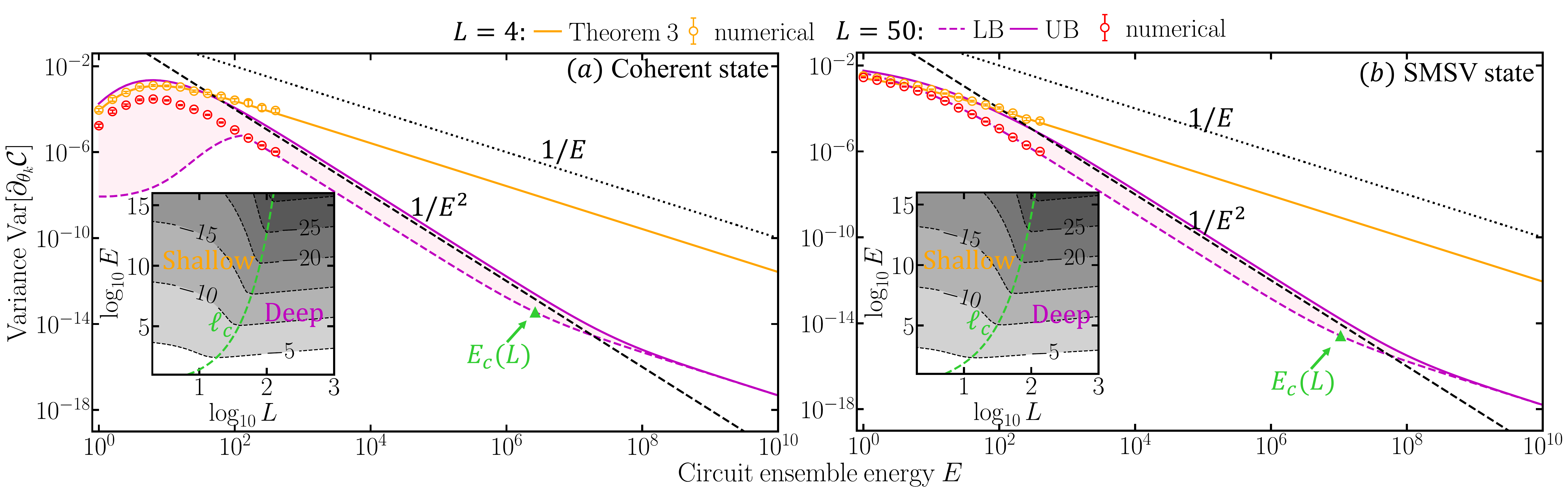}
    \caption{Variance of gradient ${\rm Var}[\partial_{\theta_{k}}\calC]$ at $k=L/2$ in preparation of (a) coherent states $\ket{\gamma}$ and SMSV state $\ket{\zeta}_{\rm SMSV}$ with $E_t=8$. All legends share the same meaning as Fig.~\ref{fig:var_states}.
    \label{fig:var_gaussian_app}}
    \centering
    \includegraphics[width=0.45\textwidth]{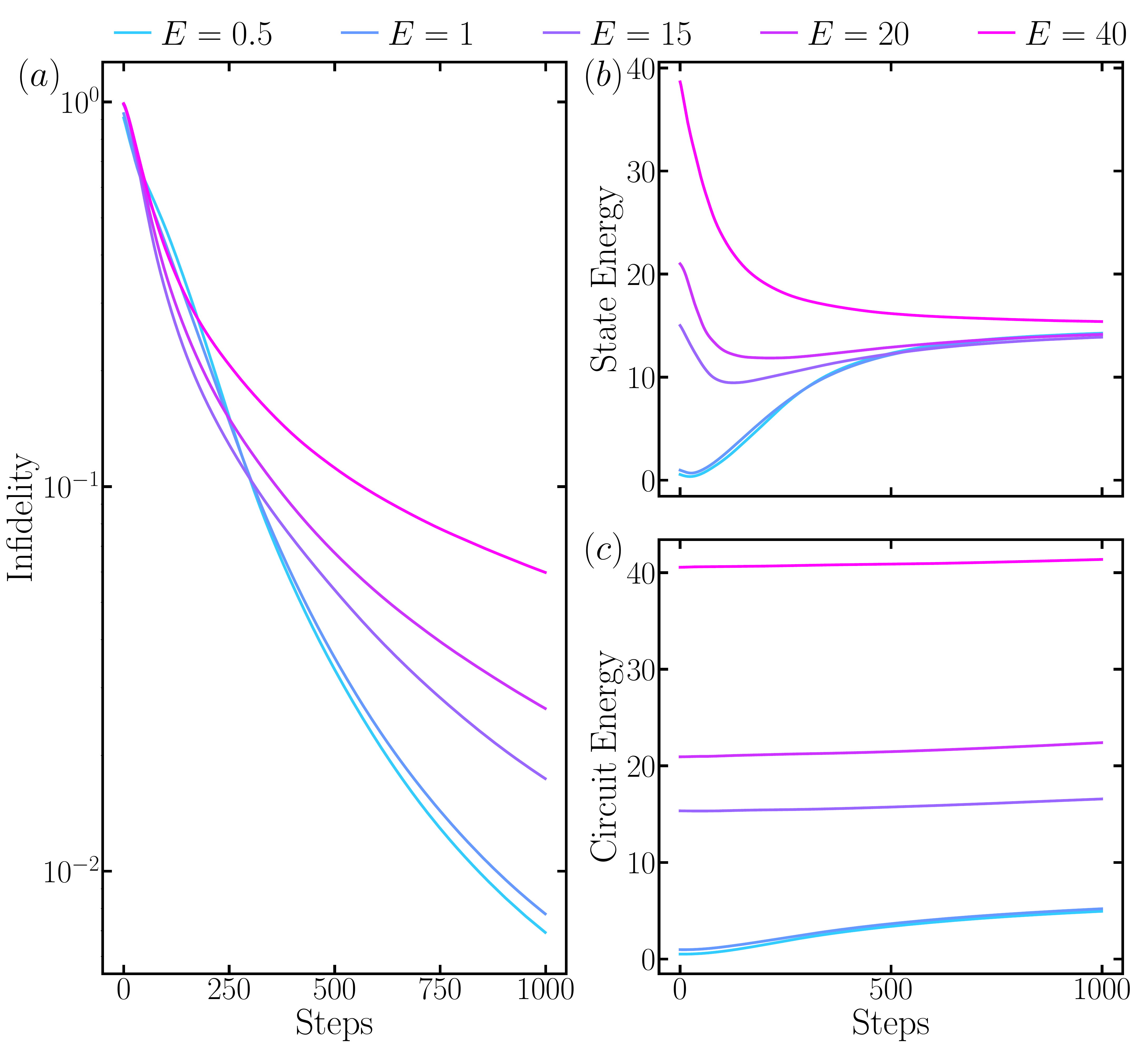}
    \caption{Training for the SMSV state $\ket{\psi} = \ket{\zeta}_{\rm SMSV}$ with $E_t=\sinh^2\zeta=15$, utilizing a $L=50$ CV VQC initialized with different ensemble energy $E$. We show (a) average infidelity of the output state with the target state, (b) average output state energy and (c) average circuit energy $\sum_{j=1}^L |\beta_j|^2$ versus training steps.
    \label{fig:training_SMSV}}
\end{figure}

\end{widetext}

\end{document}